\newtheorem{lem}{Lemma}
\newtheorem{thm}{Theorem}
\def\fa{\mathfrak{a}}
\def\fb{\mathfrak{b}}
\def\<{\leqslant}           
\def\>{\geqslant}           
\def\div{\mathrm{div}}         
\def\d{\partial}
\def\wt{\widetilde}
\def\Re{\mathrm{Re}}   
\def\Im{\mathrm{Im}}   
\def\cH{\mathcal{H}}   
\def\mR{\mathbb{R}}    
\def\mC{\mathbb{C}}    
\def\Tr{\mathrm{Tr}}       
\def\rT{\mathrm{T}}        
\def\bS{\mathbf{S}}
\def\bE{\mathbf{E}}    
\def\[[[{[\![\![}
\def\]]]{]\!]\!]}
\def\bra{\langle}
\def\ket{\rangle}
\def\Bra{\left\langle}
\def\Ket{\right\rangle}
\def\re{\mathrm{e}}        
\def\rd{\mathrm{d}}        
\def\cL{\mathcal{L}}
\def\br{\mathbf{r}}
\def\x{\times}
\def\ox{\otimes}
\def\fA{\mathfrak{A}}
\def\fB{\mathfrak{B}}
\def\fF{\mathfrak{F}}
\def\fG{\mathfrak{G}}
\def\sM{\mathsf{M}}
\def\sK{\mathsf{K}}
\def\cF{\mathcal{F}}
\def\cW{\mathcal{W}}
\def\mF{\mathbb{F}}
\def\cI{\mathcal{I}}
\def\eps{\epsilon}
\def\Ups{\Upsilon}
\def\ups{\upsilon}
\def\sn{|\!|\!|}
\begin{document}
\title{Invariant states of linear quantum stochastic systems under Weyl perturbations of the Hamiltonian and coupling operators\thanks{This work is supported by the Air Force Office of Scientific Research (AFOSR) under agreement number FA2386-16-1-4065.}}
\author{Igor G. Vladimirov
\and \quad Ian R. Petersen
\and \quad Matthew R. James
}                     
\institute{Research School of Engineering, College of Engineering and Computer Science, Australian National University, ACT 2601, Canberra, Australia. E-mail: igor.g.vladimirov@gmail.com, i.r.petersen@gmail.com, matthew.james@anu.edu.au}

\date{Received: date / Accepted: date}
%
\communicated{[name]}
\maketitle

\begin{abstract}
This paper is concerned with the sensitivity of invariant states in linear quantum stochastic systems with respect to nonlinear perturbations. The system variables are governed by a Markovian  Hudson-Parthasarathy  quantum stochastic differential equation (QSDE) driven by quantum Wiener processes of external bosonic fields in the vacuum state. The quadratic system Hamiltonian and the linear system-field coupling operators, corresponding to a nominal open quantum harmonic oscillator, are subject to perturbations represented in a Weyl quantization form.  Assuming that the nominal linear QSDE has a Hurwitz dynamics matrix and using the Wigner-Moyal phase-space framework, we carry out an infinitesimal perturbation analysis of
the quasi-characteristic function for the  invariant  quantum state of the nonlinear  perturbed system. The resulting correction of the invariant states in the spatial frequency domain  may find applications to their approximate computation, analysis of relaxation dynamics and non-Gaussian state generation in nonlinear quantum stochastic systems.
\end{abstract}

\keywords{
Quantum stochastic system, invariant quantum state, open quantum harmonic oscillator, Weyl quantization, perturbation analysis.
}

\medskip
\noindent
{\bf MSC 2010 codes.}
Primary:
81S22, 
81S25, 
81S30, 
81P16, 
81S05; 
secondary:
81Q15, 
35Q40, 
37M25. 

%
\section{Introduction}\label{sec:intro}

At the core of emerging quantum computation and quantum information technologies is the employment of hidden resources of physical systems at the atomic scales governed by the laws of quantum mechanics. In contrast to the  classical Newtonian mechanics, such systems are described by operator-valued quantities acting on an underlying Hilbert space which evolve according to unitary similarity transformations when the system is isolated from the environment. This unitary evolution is specified by the Hamiltonian which represents self-energy of the system.  When the system is coupled to the surroundings in the form of interaction with a macroscopic measuring apparatus, the internal state of the system is disturbed in a random fashion, depending on what is measured,  which makes noncommuting operators inaccessible to simultaneous measurement \cite{H_2001,M_1998,S_1994}. This specific stochastic nature of quantum systems is described in terms of quantum probability which replaces the scalar-valued classical probability measures with density operators (quantum states) acting on the same Hilbert space as the dynamic variables.

An important part of quantum computation  protocols \cite{NC_2000} is the initialization of quantum systems in certain classes of states. Similarly to the classical terminal state control problem \cite{PBGM_1962},  the quantum state preparation is a quantum control problem which can be solved by varying the parameters of the Hamiltonian  either in an open-loop fashion or by using feedback \cite{DP_2010}. Another approach \cite{Y_2012,Y_2009} to the state preparation consists in letting the quantum plant of interest interact with external bosonic fields (such as nonclassical electromagnetic radiation) in a specifically arranged fashion, resulting in a dissipative  system  with a required invariant state. The interaction of an open quantum system with such fields is modelled by using the Hudson-Parthasarathy quantum stochastic calculus \cite{HP_1984,P_1992}. This approach
represents the open quantum dynamics in the form of quantum stochastic differential equations (QSDEs) driven by noncommutative counterparts of the classical Wiener process \cite{KS_1991}, with the quantum Wiener processes acting on the symmetric Fock spaces \cite{PS_1972}. In addition to the Hamiltonian, the energetics of the system-field interaction is specified by the coupling operators. In combination with scattering matrices (which take into account photon exchange between the fields in multichannel setups), this approach is used by the theory of quantum feedback networks \cite{GJ_2009,JG_2010} in order to model interconnections of open quantum systems which interact with each other and the environment.

The quantum state generation problem is important both for finite-level systems (such as qubit registers) and continuous variables systems which find application in different (for example, quantum-optical) platforms of quantum computing \cite{NC_2000}.  An important class of such systems is provided by open quantum harmonic oscillators (OQHOs) whose dynamic variables satisfy canonical commutation relations (CCRs) similar to those of the quantum mechanical positions and momenta \cite{M_1998,S_1994}, and    are governed by linear QSDEs. The Hamiltonian and coupling operators of the OQHO are quadratic and linear functions of the system variables. The dynamics of such systems, when they are driven by bosonic fields in the vacuum state,  are quite similar (at least  at the level of the mean values and covariances) to those of classical Gaussian Markov diffusion processes (for example, the Ornstein-Uhlenbeck process \cite{KS_1991}). In particular, if the OQHO is initialized in a Gaussian quantum state \cite{KRP_2010}, then its Gaussian nature is preserved in time. Furthermore, if the dynamics matrix of the linear QSDE is Hurwitz, the state of the system converges weakly \cite{B_1968,CH_1971} to a unique invariant Gaussian state regardless of whether the initial state is Gaussian. This property of OQHOs enables them to be used for generating  a Gaussian state by allowing the system to evolve over a sufficiently long period of time. The dissipation mechanism, which is built in this procedure, secures stability of the invariant state being achieved in the long run \cite{PAMGUJ_2014}.

The implementation of OQHOs (for example, using quantum optical components such as cavities, beam splitters and phase shifters) can involve parametric uncertainties, leading to Gaussian invariant states with different covariances \cite{VPJ_2017a}. More complicated inaccuracies result in deviations from the idealized linear-quadratic energetics. The appearance of nonquadratic terms in the Hamiltonian and nonlinearities in the coupling operators makes the governing QSDEs nonlinear and leads to non-Gaussian invariant quantum states. The deviation from the nominal invariant Gaussian quantum state can be studied in the framework of a particular description of the uncertainties in the energy operators.
Such uncertainties can be modeled as functions of the system variables. However, there are many ways to extend usual functions of several real or complex  variables to the noncommutative quantum variables. One of such extensions is provided by the Weyl functional calculus \cite{F_1989} which employs unitary Weyl operators whose role in this context is similar to that of the spatial Fourier harmonics. In fact, the Weyl quantization replaces these harmonics  in the Fourier integral representations of classical functions with the Weyl operators. In application to the energy operators of the quantum stochastic system, the Weyl quantization leads to partial differential equations (PDEs) or integro-differential equations (IDEs) for the evolution of the quasi-characteristic function (QCF) \cite{CH_1971}. The latter is a quantum counterpart of the usual characteristic functions of classical probability distributions and encode information on the mixed moments of quantum system variables.

The QCF and its Fourier transform --- the Wigner quasi-probability density function  (QPDF) --- constitute the basis of the Wigner-Moyal phase-space approach \cite{Hi_2010,M_1949} which avoids the ``burden of the Hilbert space'' by representing the quantum state dynamics in the more conventional form of PDEs and IDEs involving only real or complex variables and functions thereof. Although the Moyal equations  \cite{M_1949} for the QCF and QPDF dynamics were obtained originally  for isolated quantum systems, there also are extensions of the phase-space approach  to different classes of open quantum systems \cite{GRS_2014,GRS_2015,KS_2008,MD_2015,V_2015c}.

In the present paper, following \cite{V_2015c}, we apply the phase-space approach to the infinitesimal perturbation analysis of the invariant state of an OQHO when its linear-quadratic energy operators are subject to variations represented in the Weyl quantization form.
The effect of such perturbations of the Hamiltonian
 on the second moments of the system variables (with the coupling operators remaining unperturbed) was studied in \cite{SVP_2014}.
  In the case of linear system-field coupling,  the Weyl quantization of the Hamiltonian was used for approximate computation of invariant states via operator  splitting \cite{Mar_1988,S_1968} in \cite{V_2015c} and dissipation relations for a weighted $L^2$-norm of the QCF, similar to the norm in the Bessel potential space \cite{S_2008}, and for the $\chi^2$-divergence (the second-order Renyi relative entropy \cite{R_1961}) of the actual QPDF from the nominal Gaussian PDF. Similarly to and in furtherance of the previous developments, we make advantage of the tractability of the Gaussian state dynamics for linear quantum stochastic systems and the ease of computing the moments of the Weyl operators over Gaussian quantum states. This is combined with the machinery of linear PDEs and related semigroups \cite{E_1998,V_1971} together with  the generalized functions \cite{V_2002} in the spatial frequency domain in order to develop a perturbative (linear response) theory for QCFs of quantum systems subject to the Weyl variations both  in the Hamiltonian and coupling operators. More precisely,  we compute the influence functions which relate the first-order corrections of the QCF and the first two moments for the perturbed invariant quantum state to the strength functions of the Weyl variations and discuss norm bounds for these perturbation operators.

 We also mention that the Weyl quantization of the energy operators was applied to the measurement-based quantum stochastic filtering in \cite{V_2016,V_2017} and was used in \cite{V_2015a,V_2015b} in regard to optimality conditions for coherent (measurement-free) quantum control and filtering problems for linear quantum plants. The coherent quantum control and filtering by interconnection \cite{JNP_2008,MJ_2012,MP_2009,NJP_2009,P_2014,PH_2015,P_2016,VP_2013a,VP_2013b,VP_2016,ZJ_2011b}  provide a promising modern paradigm which has the potential to outperform the traditional observation-actuation approaches of the classical control theory.

The paper is organised as follows.
Section~\ref{sec:sys} specifies the class of quantum stochastic systems being considered.
Section~\ref{sec:IDE} describes the Weyl quantization of energy operators and the integro-differential equation for the evolution of the quasi-characteristic function.
Section~\ref{sec:gauss} discusses the dynamics of quantum states and preservation of their Gaussian nature for open quantum harmonic oscillators.
Section~\ref{sec:Weylvar} specifies the Weyl perturbations of the Hamiltonian and coupling operators of such an oscillator.
Section~\ref{sec:statevar} computes the corresponding first-order correction terms of the QCF and QPDF for the resulting perturbed system.
Section~\ref{sec:gain} discusses upper bounds for the
norms of the linear operators relating
these corrections to the variations in the energy operators.
Section~\ref{sec:numer} gives numerical examples of the invariant state perturbation analysis, including the sensitivity bounds.
Section~\ref{sec:conc} provides a conclusion and outlines possible directions of research.

\section{Quantum stochastic dynamics}
\label{sec:sys}

We will be concerned with an open quantum system interacting with external fields as shown in Fig.~\ref{fig:sys}.
\begin{figure}[htbp]
\centering
\unitlength=0.8mm
\linethickness{0.2pt}
\begin{picture}(100.00,45.00)
    \put(40,20){\framebox(20,20)[cc]{system}}

    \put(20,36){\vector(1,0){20}}
    \put(20,24){\vector(1,0){20}}
    \put(30,31.5){\makebox(0,0)[cc]{{$\vdots$}}}
    \put(15,31.5){\makebox(0,0)[cc]{{$\vdots$}}}

    \put(15,36){\makebox(0,0)[cc]{$W_1$}}
    \put(15,24){\makebox(0,0)[cc]{$W_m$}}

    \put(60,36){\vector(1,0){20}}
    \put(60,24){\vector(1,0){20}}
    \put(70,31.5){\makebox(0,0)[cc]{{$\vdots$}}}
    \put(85,31.5){\makebox(0,0)[cc]{{$\vdots$}}}

    \put(85,36){\makebox(0,0)[cc]{$Y_1$}}
    \put(85,24){\makebox(0,0)[cc]{$Y_m$}}
\end{picture}\vskip-15mm
\caption{A schematic depiction of the open quantum stochastic system interacting with the input quantum Wiener processes $W_1, \ldots, W_m$ and producing the output fields  $Y_1, \ldots, Y_m$.}
\label{fig:sys}
\end{figure}
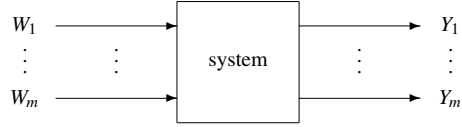
The system dynamics are described in terms of an even number $n$ of variables $X_1,\ldots, X_n$ assembled into a vector $X:= (X_k)_{1\< k\< n}$ (vectors are organized as columns). These   system variables are time-varying self-adjoint operators on a complex separable Hilbert space $\cH$. Their evolution  is governed by a Markovian Hudson-Parthasarathy quantum stochastic differential equation (QSDE) \cite{HP_1984,P_1992}
\begin{equation}
\label{dX}
    \rd X =
    \cL(X)\rd t  - i[X,h^{\rT}]\rd W
\end{equation}
whose structure is described below (we will often omit the time arguments for brevity).
Although it formally resembles classical SDEs \cite{KS_1991}, the QSDE (\ref{dX}) is driven by a vector $W:= (W_k)_{1\< k \< m}$ of an even number $m$ of self-adjoint operator-valued quantum Wiener processes $W_1, \ldots, W_m$ acting on a symmetric Fock space $\cF$. These  processes model the external bosonic fields \cite{H_1991,P_1992} and satisfy the quantum Ito relations
\begin{eqnarray}
\label{dWdW}
    \rd W\rd W^{\rT}
    & := &
    (\rd W_j\rd W_k)_{1\<j,k\< m}
    = \Omega \rd t,\\
\label{Omega}
        \Omega
    & := &
    I_m + iJ,\\
\label{J}
        J
        & := &
    \left[
    \begin{array}{cc}
    0 & 1\\
    -1 & 0
    \end{array}
    \right] \ox I_{m/2},
\end{eqnarray}
where $\ox$ is the Kronecker product of matrices, and $I_r$ is the identity matrix of order $r$.
In application to matrices of operators,  the transpose $(\cdot)^{\rT}$ acts as if their entries were scalars.
In contrast to the identity diffusion matrix of the standard Wiener process, $\Omega$ is  a complex positive semi-definite Hermitian matrix with an orthogonal antisymmetric imaginary part   $\Im \Omega=J$ (so that $J^2 =-I_m$). Therefore, the quantum Wiener processes  $W_1, \ldots, W_m$ do not commute with each other. Moreover, they have the following  two-point commutator matrix
\begin{eqnarray}
\nonumber
    [W(s), W(t)^{\rT}]
    & := &
    ([W_j(s), W_k(t)])_{1\<j,k\<m}\\
\label{WWst}
    & = &
    2i\min(s,t)J ,
    \qquad s,t\>0,
\end{eqnarray}
where $[\alpha, \beta]:= \alpha \beta - \beta \alpha$ is the commutator of linear operators $\alpha$ and $\beta$. In accordance with $W$ representing the $m$-channel input field, $h:= (h_k)_{1\< k\< m}$ in (\ref{dX}) is a vector of system-field coupling operators $h_1, \ldots, h_m$ which are also self-adjoint operators on the space $\cH$. Since the entries of the commutator matrix
$
    [X, h^{\rT}]
    :=    ([X_j,h_k])_{1\< j\< n,1\<k\< m}
$
are skew-Hermitian operators, the dispersion $(n\x m)$-matrix $-i [X, h^{\rT}]$ in (\ref{dX}) consists of self-adjoint operators on $\cH$.
The
drift vector
$    \cL(X)
$
of the QSDE (\ref{dX}) consists of $n$ operators and is obtained by the entrywise application of
the Gorini-Kossakowski-Sudar\-shan-Lindblad (GKSL) generator    \cite{GKS_1976,L_1976}, which acts on a single system operator $\xi$ as
\begin{equation}
\label{cL}
   \cL(\xi)
   := i[h_0,\xi]
     +
         \frac{1}{2}
    \big(
        [h^{\rT},\xi]\Omega h  + h^{\rT}\Omega [\xi,h]
    \big).
\end{equation}
and involves the system Hamiltonian $h_0$ (which is also a self-adjoint operator on $\cH$) in addition to the coupling operators $h_1, \ldots, h_m$. The energy operators $h_0, h_1, \ldots, h_m$ are appropriately extended functions of the system variables $X_1, \ldots, X_n$.
The superoperator $\cL$ in (\ref{dX})  is a quantum analogue of the infinitesimal generators of classical Markov diffusion processes \cite{KS_1991,S_2008}.
The specific structure of the drift vector and the dispersion matrix of the QSDE (\ref{dX}) comes from  the system-field interaction which drives a unitary operator $U(t)$ on the system-field tensor-product space $\cH:= \cH_0\ox \cF$ (with $\cH_0$ the initial space for the action of the system variables at time $t=0$) as
\begin{equation}\label{dU}
    \rd U(t) = -U(t) \Big(i(h_0(t)\rd t + h(t)^{\rT} \rd W(t)) + \frac{1}{2}h(t)^{\rT}\Omega h(t)\rd t\Big),
\end{equation}
where $U(0)=\cI_{\cH}$  is the identity operator on $\cH$. The unitary operator $U(t)$, which  is associated with the system-field interaction over the time interval from $0$ to $t$, is adapted in the sense that  it acts effectively on the subspace $\cH_0\ox \cF_t$, where $\{\cF_t:\, t\>0\}$ is the Fock space filtration.   The corresponding quantum stochastic flow at time $t$ involves a unitary similarity transformation, which acts  on operators $\zeta$ on $\cH$  as $ \zeta\mapsto U(t)^{\dagger} \zeta U(t)$ (and applies entrywise to vectors of operators) and evolves the system variables as
\begin{equation}
\label{uni}
    X(t)
    =
    U(t)^{\dagger} (X(0)\ox \cI_{\cF}) U(t),
\end{equation}
where $(\cdot)^{\dagger}$ is the operator adjoint.
The QSDE (\ref{dX}) can be  obtained from (\ref{uni}) by using (\ref{dU}) and the quantum Ito formula \cite{HP_1984,P_1992} in combination with (\ref{Omega}) and commutativity between the forward Ito increments $\rd W(t)$ and adapted processes taken at time $s\< t$. Adapted processes $\xi$, which are functions  of the system variables, satisfy QSDEs of the same form
\begin{equation}
\label{dxi}
    \rd \xi
    =
    \cL(\xi)\rd t - i[\xi,h^{\rT} ]\rd W.
\end{equation}
An alternative representation of the quantum Wiener process $W$ (which reflects photon absorption and emission and is particularly relevant to quantum optics)  is in terms of
the field annihilation  $\fa_1, \ldots, \fa_{m/2}$ and creation $\fa_1^{\dagger}, \ldots, \fa_{m/2}^{\dagger}$  processes \cite{HP_1984,P_1992}, assembled into vectors $\fa:= (\fa_k)_{1\<k\<m/2}$ and $\fa^{\#}:= (\fa_k^{\dagger})_{1\<k\<m/2}$:
\begin{eqnarray}
\nonumber
    W
    & := &
    2
    \left[
    \begin{array}{c}
        \Re \fa\\
        \Im \fa
    \end{array}
    \right]
    =
    \left[
    \begin{array}{c}
        \fa + \fa^{\#}\\
        i(\fa^{\#}-\fa)
    \end{array}
    \right]\\
\label{Wfa}
    & = &
    \left(
    \left[
    \begin{array}{cc}
        1 & 1\\
        -i & i
    \end{array}
    \right]
    \ox I_{m/2}
    \right)
    \breve{a}.
\end{eqnarray}
Here, $(\cdot)^{\#}$ denotes the entrywise operator adjoint (reducing to the complex conjugate in the case of complex matrices), and the real and imaginary parts
are extended to matrices $N$ with operator-valued entries as $\Re N = \frac{1}{2}(N+N^{\#})$ and $\Im N = \frac{1}{2i}(N-N^{\#})$ which consist of self-adjoint operators. Also,
\begin{equation}
\label{double}
    \breve{\fa}
    :=
    \left[
    \begin{array}{c}
        \fa \\
        \fa^{\#}
    \end{array}
    \right]
\end{equation}
is an auxiliary doubled-up vector.
In accordance with (\ref{dWdW}) and (\ref{Omega}),
the quantum Ito table of the annihilation and creation processes is represented in terms of (\ref{double}) as
\begin{eqnarray*}
    \rd
    \breve{a}
    \rd
    \breve{a}^{\dagger}
    & =
    \left[
    \begin{array}{cc}
        \rd\fa \rd \fa^{\dagger} & \rd \fa \rd \fa^{\rT}\\
        \rd\fa^{\#} \rd \fa^{\dagger} & \rd \fa^{\#} \rd \fa^{\rT}
    \end{array}
    \right]\\
    & =
    \left[
    \begin{array}{cc}
        1 & 0\\
        0 & 0
    \end{array}
    \right]
    \ox I_{m/2}
    \rd t,
\end{eqnarray*}
where $(\cdot)^{\dagger}:= ((\cdot)^{\#})^{\rT}$ applies to matrices of operators as the transpose of the entrywise adjoint and reduces to the complex conjugate transpose $(\cdot)^*:= (\overline{(\cdot)})^{\rT}$ for complex matrices. Also, the term
$$
    ih^{\rT}\rd W = L^{\dagger}\rd \fa - L^{\rT}\fa^{\#},
$$
which is part of the diffusion term in (\ref{dU}),
is related to a different vector $L:=(L_k)_{1\<k\<m/2}$ of (not necessarily self-adjoint) coupling operators $L_1, \ldots, L_{m/2}$, so that
\begin{equation}
\label{hL}
    h =
        -J
        \left[
        \begin{array}{c}
          \Re L \\
          \Im L
        \end{array}
        \right],
        \qquad
        \left[
        \begin{array}{c}
          \Re L \\
          \Im L
        \end{array}
        \right]
        =
        Jh.
\end{equation}
Here, use is also made of the property $J^2 = -I_m$ of the matrix $J$ in (\ref{Omega}) mentioned above.
The relations (\ref{Wfa}) and (\ref{hL}) allow the two alternative representations of the external fields and the system-field coupling operators to be used interchangeably.
As a result of the joint system-field evolution described by the unitary operator $U(t)$ from (\ref{dU}), the output field $Y:= (Y_k)_{1\<k\<m}$ is given by
\begin{equation}
\label{Y}
    Y(t)
    =
    2
    \left[
    \begin{array}{c}
        \Re \fb(t)\\
        \Im \fb(t)
    \end{array}
    \right]
    =
 U(t)^{\dagger}(\cI_{\cH_0}\ox W(t))U(t)
\end{equation}
and satisfies the QSDE
\begin{equation}
\label{dY}
  \rd Y = 2Jh\rd t + \rd W.
\end{equation}
Here, the vectors $\fb:= (\fb_k)_{1\<k\<m/2}$ and $\fb^{\#}:= (\fb_k^{\dagger})_{1\<k\<m/2}$ consist of the corresponding output annihilation and creation operators:
$$
    \fb   = U^{\dagger}(\cI_{\cH_0}\ox \fa)U,
    \qquad
    \fb^{\#}   = U^{\dagger}(\cI_{\cH_0}\ox \fa^{\#})U.
$$
In view of (\ref{Wfa}), (\ref{hL}) and (\ref{dY}), the processes $\fb$ and $\fb^{\#}$ satisfy
the QSDEs
$$
    \rd \fb = L\rd t + \rd \fa,
    \qquad
    \rd \fb^{\#} = L^{\#}\rd t + \rd \fa^{\#}
$$
which are related to each other by conjugation.
The QSDEs (\ref{dX}) and (\ref{dY}) describe a particular yet important scenario of quantum stochastic dynamics with the identity scattering matrix \cite{HP_1984,P_1992}.
Endowed with additional features (for example, more general scattering matrices describing photon exchange between the fields),
such QSDEs are employed in a unified formalism  \cite{GJ_2009,JG_2010} for modelling feedback networks of quantum systems which  interact with each other and the external fields.

The unitary evolution in (\ref{uni}) and (\ref{Y}) preserves the commutativity between the system and output field variables in the sense that
\begin{equation}
\label{XY}
        [X(t),Y(s)^{\rT}]
     =
    0,
    \qquad
    t\> s\> 0
\end{equation}
(future system variables commute with the  past output variables).
However, the output fields $Y_1, \ldots, Y_m$ do not commute with each other since
\begin{equation}
\label{YY}
    [\rd Y, \rd Y^{\rT}]
    =
    [\rd W, \rd W^{\rT}] = 2iJ\rd t.
\end{equation}
In view of (\ref{WWst}) and (\ref{dY}), the process $Y$ inherits from $W$ the two-point commutator matrix:
\begin{equation}
\label{YYcom}
    [Y(s), Y(t)^{\rT}] = 2i\min(s,t)J,
    \qquad
    s,t\>0.
\end{equation}
Despite the system-output commutativity (\ref{XY}), the noncommutativity of the output fields in (\ref{YYcom})  makes them inaccessible to simultaneous measurement. However, they can be fed as an input to other open quantum systems without  conversion of the quantum operator-valued processes to classical real-valued signals. The resulting  coherent  field-mediated    interconnection can be used in fully quantum communication channels 
in the form of cascades of (possibly distant) quantum systems. Such cascade connections are also employed for the generation of certain classes of Gaussian quantum states (see, for example, \cite{MWPY_2014,VPJ_2017a}).

\section{Evolution of the quasi-characteristic function}\label{sec:IDE}

Whereas the QSDE (\ref{dxi}) pertains to the Heisenberg picture  of quantum dynamics, its dual Schr\"{o}dinger picture version,  known as the master equation \cite{WM_2010},  describes the evolution of the reduced density operator. The latter is a quantum counterpart of the probability distribution for a classical Markov diffusion process,  which evolves according to the Fokker-Planck-Kolmogorov equation (FPKE) specified by the infinitesimal generator  of the process  \cite{KS_1991,S_2008}.  However, in contrast to the classical case, the master equation carries the ``burden of the Hilbert space'' in the sense that it involves operator-valued quantities rather than usual functions of real variables (such as a probability density function).

The Wigner-Moyal phase-space approach \cite{Hi_2010,M_1949} reconciles these two pictures by dealing with the reduced density operator indirectly through the moments of the system variables. These moments (and  their time evolution) can be recovered from the quasi-characteristic function \cite{CH_1971,H_2010} considered below. To this end, the system variables are assumed to  satisfy the Weyl
canonical commutation relations (CCRs)
\begin{equation}
\label{CCR}
    \cW_{u+v}(t) = \re^{i u^{\rT}\Theta v} \cW_u(t) \cW_v(t),
    \qquad
    u,v\in \mR^n,
\end{equation}
for any time $t\>0$.
These are represented in terms of the unitary Weyl operators \cite{F_1989}
\begin{equation}
\label{cW}
  \cW_u(t) := \re^{iu^{\rT} X(t)} = \cW_{-u}(t)^{\dagger}
\end{equation}
which inherit their time dependence from the system variables.
Here, $\Theta$ is a constant nonsingular real antisymmetric matrix of order $n$ which specifies the commutator matrix \begin{equation}
\label{XX}
    [X(t), X(t)^{\rT}]
    :=    ([X_j(t),X_k(t)])_{1\< j,k\< n}
     =
     2i \Theta
\end{equation}
as an infinitesimal form of the Weyl CCRs (\ref{CCR}). Note that (\ref{CCR}) is closely related to the Baker-Campbell-Hausdorff formula for the exponentials of more general operators which commute with their commutator (see, for example, \cite[pp. 128--129]{GZ_2004}).

\subsection*{Example 1}
Let the system variables consist of pairs of conjugate position and momentum operators \cite{M_1998} comprising the vectors $q:= (q_k)_{1\<k\<n/2}$ and $p:= -i\d_q = (-i\d_{q_k})_{1\<k\<n/2}$, with the reduced Planck constant being set to $\hslash = 1$, and $q_1, \ldots, q_{n/2}$ denoting the Cartesian coordinates in the position space $\mR^{n/2}$.  Then, without loss of generality, they can be assembled into the vector
\begin{equation}
\label{Xqp}
    X
    :=
    \left[
    \begin{array}{c}
        q\\
        p
    \end{array}
    \right]
\end{equation}
(where, as before,  the time arguments are omitted for brevity).
The corresponding CCR matrix $\Theta$ in (\ref{CCR}) and (\ref{XX}) takes the form
\begin{equation}
\label{Thetaqp}
    \Theta
    =
    \frac{1}{2}
    \left[
    \begin{array}{rc}
        0 & 1\\
        -1 & 0
    \end{array}
    \right]
    \ox I_{n/2}
\end{equation}
and, up to a factor of $\frac{1}{2}$,    coincides with the symplectic structure matrix in classical Hamiltonian systems \cite{A_1989}.
\hfill$\blacktriangle$

In what follows, $\bE \zeta = \Tr(\rho \zeta)$ denotes the quantum expectation over the system-field tensor-product density operator \begin{equation}
\label{rho}
    \rho := \varpi \ox \ups,
\end{equation}
where $\varpi$ is the initial quantum state of the system, and $\ups$ is the vacuum state \cite{P_1992} of the input fields. The averaging of the Weyl operator $\cW_u$, associated with the system variables by (\ref{cW}), leads to the complex-valued quasi-characteristic function (QCF)
\begin{equation}
\label{Phi}
    \Phi(t,u):= \bE \cW_u(t) = \overline{\Phi(t,-u)},
    \qquad
    t\>0,\
    u \in \mR^n.
\end{equation}
Here, the second equality  describes the Hermitian property of $\Phi(t,u)$ with respect to its spatial argument $u$ and follows from the second equality in (\ref{cW}). Also, the relation $\cW_0 = \cI_{\cH}$ implies that $\Phi(t,0)=1$. However, in view of the Weyl CCRs (\ref{CCR}), the Bochner-Khinchin positiveness criterion \cite{GS_2004}  for the characteristic functions of classical probability distributions is replaced with its quantum mechanical weighted version \cite{CH_1971,H_2010}: the complex Hermitian matrix $\big(\re^{iu_j^{\rT}\Theta u_k} \Phi(t,u_j-u_k)\big)_{1\<j,k\<\ell}$ is positive semi-definite for arbitrary points $u_1, \ldots, u_{\ell} \in \mR^n$ and any $\ell=1,2,3,\ldots$.
The spatial Fourier transform $\mF$ of (\ref{Phi}) yields a real-valued quasi-probability density function  (QPDF)
\begin{eqnarray}
\nonumber
    \mho(t,x)
    & := &
    \mF(\Phi(t,\cdot))(x)\\
\label{mho}
    & = &
    (2\pi)^{-n}\int_{\mR^n} \Phi(t,u)\re^{-iu^{\rT}x}\rd u,
    \qquad
    x \in \mR^n.
\end{eqnarray}
Although the function $\mho$ is not necessarily nonnegative everywhere \cite{H_1974} (since, as mentioned above, the QCF $\Phi$ itself does not have to be positive semi-definite), it satisfies the normalization condition $    \int_{\mR^n}\mho(t,x)\rd x = \Phi(t,0)=1$ and is a quantum analogue of classical PDFs. In particular, $\Phi$ and $\mho$  encode  the mixed moments $\bE(X_{k_1}\x \ldots \x X_{k_{\ell}})$ of the system variables  for any $1\< k_1,\ldots, k_{\ell}\< n$, provided $\Phi(t,u)$ is $\ell$ times continuously differentiable  with respect to $u\in \mR^n$.  For example, the first and second-order  moments of the system variables are expressed in terms of the QCF $\Phi$ and the QPDF $\mho$ similarly (modulo taking the real part) to the corresponding moments of classical random variables:
\begin{eqnarray}
\nonumber
        \bE X(t)
        & = &
    -i\d_u\Phi(t,0)\\
\label{EX}
    & = &
    \int_{\mR^n}
    \mho(t,x)x \rd x,\\
\nonumber
    \Re \bE(X(t)X(t)^{\rT})
    & = &
    -\d_u^2\Phi(t,0)\\
\label{EXX}
    & = &
    \int_{\mR^n}
    \mho(t,x) xx^{\rT} \rd x,
\end{eqnarray}
where $\d_u(\cdot)$ and $\d_u^2(\cdot)$ denote the gradient vector and the Hessian matrix with respect to $u\in \mR^n$.
Furthermore, the QCF $\Phi$ in (\ref{Phi})
can be used for evaluating the generalized moments
\begin{equation}
\label{EfX}
    \bE f(X(t)) = \int_{\mR^n}\sigma(u) \Phi(t,u)\rd u
\end{equation}
which involve nonlinear (but not necessarily polynomial) functions
\begin{equation}
\label{fX}
    f(X(t)):= \int_{\mR^n}\sigma(u) \cW_u(t)\rd u
\end{equation}
of the system variables  represented in the Weyl quantization form. Here, $\sigma: \mR^n\to \mC$ is a given function,  which specifies such a moment and can be a generalized function \cite{V_2002}. For example, (\ref{EX}) and (\ref{EXX}) can be obtained from (\ref{EfX}) and (\ref{fX}) by letting $\sigma$ be distributional partial derivatives  (of up to the second order) of the $n$-dimensional Dirac delta function $\delta$.

Since the QCF $\Phi$ in (\ref{Phi}) is the expectation of the Weyl operators, its time evolution satisfies the integro-differential equation (IDE)
\begin{equation}
\label{Phidot0}
    \d_t \Phi(t,u)
    =
    \frac{\bE \rd \cW_u(t) }{\rd t}
    =
    \bE \cL(\cW_u(t)).
\end{equation}
This relation is obtained by averaging the QSDE (\ref{dxi}),  which is applied to $\xi:=\cW_u$, with the martingale part $-i[\xi, h^{\rT}] \rd W$ not contributing to the right-hand side of (\ref{Phidot0}) (since the forward Ito increments of $W$ commute with adapted processes, as mentioned before,  and the input fields are in the vacuum state).

In general, the right-hand side of (\ref{Phidot0}) is not easily related to the QCF $\Phi$.
However, (\ref{Phidot0}) becomes an algebraically  closed equation for the QCF in the framework of the Weyl quantization model for the energy operators of the system.
More precisely, following \cite{V_2015c}, we assume that the system Hamiltonian $h_0$ and the system-field coupling operators $h_1, \ldots, h_m$ in (\ref{cL}) are obtained by the Weyl quantization  \cite{F_1989} of real-valued functions on $\mR^n$ with the Fourier transforms $H_k: \mR^n\to \mC$ as
\begin{eqnarray}
\nonumber
    h_k
    & := &
    \int_{\mR^n}
    H_k(u)\cW_u
    \rd u \\
\label{hk}
    & = &
    \int_{\mR^n}
    |H_k(u)|
    \cos
    (
        u^{\rT}X + \arg H_k(u)
    ) \rd u
    ,
    \qquad
    k = 0,1,\ldots, m,
\end{eqnarray}
where $\cW_u$ is the Weyl operator (\ref{cW}), and $\arg(\cdot)$ denotes the argument of a complex number. The second equality in (\ref{hk}) employs the Hermitian property of the functions $H_k$
 (that is, $H_k(-u) = \overline{H_k(u)}$ for all $u\in \mR^n$), which, in combination with the second equality in (\ref{cW}),  ensures that the operators $h_k$ are self-adjoint.
Accordingly, the vector $h$ of the system-field coupling operators $h_1, \ldots, h_m$ is related to the vector-valued map $H:= (H_k)_{1\< k\<m}: \mR^n \to \mC^m$ by
\begin{equation}
\label{h}
    h
    =
    \int_{\mR^n}
    H(u)
    \cW_u
    \rd u.
\end{equation}
If the function $H_k$ is absolutely integrable, then (\ref{hk}) can be understood as a Bochner integral  \cite{Y_1980} which yields a bounded operator $h_k$ whose operator norm admits an upper bound
$$
\|h_k\|\< \int_{\mR^n} |H_k(u)|\rd u<+\infty
$$
due to unitarity of the Weyl operators $\cW_u$ for all $u \in \mR^n$.
 Polynomial functions of the system variables can be obtained by letting the Fourier transforms $H_k$ in (\ref{hk}) be  the derivatives of the Dirac delta function (as mentioned before in regard to (\ref{EX})--(\ref{fX})).

The following theorem\footnote{whose proof, given here for completeness, is slightly different from the original one in \cite{V_2015c}} uses an integral operator $\fA$ which maps a function $\varphi: \mR^n \to \mC$ to the function $\fA(\varphi): \mR^n\to \mC$ given by
\begin{eqnarray}
\label{fA}
    \fA(\varphi)(u)
    & :=
    \int_{\mR^n}
    V(u,v)\varphi(u+v)
    \rd v,
    \qquad
    u \in \mR^n.
\end{eqnarray}
Here,
the kernel function $V: \mR^n \x \mR^n\to \mC$  is computed as
\begin{eqnarray}
\nonumber
  V(u,v)
  &:= &
    -2        \sin(u^{\rT}\Theta v)H_0(v)\\
\label{V}
     & &-2 \int_{\mR^n}
            \sin(u^{\rT}\Theta s)H(s)^{\rT}
            K(u+s,s-v)
        H(v-s) \rd s,
\end{eqnarray}
where $H_0$ and $H$  are the Fourier transforms from (\ref{hk}) and (\ref{h}), and the function
$K: \mR^n\x \mR^n\to \mR^{m\x m}$ is expressed as
\begin{eqnarray}
\nonumber
    K(u,v)
     & := &
         \Im
    \big(
    \re^{iu^{\rT}\Theta v}\Omega
    \big)\\
\label{K}
    & = &
    \sin(u^{\rT}\Theta v) I_m + \cos(u^{\rT}\Theta v) J
\end{eqnarray}
in terms of the CCR matrix $\Theta$ in (\ref{CCR}) and the quantum Ito matrix $\Omega$ from (\ref{Omega}). Note that (\ref{V}) depends linearly on $H_0$ and quadratically on $H$ in accordance with the way the Hamiltonian $h_0$ and the coupling operators $h_1, \ldots, h_m$ enter the GKSL generator $\cL$ in (\ref{cL}). Also, the integral in (\ref{V}) is organised as a weighted convolution of the function $H$ with itself.

\begin{thm}
\label{th:Phidot}\cite[Theorem~1]{V_2015c}
Suppose the energy operators $h_0,h_1, \ldots, h_m$ of the quantum stochastic system (\ref{dX}) have the Weyl quantization form (\ref{hk}). Then the IDE (\ref{Phidot0}) takes the form
\begin{equation}
\label{Phidot}
    \d_t \Phi(t,u)
    =
    \fA(\Phi(t,\cdot))(u),
\end{equation}
where the integral operator  $\fA$, described by  (\ref{fA})--(\ref{K}), acts over the spatial argument of the QCF $\Phi$.
\hfill$\square$
\end{thm}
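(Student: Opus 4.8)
The plan is to evaluate the right-hand side $\bE\cL(\cW_u)$ of (\ref{Phidot0}) directly, by substituting the Weyl-quantized energy operators (\ref{hk})--(\ref{h}) and collapsing every product of Weyl operators to a single one. The two algebraic facts I would record first, both immediate consequences of the CCRs (\ref{CCR}) and the antisymmetry of $\Theta$, are the product rule $\cW_u\cW_v = \re^{-iu^{\rT}\Theta v}\cW_{u+v}$ and the commutator $[\cW_u,\cW_v] = -2i\sin(u^{\rT}\Theta v)\cW_{u+v}$. With these, every integrand becomes a scalar or matrix factor times one Weyl operator $\cW_{u+w}$, whose expectation is $\Phi(t,u+w)$; the remaining work is to track the phases and to change the integration variable so that the shift of $\Phi$ reads $u+v$, as required by the convolution structure of $\fA$ in (\ref{fA}).

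For the Hamiltonian term $i[h_0,\cW_u]$ I would insert $h_0 = \int_{\mR^n}H_0(v)\cW_v\rd v$ and apply the commutator rule, obtaining $\int_{\mR^n}\big(-2\sin(u^{\rT}\Theta v)H_0(v)\big)\cW_{u+v}\rd v$; taking expectations reproduces the first, linear-in-$H_0$, term of the kernel $V$ in (\ref{V}) with no further effort. For the dissipative term $\frac12\big([h^{\rT},\cW_u]\Omega h + h^{\rT}\Omega[\cW_u,h]\big)$ I would expand $h$ and $h^{\rT}$ as integrals over two frequency variables $s$ (from $h^{\rT}$) and $v$ (from $h$), apply the commutator rule to the inner commutators and the product rule to the remaining adjacent pair of Weyl operators, and then substitute $v = w-s$ so that the Weyl operator becomes $\cW_{u+w}$. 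The double integral then becomes a weighted self-convolution of $H$ in which $s$ is the inner variable of (\ref{V}) and $w$ (renamed $v$ in (\ref{fA})) indexes the shift $u+w$ of $\Phi$; the goal is to match the accompanying phase-times-$\Omega$ factor with $-2\sin(u^{\rT}\Theta s)K(u+s,s-w)$.

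The hard part is exactly this last identification, because the scalar phase produced by the product rule multiplies the whole matrix $\Omega = I_m + iJ$ and does not equal $-2\sin(u^{\rT}\Theta s)K(u+s,s-w)$ as a matrix; the two differ by complex terms of the wrong parity, so the real kernel $K = \Im(\re^{i(\cdot)^{\rT}\Theta(\cdot)}\Omega)$ of (\ref{K}) cannot emerge pointwise. The resolution I would use is that the inner integral over $s$ is invariant under the reflection $s\mapsto w-s$, under which $s^{\rT}\Theta w$ changes sign while $u^{\rT}\Theta w$ is fixed, and under which the sandwiching $H(s)^{\rT}(\cdot)H(w-s)$ replaces the enclosed matrix by its transpose, thereby sending $\Omega$ to $\Omega^{\rT} = I_m - iJ$ since $J^{\rT} = -J$. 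Denoting by $D(s)$ the difference between the computed integrand-matrix and $-2\sin(u^{\rT}\Theta s)K(u+s,s-w)$, a short computation with product-to-sum identities verifies that $D(s) + D(w-s)^{\rT} = 0$, so that $\int_{\mR^n}H(s)^{\rT}D(s)H(w-s)\rd s = 0$ and the dissipative part collapses exactly to the second, quadratic-in-$H$, term of $V$. Adding the Hamiltonian and dissipative contributions then gives $\d_t\Phi(t,u) = \fA(\Phi(t,\cdot))(u)$ with $\fA$ as in (\ref{fA})--(\ref{K}), as claimed.
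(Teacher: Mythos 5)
Your proposal is correct and follows essentially the same route as the paper: expand the GKSL generator with the Weyl-quantized energy operators, collapse products of Weyl operators via the CCRs, and recover the real kernel $K=\Im(\re^{i(\cdot)^{\rT}\Theta(\cdot)}\Omega)$ by exploiting the transposition symmetry of the scalar sandwich $H(s)^{\rT}(\cdot)H(w-s)$ (your reflection-plus-transpose identity $D(s)+D(w-s)^{\rT}=0$ does hold and is the same averaging of $\Omega$ with $\Omega^{\rT}=\overline{\Omega}$ that the paper performs when it combines $V_1$ and $V_2$). The only difference is cosmetic: the paper keeps the commutator's frequency variable as the inner integration variable in both dissipator terms, so a pure transposition suffices without the reflection $s\mapsto w-s$.
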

\begin{proof}
By substituting the Weyl quantization of the system Hamiltonian $h_0$ from (\ref{hk}) into the GKSL generator $\cL$ in (\ref{cL}), it follows that
\begin{eqnarray}
\nonumber
    i[h_0,\cW_u]
    & = &
    i
    \int_{\mR^n}
    H_0(v)[\cW_v, \cW_u]
    \rd v\\
\label{cL0}
    & = &
    -2
    \int_{\mR^n}
    \sin(u^{\rT}\Theta v)
    H_0(v)
    \cW_{u+v}
    \rd v.
\end{eqnarray}
Here, use is made of the Lie algebraic property
\begin{eqnarray}
\nonumber
[\cW_u, \cW_v]
    & = &
    (\re^{-iu^{\rT}\Theta v}-\re^{iu^{\rT}\Theta v})
    \cW_{u+v}\\
\label{CCRcomm}
    & = &
    -2i
    \sin(u^{\rT}\Theta v)\cW_{u+v},
    \qquad
    u,v\in \mR^n,
\end{eqnarray}
of the Weyl operators, which follows from the CCRs (\ref{CCR}).
In combination with (\ref{CCRcomm}),  the Weyl quantization of the system-field coupling operators in (\ref{h}) leads to
\begin{eqnarray}
\nonumber
    [h^{\rT},\cW_u]\Omega h
    & = &
    \int_{\mR^n}
    H(s)^{\rT}
    [\cW_s, \cW_u]
    \rd s
    \Omega h\\
\nonumber
    & = &
    2i
    \int_{\mR^n}
    \sin(u^{\rT}\Theta s)
    H(s)^{\rT}
    \cW_{u+s}
    \rd s
    \Omega
    \int_{\mR^n}
    H(r)\cW_r\rd r\\
\nonumber
    & = &
    2i
    \int_{\mR^n\x \mR^n}
    \sin(u^{\rT}\Theta s)
    H(s)^{\rT}\Omega H(r)
    \cW_{u+s}\cW_r
    \rd s
    \rd r\\
\nonumber
    & = &
    2i
    \int_{\mR^n\x \mR^n}
    \sin(u^{\rT}\Theta s)
    \re^{-i(u+s)^{\rT}\Theta r}
    H(s)^{\rT}\Omega H(r)
    \cW_{u+s+r}
    \rd s
    \rd r\\
\label{cL1}
    & = &
    \int_{\mR^n}
    V_1(u,v)
    \cW_{u+v}
    \rd v,
\end{eqnarray}
where (\ref{CCR}) is used again. Here, the kernel function $V_1: \mR^n\x \mR^n\to \mC$ is organised as a weighted convolution of the function $H$ with itself:
\begin{equation}
\label{V1}
  V_1(u,v)
  =
    2i
    \int_{\mR^n}
    \sin(u^{\rT}\Theta s)
    \re^{i(u+s)^{\rT}\Theta (s-v)}
    H(s)^{\rT}\Omega H(v-s)
    \rd s.
\end{equation}
By a similar reasoning,
\begin{eqnarray}
\nonumber
    h^{\rT}\Omega [\cW_u,h]
    & = &
    h^{\rT}\Omega
    \int_{\mR^n}
    H(s)
    [\cW_u, \cW_s]
    \rd s\\
\nonumber
    & = &
    -2i
    \int_{\mR^n}
    H(r)^{\rT}
    \cW_r
    \rd r
    \Omega
    \int_{\mR^n}
    \sin(u^{\rT}\Theta s)
    H(s)
    \cW_{u+s}
    \rd s\\
\nonumber
    & = &
    -2i
    \int_{\mR^n\x \mR^n}
    \sin(u^{\rT}\Theta s)
    H(r)^{\rT}\Omega H(s)
    \cW_r\cW_{u+s}
    \rd r
    \rd s\\
\nonumber
    & = &
    -2i
    \int_{\mR^n\x \mR^n}
    \sin(u^{\rT}\Theta s)
    \re^{-ir^{\rT}\Theta (u+s)}
    H(r)^{\rT}\Omega H(s)
    \cW_{u+s+r}
    \rd r
    \rd s\\
\label{cL2}
    & = &
    \int_{\mR^n}
    V_2(u,v)
    \cW_{u+v}
    \rd v.
\end{eqnarray}
Here, in view of the antisymmetry of the CCR matrix $\Theta$ and the Hermitian property of the quantum Ito matrix $\Omega$,
the kernel function $V_2: \mR^n\x \mR^n\to \mC$  is computed as
\begin{eqnarray}
\nonumber
  V_2(u,v)
  & = &
    -2i
    \int_{\mR^n}
    \sin(u^{\rT}\Theta s)
    \re^{i(s-v)^{\rT}\Theta (u+s)}
    H(v-s)^{\rT}\Omega H(s)
    \rd s\\
\nonumber
  & = &
    -2i
    \int_{\mR^n}
    \sin(u^{\rT}\Theta s)
    \re^{-i(u+s)^{\rT}\Theta (s-v)}
    H(s)^{\rT}\Omega^{\rT} H(v-s)
    \rd s\\
\label{V2}
  & = &
    -2i
    \int_{\mR^n}
    \sin(u^{\rT}\Theta s)
    \re^{-i(u+s)^{\rT}\Theta (s-v)}
    H(s)^{\rT}\overline{\Omega} H(v-s)
    \rd s.
\end{eqnarray}
From (\ref{V1}) and (\ref{V2}), it follows that
\begin{eqnarray}
\nonumber
    \frac{1}{2}
    (V_1(u,v) + V_2(u,v))
    & = &
    i
    \int_{\mR^n}
    \sin(u^{\rT}\Theta s)
    H(s)^{\rT}
    \big(
        \re^{i(u+s)^{\rT}\Theta (s-v)}\Omega
        -
        \re^{-i(u+s)^{\rT}\Theta (s-v)}
    \overline{\Omega}
    \big) H(v-s)
    \rd s\\
\nonumber
    & = &
    -2
    \int_{\mR^n}
    \sin(u^{\rT}\Theta s)
    H(s)^{\rT}
    \Im
    \big(
        \re^{i(u+s)^{\rT}\Theta (s-v)}\Omega
    \big)
    H(v-s)
    \rd s    \\
\label{V12}
    & = &
    -2
    \int_{\mR^n}
    \sin(u^{\rT}\Theta s)
    H(s)^{\rT}
    K(u+s,s-v)
    H(v-s)
    \rd s,
\end{eqnarray}
where use is made of the function $K$ from (\ref{K}). A combination of (\ref{cL0}), (\ref{cL1}) and (\ref{cL2})
 allows the superoperator $\cL$ in  (\ref{cL}) to be evaluated at the Weyl operator as
\begin{equation}
\label{cLcW}
    \cL(\cW_u)
    =
    \int_{\mR^n}
    V(u,v)\cW_{u+v}
    \rd v,
\end{equation}
 where
 $
    V(u,v) =
    -2
    \sin(u^{\rT}\Theta v)
    H_0(v)
    +
    \frac{1}{2}
    (V_1(u,v) + V_2(u,v))
 $
 is given by (\ref{V}) in view of (\ref{V12}).   The averaging of (\ref{cLcW}) represents (\ref{Phidot0}) in the form $    \d_t \Phi(t,u)
    =
    \int_{\mR^n}
    V(u,v)
    \Phi(t,u+v)
    \rd v
$,
which establishes  (\ref{Phidot}), with the integral operator $\fA$ in (\ref{fA}).
\end{proof}

If the system and fields are uncoupled,  the IDE (\ref{Phidot}) reduces to the Moyal equation for isolated quantum systems \cite{M_1949}:
\begin{equation}
\label{Moy}
    \d_t\Phi(t,u)
    =
    -2
    \int_{\mR^n}
    \sin(u^{\rT}\Theta v)
    H_0(v)
    \Phi(t,u+v)
    \rd v,
\end{equation}
which follows from (\ref{fA}) by letting $H=0$ in (\ref{V}). Note that (\ref{Phidot}) and its special case (\ref{Moy}) are phase-space representations of the master equation for the reduced density operator in the spatial frequency domain.
The corresponding IDE for the QPDF $\mho$ in (\ref{mho}) is obtained from (\ref{Phidot}) by a unitary similarity transformation of the operator $\fA$ (due to unitarity of the Fourier transform $\mF$ up to a factor of $(2\pi)^{n/2}$):
\begin{eqnarray}
\nonumber
    \d_t \mho(t,x)
    & = &
    \int_{\mR^n}
    \Pi(x,y)
    \mho(t,y)
    \rd y\\
\label{mhodot}
    & = &
    \fF(\mho(t,\cdot))(x),
    \qquad
    \fF:= \mF \fA\mF^{-1},
\end{eqnarray}
where $\mF^{-1}$ is the inverse Fourier transform over the spatial variables in $\mR^n$. The kernel function $\Pi$ of the integral operator $\fF$ is related to $V$ in (\ref{fA}) by
\begin{equation}
\label{Pi}
    \Pi(x,y)
    =
    (2\pi)^{-n}
    \int_{\mR^n\x \mR^n}
    \re^{i(u^{\rT}(y-x)+v^{\rT}y)}
    V(u,v)
    \rd u\rd v.
\end{equation}

\section{Linear-Gaussian dynamics of open quantum harmonic oscillators}\label{sec:gauss}

We will now consider the case
when the Hamiltonian $h_0$ is a quadratic function and the coupling operators $h_1, \ldots, h_m$ are linear functions  of the system variables described by
\begin{eqnarray}
\label{hhOQHO1}
  h_0
  & := &
  \frac{1}{2} X^{\rT} R X,\\
\label{hhOQHO2}
  h
  & := &
  MX,
\end{eqnarray}
where $R$ is a real symmetric matrix of order $n$, and $M \in \mR^{m\x n}$.
 The matrices $R$ and $M$ will be referred to as the \emph{energy} and \emph{coupling matrices}, respectively. The corresponding Fourier transforms $H_0$ and $H$ in (\ref{hk}) and (\ref{h}) are given by
\begin{eqnarray}
\label{HH1}
    H_0(u)
      & = &
    -\frac{1}{2}
    \bra
        R,
        \delta''(u)
    \ket,\\
\label{HH2}
    H(u)
    &  = &
    iM\delta'(u),
\end{eqnarray}
where $\delta'$ and $\delta''$ are the distributional gradient vector and Hessian matrix of the $n$-dimensional Dirac delta function   $\delta$, and $\bra N, L\ket:= \Tr(N^*L)$ denotes the Frobenius inner product  \cite{HJ_2007} of real or complex matrices.

\subsection*{Example 2}

Suppose the system variables are the quantum mechanical positions and momenta as in (\ref{Xqp}) of Example~1, and the system Hamiltonian
consists of the quadratic potential and kinetic energy parts as
\begin{equation}
\label{pot}
    h_0
    =
    \frac{1}{2}
    (q^{\rT} \sK q + p^{\rT} \sM^{-1} p),
\end{equation}
with a stiffness matrix $\sK$ and a positive definite mass matrix $\sM$. Then  the representation (\ref{hhOQHO1}) holds with a block-diagonal energy matrix
\begin{equation}
\label{Rqp}
    R =
    \left[
    \begin{array}{cl}
        \sK & 0 \\
        0 &  \sM^{-1}
    \end{array}
    \right].
\end{equation}
\hfill$\blacktriangle$

Regardless of a particular structure of the system variables and the energy matrix, considered  in the above example, the combination of the CCRs (\ref{XX}) and the linear-quadratic energetics in (\ref{hhOQHO1}) and (\ref{hhOQHO2}) makes the QSDEs (\ref{dX}) and (\ref{dY}) linear:
\begin{eqnarray}
\label{dXlin}
    \rd X
    & = &
    AX \rd t + B \rd W,\\
\label{dYlin}
    \rd Y
    & = &
    CX\rd t + \rd W,
\end{eqnarray}
where $A\in \mR^{n\x n}$, $B\in \mR^{n\x m}$ and $C \in \mR^{m\x n}$ are constant matrices of coefficients:
\begin{eqnarray}
\label{A}
    A
    & := &
    2\Theta (R + M^{\rT}JM),\\
\label{B}
    B
    &:= &
    2\Theta M^{\rT},\\
\label{C}
    C
    & := &
    2JM.
\end{eqnarray}
The special dependence of $A$, $B$, $C$ on the  energy and coupling matrices $R$ and $M$ from (\ref{hhOQHO1}) and (\ref{hhOQHO2}), the symmetry of $R$ and the antisymmetry of the matrices $\Theta$ and $J$,  lead to the physical realizability (PR) conditions \cite{JNP_2008,SP_2012}:
\begin{eqnarray}
\label{APR}
    A \Theta + \Theta A^{\rT} + BJB^{\rT}
    & = &
    0,\\
\label{BCPR}
    \Theta C^{\rT} + BJ
    & = &
    0,
\end{eqnarray}
the first of which is closely related to the preservation in time of the CCRs (\ref{XX}), while (\ref{BCPR}) pertains to (\ref{XY}).
Such quantum stochastic systems are referred to as open quantum harmonic oscillators (OQHOs).
Their QCF and QPDF dynamics can be obtained from Theorem~\ref{th:Phidot} by using the Weyl quantization and the generalized functions in the spatial frequency domain as the following lemma shows for completeness.
%
\begin{lem}
Suppose the initial system variables of the OQHO, described by (\ref{dXlin})--(\ref{C}) and driven by vacuum input fields, have finite second moments (that is, $\bE(X(0)^{\rT}X(0)) < +\infty$). Then the QCF $\Phi(t,\cdot)$ in (\ref{Phi}) remains twice continuously differentiable with respect to its spatial variables for any time $t\> 0$ and satisfies the first-order linear PDE
\begin{equation}
\label{PhiPDE}
    \d_t \Phi(t,u)
    =
    u^{\rT} A \d_u \Phi(t,u)
        -\frac{1}{2}
        |B^{\rT}u|^2
    \Phi(t,u).
\end{equation}
Also, the QPDF $\mho$ in (\ref{mho}) is governed by the parabolic PDE
\begin{equation}
\label{mhoPDE}
    \d_t\mho(t,x)
    =
    -\div(
        \mho(t,x)Ax
    )
    +
    \frac{1}{2}
    \div^2(
        \mho(t,x) BB^{\rT}
    ),
 \end{equation}
 where $\div(\cdot)$ is the divergence operator over the spatial variables (which acts  on matrix-valued functions $f:= (f_{jk})_{1\< j\< m,1\< k\< n}$ on $\mR^n$ with $m>1$ in a row-wise fashion as $\div f := \big(\sum_{k=1}^n \nabla_k f_{jk}\big)_{1\< j\< m}$, where $\nabla_k$ is the partial derivative with respect to the $k$th Cartesian coordinate in $\mR^n$).
\hfill$\square$
\end{lem}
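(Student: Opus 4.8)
The plan is to specialise the IDE (\ref{Phidot}) of Theorem~\ref{th:Phidot} to the linear-quadratic case by inserting the generalised-function Fourier transforms (\ref{HH1}) and (\ref{HH2}) into the kernel $V$ of (\ref{V}) and the integral operator $\fA$ of (\ref{fA}). Because $H_0$ and $H$ are now supported at the origin (being derivatives of the Dirac delta $\delta$), the $v$-integration in (\ref{fA}) and the inner $s$-integration in (\ref{V}) collapse through the sifting property of $\delta'$ and $\delta''$, so that $\fA$ reduces to a first-order linear differential operator in $u$ acting on $\Phi(t,\cdot)$. Throughout, I would exploit the elementary values $K(w,0)=J$ and $K(0,w)=J$ from (\ref{K}) (since $\sin 0 = 0$, $\cos 0 = 1$), together with the antisymmetry $\Theta^{\rT}=-\Theta$, $J^{\rT}=-J$ and the symmetry $R^{\rT}=R$, to organise the bookkeeping of the surviving terms.

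For the Hamiltonian contribution I would retain only the first term of (\ref{V}): after inserting (\ref{HH1}), $\fA$ produces $\int_{\mR^n}\sin(u^{\rT}\Theta v)\bra R,\delta''(v)\ket\,\Phi(t,u+v)\rd v$, which by double integration by parts equals $\bra R,G\ket$, where $G$ is the Hessian in $v$, evaluated at $v=0$, of $\sin(u^{\rT}\Theta v)\Phi(t,u+v)$. Since $\sin(u^{\rT}\Theta v)$ and its own $v$-Hessian vanish at $v=0$, only the mixed first-derivative cross terms survive; using $R^{\rT}=R$ they combine into the drift term $2u^{\rT}\Theta R\,\d_u\Phi$. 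I expect this step to be routine.

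The coupling term is where the work and the main obstacle lie. Substituting $H(s)=iM\delta'(s)$ and $H(v-s)=iM\delta'(v-s)$ into the weighted self-convolution in (\ref{V}) and then integrating against $\Phi(t,u+v)$ yields a double distributional integral in which two first-order delta derivatives must be sifted. The delicate part is tracking which factor --- the weight $\sin(u^{\rT}\Theta s)$, the kernel $K(u+s,s-v)$, or $\Phi(t,u+v)$ itself --- each derivative lands on, and carrying the signs correctly. The leading contribution uses $K(u,0)=J$ and, when the derivative falls on $\Phi$, produces a gradient factor $\d_u\Phi$; when the derivative falls on the weight or on the trigonometric entries of $K$ it supplies the remaining pieces. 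Splitting $\Omega=I_m+iJ$ as in (\ref{Omega}), the real part collects into the coefficient carrying $M^{\rT}M$ and the imaginary part into one carrying $M^{\rT}JM$. The single most delicate point is the diffusion coefficient: the antisymmetry of $\Theta$ forces $u^{\rT}\Theta M^{\rT}M\Theta u = -|M\Theta u|^2$, and with $B=2\Theta M^{\rT}$ from (\ref{B}) this equals $-\frac{1}{2}|B^{\rT}u|^2$, which fixes the correct negative sign of the multiplicative term in (\ref{PhiPDE}); the $M^{\rT}JM$-piece forms a second gradient term $2u^{\rT}\Theta M^{\rT}JM\,\d_u\Phi$. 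Adding this to the Hamiltonian drift and invoking (\ref{A}) gives $u^{\rT}\cdot 2\Theta(R+M^{\rT}JM)\,\d_u\Phi = u^{\rT}A\,\d_u\Phi$, which completes (\ref{PhiPDE}).

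It remains to pass to the QPDF and to justify the regularity. Since $\mho=\mF(\Phi)$ by (\ref{mho}), I would apply $\mF$ to (\ref{PhiPDE}) using the correspondences $u_j\Phi\leftrightarrow i\d_{x_j}\mho$ and $\d_{u_j}\Phi\leftrightarrow ix_j\mho$: the transport term $u^{\rT}A\,\d_u\Phi$ then maps to $-\div(\mho Ax)$, and the multiplicative term $-\frac{1}{2}u^{\rT}BB^{\rT}u\,\Phi$ maps to $\frac{1}{2}\div^2(\mho BB^{\rT})$, giving (\ref{mhoPDE}). Finally, for the $C^2$ claim I would observe that the symmetrised second moments $\Re\bE(X(t)X(t)^{\rT})$ satisfy a closed linear Lyapunov-type differential equation driven by $A$ and $BB^{\rT}$ (obtained by averaging $\rd(XX^{\rT})$ from (\ref{dXlin}) and taking real parts, using $\Re\Omega = I_m$), whose solution stays finite on every bounded time interval whenever $\bE(X(0)^{\rT}X(0))<+\infty$; finiteness of the second moments is equivalent to twice continuous differentiability of $\Phi(t,\cdot)$ in $u$, which simultaneously justifies the reduction of $\fA$ to a differential operator and yields the stated regularity.
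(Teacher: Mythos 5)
Your proposal follows essentially the same route as the paper's proof: substituting the distributional Fourier transforms (\ref{HH1}) and (\ref{HH2}) into the kernel (\ref{V}), sifting the delta derivatives (the paper uses the identity $f\delta'=f(0)\delta'-\div f(0)\delta$ for this bookkeeping) to reduce $\fA$ to the first-order differential operator in (\ref{PhiPDE}), passing to (\ref{mhoPDE}) by Fourier transform, and securing $C^2$ regularity of $\Phi(t,\cdot)$ from propagation of finite second moments. The only cosmetic difference is that you obtain the second-moment bound from a Lyapunov-type ODE for $\Re\bE(XX^{\rT})$, whereas the paper reads it off the explicit variation-of-constants solution (\ref{Xt}); these are equivalent.
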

\begin{proof}
Substitution of (\ref{HH1}) and (\ref{HH2}) into (\ref{V}) allows the action of the integral operator $\fA$ in (\ref{fA}) on a twice continuously differentiable function $\varphi$ to be represented in the form
\begin{eqnarray}
\nonumber
  \fA(\varphi)(u)
  &:= &
    \int_{\mR^n}
    \sin(u^{\rT}\Theta v)
    \bra
        R,
        \delta''(v)
    \ket
    \varphi(u+v)
    \rd v \\
\nonumber
    & & +2 \int_{\mR^n}
            \sin(u^{\rT}\Theta s)
            \delta'(s)^{\rT}M^{\rT}
            \Big(\int_{\mR^n}
            K(u+s,s-v)
            M \delta'(v-s)
            \varphi(u+v)
            \rd v
            \Big)
            \rd s\\
\nonumber
  &= &
    \Bra
        R,
        \d_v^2(\sin(u^{\rT}\Theta v)
    \varphi(u+v))\big|_{v = 0}
    \Ket\\
\nonumber
   & & -2 \int_{\mR^n}
            \sin(u^{\rT}\Theta s)
            \delta'(s)^{\rT}M^{\rT}
            \div_v\big(
            \varphi(u+v)
            K(u+s,s-v)
            M
            \big)\big|_{v=s}
            \rd s\\
\nonumber
  &= &
    2
    \varphi'(u)^{\rT}
    R
    \d_v\sin(u^{\rT}\Theta v)\big|_{v=0}\\
\nonumber
   & & +2
     \div_s
     \big(
            \sin(u^{\rT}\Theta s)
            M^{\rT}
            \div_v\big(
            \varphi(u+v)
            K(u+s,s-v)
            M
            \big)\big|_{v=s}
     \big)
     \big|_{s=0}\\
\nonumber
 & = &
    -2
    \varphi'(u)^{\rT}
    R
    \Theta u \\
\nonumber
  &  &
    +2
     \div_s
     \big(
            \sin(u^{\rT}\Theta s)
            M^{\rT}
            \div_v\big(
            \varphi(u+v)
            K(u+s,s-v)
            M
            \big)\big|_{v=s}
     \big)
     \big|_{s=0}\\
\nonumber
 & = &
    2
    u^{\rT}
    \Theta R
    \varphi'(u)\\
\nonumber
  &  &
    +2
     \div_s
     \big(
            \sin(u^{\rT}\Theta s)
            M^{\rT}
            (
                JM\varphi'(u+s)
                +
                \varphi(u+s) M\Theta (u+s)
            )
     \big)
     \big|_{s=0}\\
\nonumber
  &= &
    2
    u^{\rT}
    \Theta R
    \varphi'(u)\\
\nonumber
   & &
    +2
     \div_s
     \big(
            \sin(u^{\rT}\Theta s)
            M^{\rT}
            (
                JM\varphi'(u+s)
                +
                \varphi(u+s) M\Theta (u+s)
            )
     \big)
     \big|_{s=0}\\
\nonumber
 & = &
    2
    u^{\rT}
    \Theta R
    \varphi'(u)
     +
     2
     u^{\rT}\Theta
     \big(
            M^{\rT}
            (
                JM\varphi'(u)
                +
                \varphi(u) M\Theta u
            )
     \big)\\
\label{fAphi}
  &  = &
        u^{\rT} A \d_u \varphi(u)
        -\frac{1}{2}
        |B^{\rT}u|^2
    \varphi(u),
\end{eqnarray}
where $\div_v(\cdot)$ and $\div_s(\cdot)$ are the divergence operators with respect to the vectors $v,s \in \mR^n$, and use is made of the matrices $A$ and $B$ from (\ref{A}) and (\ref{B}) together with the symmetry of $R$ and antisymmetry of $\Theta$.
Here, we have also used the relation
\begin{equation}
\label{fdeldash}
    f \delta'
    =
    f(0)\delta'-\div f(0)\delta
\end{equation}
for a continuously differentiable function $f:=(f_{jk})_{1\< j\< m, 1\< k\< n}:\mR^n\to \mC^{m\x n}$ and the distributional gradient of the $n$-dimensional Dirac delta function (see, for example, \cite{V_2002}).
Now, if the initial system variables of the OQHO have finite second moments, then the linear QSDE (\ref{dXlin}) propagates this mean square integrability to all subsequent times as
\begin{eqnarray}
\nonumber
    \bE(X(t)^{\rT}X(t))
    & = &
    \bE(X(0)^{\rT} \re^{tA^{\rT}} \re^{tA} X(0))
    +
    \int_0^t
    \bra
        B^{\rT}\re^{(t-s)A^{\rT}} \re^{(t-s)A} B,
        \Omega
    \ket
    \rd s
    \\
\label{Xint}
    & \< &
    \|\re^{tA}\|^2
    \bE(X(0)^{\rT}X(0))
    +
    m
    \int_0^t
    \|\re^{sA} B\|^2\rd s,
\end{eqnarray}
where $\|\cdot\|$ is the operator norm of a matrix, and the factor $m = \Tr \Omega$ comes from the quantum Ito matrix in (\ref{Omega}). This follows from the relation
\begin{equation}
\label{Xt}
    X(t)
    =
    \re^{t A} X(0)
    +
    \int_0^t
    \re^{(t-s)A} B
    \rd W(s),
\end{equation}
the product structure (\ref{rho}) of the system-field density operator and the input fields being in the vacuum state.
The property $\bE(X(t)^{\rT}X(t))<+\infty$ in (\ref{Xint}) secures the twice continuous differentiability of the QCF $\Phi(t,\cdot)$  with respect to its spatial argument for any $t\>0$. Therefore, in view of (\ref{fAphi}), the IDE (\ref{Phidot}) takes the form of the PDE (\ref{PhiPDE}). Accordingly,
the IDE (\ref{mhodot}) for the QPDF $\mho$ with the kernel function (\ref{Pi}) can be reduced to the PDE (\ref{mhoPDE})   by applying the spatial Fourier transform  to  both sides of (\ref{PhiPDE}).
\end{proof}
%

Note that  (\ref{mhoPDE}) coincides with the FPKE for a classical Markov diffusion process $\xi$ in $\mR^n$ governed by the SDE $\rd \xi = A\xi \rd t + B\rd \omega$ (where $\omega$ is the standard Wiener process in $\mR^m$) with
the infinitesimal generator $\fF^{\dagger}$ which acts
on a smooth function $\varphi: \mR^n \to \mR$ (with the gradient $\varphi'$ and the Hessian matrix $\varphi''$) as
\begin{equation}
\label{gen}
    \fF^{\dagger}(\varphi)(x)
    :=
    x^{\rT}A^{\rT}\varphi'(x)
    +
    \frac{1}{2}
    \bra
        BB^{\rT},
        \varphi''(x)
    \ket
\end{equation}
 (see, for example, \cite{KS_1991,S_2008}).
The PDEs (\ref{PhiPDE}) and (\ref{mhoPDE}) show that the OQHO inherits the preservation of the Gaussian nature of quantum states from classical Markov processes governed by linear SDEs. The QCF of a Gaussian quantum state \cite{KRP_2010} is parameterized by the mean vector $\mu\in \mR^n$ and the real part $\Sigma$ of the quantum covariance matrix as
\begin{equation}
\label{Phigauss}
    \Phi_{\mu,\Sigma}(u) = \re^{i\mu^{\rT} u - \frac{1}{2}\|u\|_{\Sigma}^2},
    \qquad
    u \in \mR^n,
\end{equation}
where $\|u\|_{\Sigma}:= |\sqrt{\Sigma}u|$. The Gaussian QCF (\ref{Phigauss}) is identical to its classical counterpart except that $\Sigma$ satisfies the generalized form $\Sigma + i\Theta\succcurlyeq 0 $ of the Heisenberg uncertainty principle \cite{H_2001}, which is a stronger property than the positive semi-definiteness of $\Sigma$.
The corresponding Gaussian QPDF  is given by
 \begin{equation}
\label{mhogauss}
    \mho_{\mu,\Sigma}(x)
    :=
    \frac{(2\pi)^{-n/2}}{\sqrt{\det \Sigma}}\re^{-\frac{1}{2}\|x-\mu\|_{\Sigma^{-1}}^2},
    \qquad
    x\in \mR^n,
\end{equation}
provided $\Sigma\succ 0$ (the latter condition also makes $\Phi_{\mu,\Sigma}$ square integrable).
Therefore, if the system is initialized at a Gaussian quantum state, then the state remains Gaussian over the course of time,  with the parameters of the QCF (\ref{Phigauss}) satisfying the ODEs
\begin{eqnarray}
\label{mudot}
    \dot{\mu}
    & = &
    A\mu,\\
\label{Sigmadot}
    \dot{\Sigma}
    & = &
    A\Sigma +\Sigma A^{\rT} + BB^{\rT},
\end{eqnarray}
where the matrices $A$ and $B$ are given by (\ref{A}) and (\ref{B}). Furthermore, in this case, the system variables and the output fields, described by (\ref{Y})--(\ref{YYcom}), are in an augmented Gaussian state at every moment of time.
Furthermore, if $A$ is Hurwitz, then the system has a unique invariant quantum state. This state  is Gaussian with the mean vector $\mu=0$ (in view of (\ref{mudot})) and the real part of the quantum covariance matrix being the unique steady-state solution of the Lyapunov ODE in (\ref{Sigmadot}):
\begin{equation}
\label{Siginv}
    P
    :=
    \int_0^{+\infty}
    \re^{tA}
    BB^{\rT}
    \re^{tA^{\rT}}
    \rd t,
\end{equation}
which is the infinite-horizon Gramian of the matrix pair $(A,B)$, satisfying the algebraic Lyapunov equation (ALE)
\begin{equation}
\label{PALE}
    A P + PA^{\rT} + BB^{\rT} = 0.
\end{equation}
The property that the matrix $P$ is physically meaningful (that is, satisfies the Heisenberg uncertainty relation) can be verified directly  by noting that
\begin{equation}
\label{Ppos}
    P+i\Theta
    =
    \int_0^{+\infty}\re^{tA}B\Omega B^{\rT}\re^{tA^{\rT}}\rd t
    \succcurlyeq 0
\end{equation}
in view of (\ref{APR}), (\ref{Siginv}) and the positive semi-definiteness of the quantum Ito matrix $\Omega$.
The convergence to the invariant Gaussian quantum state also holds in the sense of weak convergence of probability measures \cite{B_1968} (appropriately modified for the quantum mechanical case \cite{CH_1971}) irrespective of whether the initial state is Gaussian. The invariant Gaussian QCF and QPDF
\begin{eqnarray}
\label{Phi*}
  \Phi_*(u)
  & := &
  \Phi_{0,P}(u)
  =
  \re^{-\frac{1}{2}\|u\|_P^2}, \\
\label{mho*}
  \mho_*(x)
  & := &
  \mho_{0,P}(x)
  =
  \frac{(2\pi)^{-n/2}}{\sqrt{\det P}}\re^{-\frac{1}{2}\|x\|_{P^{-1}}^2}
\end{eqnarray}
(with (\ref{mho*}) assuming that $(A,B)$ is controllable which ensures that $P\succ 0$)
are the fixed points of the corresponding semi-groups associated with the IDEs (\ref{Phidot}) and (\ref{mhodot}) in the sense that
\begin{equation}
\label{fix}
    \re^{t \fA}(\Phi_*) = \Phi_*,
    \qquad
    \re^{t \fF}(\mho_*) = \mho_*
\end{equation}
for all $t\>0$. In the case of OQHOs being considered, the operator exponential $\re^{t \fA}$ in (\ref{fix}) can be computed by solving the first-order PDE (\ref{PhiPDE}) with an initial condition $\varphi: \mR^n\to \mC$
through the method of characteristics \cite{E_1998,V_1971} as
\begin{eqnarray}
\nonumber
    \re^{t\fA} (\varphi)(u)
    & = &
    \varphi\big(\re^{tA^{\rT}}u\big)
    \Phi_{0, \Sigma(t)}(u)\\
\label{Phisol}
    & = &
    \varphi\big(\re^{tA^{\rT}}u\big)
    \re^{-\frac{1}{2}\|u\|_{\Sigma(t)}^2},
    \qquad
    t\>0,\
    u \in \mR^n.
\end{eqnarray}
Here, use is made of (\ref{Phigauss}) together with the finite-horizon controllability Gramian of the pair $(A,B)$ satisfying the ODE (\ref{Sigmadot}) with zero initial condition $\Sigma(0) = 0$:
\begin{equation}
\label{Sigma}
    \Sigma(t)
    =
    \int_0^t
    \re^{sA}
    BB^{\rT}
    \re^{sA^{\rT}}
    \rd s,
\end{equation}
with $\Tr \Sigma(t)$ being bounded by the second term on the right-hand side of (\ref{Xint}). The Gramians $\Sigma(t)$ and $P= \lim_{t\to +\infty} \Sigma(t)$ in (\ref{Siginv}) satisfy the identity
\begin{equation}
\label{PSig}
    \re^{tA} P \re^{tA^{\rT}} + \Sigma(t) = P,
    \qquad
    t\> 0.
\end{equation}
The operator exponential $\re^{t\fA}$ in (\ref{Phisol}) can also be obtained directly from (\ref{Xt}) (regardless of whether the system variables have finite second moments) as
\begin{eqnarray*}
    \Phi(t,u)
    & = &
    \bE
    \re^{
    iu^{\rT}
    (\re^{t A} X(0)
    +
    \int_0^t
    \re^{(t-s)A} B
    \rd W(s))}\\
    & = &
    \bE
    \re^{
    iu^{\rT}\re^{t A} X(0)}
    \bE\re^{iu^{\rT}
    \int_0^t
    \re^{(t-s)A} B
    \rd W(s)}\\
    & = &
    \Phi(0,\re^{tA^{\rT}}u)
    \re^{-\frac{1}{2}\int_0^t |B^{\rT} \re^{sA^{\rT}}u|^2 \rd s}\\
    & = &
    \Phi(0,\re^{tA^{\rT}}u)
    \re^{-\frac{1}{2}\|u\|_{\Sigma(t)}^2}.
\end{eqnarray*}
Here, use is made of the commutativity between the input fields (more precisely, their increments) and the initial system variables, the
product structure of the system-field state (\ref{rho}) and the quasi-characteristic functional of the vacuum fields:
$$
    \bE \re^{i\int_0^t f(s)^{\rT}\rd W(s)}
    =
    \re^{-\frac{1}{2}\int_0^t |f(s)|^2 \rd s}
$$
for any $t\>0$ and any locally square integrable function $f:\mR_+\to \mR^m$.
Accordingly, $\re^{t\fF}$ is an integral operator whose kernel function is the fundamental solution \cite{E_1998,V_1971} of the FPKE (\ref{mhoPDE}), provided $(A,B)$ is controllable, which is equivalent to $\Sigma(t)\succ 0$ for all $t>0$.  This operator maps an initial condition $\psi: \mR^n\to \mR$ of the FPKE to the function
\begin{equation}
\label{trans}
    \re^{t\fF}(\psi)(x)
    =
    \int_{\mR^n}
    \mho_{\re^{tA} y, \Sigma(t)}(x)
    \psi(y)
    \rd y,
    \qquad
    t>0,\
    x \in \mR^n,
\end{equation}
where use is made of the Gaussian PDF (\ref{mhogauss}).  The function $y\mapsto \mho_{\re^{tA} y, \Sigma(t)}(x)$ is the Markov transition kernel of the classical diffusion process with the  infinitesimal generator (\ref{gen}).

Therefore, any matrix pair $(A,B)$ satisfying (\ref{Siginv}) and  the PR condition (\ref{APR}), with $A$ Hurwitz, can be used in order to generate a zero mean Gaussian state with a given quantum covariance matrix  $P+i\Theta$ in (\ref{Ppos}) as an invariant state of the OQHO governed by (\ref{dXlin}). Note that inaccuracies in the energy and coupling matrices $R$ and $M$ of such a system will
only perturb the matrix $P$ for this invariant state without distorting its Gaussian nature. Their effects are considered, for example, in \cite{VPJ_2017a}. However, the Hamiltonian and coupling operators can, in general, be subject to a broader class of perturbations which bring the quantum system beyond the linear-quadratic energetics (\ref{hhOQHO1}) and (\ref{hhOQHO2}).

\section{Weyl variations of the energy operators}\label{sec:Weylvar}

Now, consider a class of quantum stochastic systems, which are similar to those in \cite{SVP_2014,V_2015c,V_2017} and  result from perturbing the linear-quadratic energy operators (\ref{hhOQHO1}) and (\ref{hhOQHO2}) of the nominal  OQHO of the previous section subject to the following variations:
\begin{eqnarray}
\label{h0eps}
  h_0^{\eps}
  & := &
  \frac{1}{2} X^{\rT} R X +
  \eps\wt{h_0},\\
\label{heps}
  h^{\eps}
  & := &
  MX + \eps\wt{h}.
\end{eqnarray}
Here, $\eps$ is a small real-valued parameter, and $\wt{h}_0$ and $\wt{h}:= (\wt{h}_j)_{1\< j\< m}$ describe  nonquadratic and nonlinear parts of the perturbed energy operators  $h_0^{\eps}, h_1^{\eps}, \ldots, h_m^{\eps}$,  which are assumed to be represented in the Weyl quantization form  (see Section~\ref{sec:IDE}) as
\begin{eqnarray}
\nonumber
  \wt{h}_0
  & := &
  \int_{\mR^d} \Psi(v) \cW_{S^{\rT}v} \rd v\\
\label{h0tilde}
  & = &
  \int_{\mR^d} |\Psi(v)|\cos(v^{\rT}SX + \arg \Psi(v)) \rd v,\\
\nonumber
  \wt{h}
  & := &
  \int_{\mR^d} \Ups(v) \cW_{S^{\rT}v} \rd v\\
\label{htilde}
  & = &
  \Big(\int_{\mR^d} |\Ups_k(v)| \cos(v^{\rT}SX + \arg \Ups_k(v))\rd v\Big)_{1\< k\< m}.
\end{eqnarray}
These \emph{Weyl variations} depend on $d\<n $ system variables
 comprising a subvector $S X$ (where $S\in \{0,1\}^{d\x n}$ is a suarray of a permutation matrix of order $n$) and are determined by the Fourier transforms $\Psi: \mR^d \to \mC$ and
$\Ups:= (\Ups_k)_{1\< k\< m}: \mR^d \to \mC^m$ of functions on $\mR^d$ with values in $\mR$ and $\mR^m$, respectively. In addition to being Hermitian, both functions $\Psi$ and $\Ups$ are assumed to be absolutely integrable, so that (\ref{h0tilde}) and (\ref{htilde}) are Bochner integrals which produce self-adjoint bounded operators on the system-field Hilbert space $\cH$.
The corresponding Fourier transforms in (\ref{hk}) and (\ref{h}) are given by
\begin{eqnarray}
\label{H0eps}
    H_0^{\eps}(u)
     & = &
    -\frac{1}{2}
    \bra
        R, \delta''(u)
    \ket + \eps\wt{H}_0(u),\\
\label{Heps}
    H^{\eps}(u)
    &= &
    iM\delta'(u) + \eps \wt{H}(u),
\end{eqnarray}
where
\begin{eqnarray}
\label{H0t}
    \wt{H}_0(u)
    & := &
    \int_{\mR^d}\Psi(v)\delta(u-S^{\rT}v)\rd v,\\
\label{Ht}
    \wt{H}(u)
    & := &
    \int_{\mR^d}\Ups(v)\delta(u-S^{\rT}v)\rd v.
\end{eqnarray}
Since $SS^{\rT}=I_d$, the matrix $S^{\rT}$ describes an isometry between $\mR^d$ and the subspace $S^{\rT}\mR^d\subset \mR^n$. Accordingly, the integrals in (\ref{H0t}) and (\ref{Ht}), as generalized functions \cite{V_2002}, are measures on this subspace with densities $\Psi$ and $\Ups$ (with respect to the $d$-dimensional Lebesgue measure on $S^{\rT}\mR^d$) and values in $\mC$ and $\mC^m$, respectively. Regardless of the particular form (\ref{h0tilde}) and (\ref{htilde}) of the perturbations $\wt{h}_0$ and $\wt{h}$ in (\ref{h0eps}) and (\ref{heps}), the perturbed GKSL generator (\ref{cL}) depends on $\eps$ in a quadratic fashion:
\begin{eqnarray}
\nonumber
   \cL_{\eps}(\xi)
   & := &
   i[h_0^{\eps},\xi]
     +
    \frac{1}{2}
    \big(
        [(h^{\eps})^{\rT},\xi]\Omega h^{\eps}  + (h^{\eps})^{\rT}\Omega [\xi,h^{\eps}]
    \big)\\
\label{cLeps}
    & = &
    \cL(\xi) + \eps \wt{\cL}(\xi)
    +
    \frac{\eps^2}{2}
    \big(
        [\wt{h}^{\rT},\xi]\Omega \wt{h}  + \wt{h}^{\rT}\Omega [\xi,\wt{h}]
    \big),
\end{eqnarray}
where $\cL:= \cL_0$ is the unperturbed generator.
Here, the parametric derivative $\wt{\cL}:= \d_{\eps}\cL_{\eps}\big|_{\eps = 0}$ is a linear superoperator  which acts on a system operator $\xi$ as
\begin{eqnarray}
\nonumber
   \wt{\cL}(\xi)
   &= &
   i[\wt{h}_0,\xi]\\
\nonumber
    & & +
     \frac{1}{2}
    \big(
        [\wt{h}^{\rT},\xi]\Omega MX
        +
        [X^{\rT},\xi]M^{\rT}\Omega \wt{h} \\
\label{cLtilde}
     &   & +
        \wt{h}^{\rT}\Omega M[\xi,X]
        +
        X^{\rT}M^{\rT}\Omega[\xi,\wt{h}]
    \big),
\end{eqnarray}
where $M$ 
is the coupling matrix of the nominal OQHO in (\ref{hhOQHO2}). In the case of the Weyl variations (\ref{h0tilde}) and (\ref{htilde}), the superoperator $\wt{\cL}$ depends linearly on the functions $\Psi$ and $\Ups$ which quantify the strength of such perturbations of the energy operators.

\subsection*{Example 3}

Suppose $\Ups=0$ (that is, the coupling operators in (\ref{heps}) remain unperturbed). Then the QSDE (\ref{dX}) takes the form \cite{V_2015c}
\begin{equation}
\label{dXhlin}
    \rd X
    =
    \Big(
        AX+2\eps i\Theta S^{\rT}\int_{\mR^d} \Psi(v) v\cW_{S^{\rT}v}\rd v
    \Big)\rd t + B\rd W,
\end{equation}
where the nonlinear dependence of the drift term on the system variables  comes from the nonquadratic part of the Hamiltonian in (\ref{h0eps}).
In the particular case (\ref{Xqp}) and (\ref{pot}), considered in Examples 1 and 2,  the Weyl variation of the potential energy leads to the Hamiltonian
\begin{equation}
\label{poteps}
    h_0
    =
    \frac{1}{2}(q^{\rT} \sK q + p^{\rT} \sM^{-1} p)
    +
    \eps
    \phi(q),
    \qquad
    \phi(q)
    :=
    \int_{\mR^d} \Psi(v) \re^{iv^{\rT}q}\rd v,
\end{equation}
which corresponds to (\ref{h0eps}) with the energy matrix $R$ given by  (\ref{Rqp}) and the matrix $S$ extracting the position variables from the vector $X$:
\begin{equation}
\label{SId}
    S
    =
    \left[
    \begin{array}{cc}
        I_d & 0
    \end{array}
    \right],
\end{equation}
with
\begin{equation}
\label{d}
    d
    :=
    \frac{n}{2}.
\end{equation}
This perturbation enters the QSDE (\ref{dXhlin}) through the momentum part in the form of an additional ``force'' term in the drift vector as
\begin{eqnarray}
\label{dq}
    \rd q
    & = &
    (A_{11} q + A_{12} p ) \rd t + B_1 \rd W,\\
\label{dp}
    \rd p
    & = &
    ( A_{21} q + A_{22} p  - \eps\phi'(q) ) \rd t + B_2 \rd W,
\end{eqnarray}
where
\begin{equation}
\label{phidash}
    \phi'(q)
    =
    i
    \int_{\mR^d}
    \Psi(v) v \re^{iv^{\rT} q}
    \rd v.
\end{equation}
Here, use is made of the CCR matrix $\Theta$ from (\ref{Thetaqp}), whereby $2\Theta S^{\rT} = {\small\left[\begin{array}{c} 0 \\ - I_d\end{array}\right]}$, and the matrices $A$ and $B$ in (\ref{A}) and (\ref{B}) are partitioned into blocks  $A_{jk} \in \mR^{d\x d}$  and $B_j \in \mR^{d\x m}$ as
\begin{equation}
\label{ABblocks}
    A
    :=
    \left[
    \begin{array}{ll}
      A_{11} & A_{12} \\
      A_{21} & A_{22}
    \end{array}
    \right],
    \qquad
    B
    :=
    \left[
    \begin{array}{l}
      B_1\\
      B_2
    \end{array}
    \right].
\end{equation}
Note that the form of the position QSDE (\ref{dq}) remains unperturbed.
The integrals in (\ref{poteps}) and (\ref{phidash}) lend themselves to closed-form evalution, for example, if $\Psi$ is a linear combination of quadratic-exponential functions which describe Gaussian-shaped  ``wells'' or ``bumps'' in the potential energy \cite{FM_2004} (see also \cite[Section 9]{V_2015c}). More precisely, suppose the function $\Psi$ is given by
\begin{equation}
\label{Psigauss}
    \Psi(v)
    :=
    \alpha \frac{\sqrt{\det \Lambda}}{(2\pi)^{d/2}}
    \re^{-iv^{\rT}\gamma-\frac{1}{2}\|v\|_{\Lambda}^2},
\end{equation}
where $\alpha$ is a real-valued constant, $\gamma \in \mR^d$, and $\Lambda$ is a real positive definite symmetric matrix of order (\ref{d}). The corresponding  integral in (\ref{poteps}) is the inverse Fourier transform of (\ref{Psigauss}):
\begin{equation}
\label{phiMorse}
    \phi(q)
    =
    \alpha
    \re^{
    -\frac{1}{2}
    \|q-\gamma\|_{\Lambda^{-1}}^2
    },
\end{equation}
whose right-hand side (up to a constant factor)   is  the Gaussian PDF with the mean vector $\gamma$ and covariance matrix $\Lambda$.
The parameter $\gamma$ of the potential $\phi$ in (\ref{phiMorse})
specifies the location of a centre of  attraction (if $\alpha<0$) or repulsion (if $\alpha>0$) in the position space $\mR^d$ with the stiffness matrix $\phi''(\gamma) = -\alpha\Lambda^{-1}$. The exponentially fast decay of this potential (together with its gradient) at infinity resembles the Morse potential \cite{M_1929}. The corresponding perturbation force term in (\ref{dp}) is obtained by a straightforward differentiation of (\ref{phiMorse}) with respect to $q$ as
$
    \phi'(q) = \phi(q) \Lambda^{-1}(\gamma-q)
$.
\hfill$\blacktriangle$

The nonquadratic perturbations of the Hamiltonian, considered in the above example,  are applicable to the modelling of  open quantum systems with multiextremum potential energy landscapes in the form
$$
    \phi(q)
    =
    \sum_{k=1}^N
    \alpha_k
    \exp
    \Big(
        -\frac{1}{2}
        \|q-\gamma_k\|_{\Lambda_k^{-1}}^2
    \Big),
$$
which extends (\ref{phiMorse}). In addition to being ubiquitous in macromolecular dynamics simulations, multiextremum potentials are also of interest from the viewpoint of adiabatic quantum computing \cite{AVKLLR_2005}, where they are represented by the terminal Hamiltonian of a quantum system, towards  which the initial Hamiltonian  is adiabatically evolved together with its ground state. In fact, the efficiency of the quantum annealing protocols in comparison with the classical optimization algorithms is particularly important
for the minimization of complicated functions of many variables.

\section{First-order correction of the quantum state}\label{sec:statevar}

In accordance with the perturbed energy operators in (\ref{h0eps}) and (\ref{heps}), the operators $\fA$, $\fF$ in (\ref{fA}), (\ref{mhodot}), the QCF $\Phi$  and the QPDF $\mho$ acquire dependence  on the parameter $\eps$ and will be denoted by $\fA_{\eps}$, $\fF_{\eps}$, $\Phi_{\eps}$, $\mho_{\eps}$, respectively (the subscript $(\cdot)_0$ for the unperturbed quantities at $\eps=0$ will be omitted for brevity). Furthermore, in view of (\ref{V}) and (\ref{K}), the integral operators $\fA_{\eps}$, $\fF_{\eps}$ depend on $\eps$ in a quadratic fashion, inheriting such dependence from the GKSL generator $\cL_{\eps}$ in  (\ref{cLeps}).
The QCF and QPDF  of the perturbed quantum system admit the asymptotic series expansions
$$
    \Phi_{\eps}
    =
    \sum_{r=0}^{+\infty}
    \frac{\eps^r}{r!}
    \Phi^{(r)},
    \qquad
    \mho_{\eps}
    =
    \sum_{r=0}^{+\infty}
    \frac{\eps^r}{r!}
    \mho^{(r)},
$$
where the functions
\begin{equation}
\label{Phir}
    \Phi^{(r)}
    :=
    \d_{\eps}^r
    \Phi_{\eps}
    \big|_{\eps = 0},
    \qquad
    \mho^{(r)}
    :=
    \d_{\eps}^r
    \mho_{\eps}
    \big|_{\eps = 0}
\end{equation}
satisfy the recurrent IDEs
\begin{eqnarray}
\label{Phidotr}
    \d_t \Phi^{(r)}
    & = &
    \sum_{k=0}^{\min(r,2)}
    \left(\begin{array}{c}
    r\\
    k
    \end{array}
    \right)
    \fA^{(k)}(\Phi^{(r-k)}),\\
\label{mhodotr}
    \d_t \mho^{(r)}
    & = &
    \sum_{k=0}^{\min(r,2)}
    \left(
    \begin{array}{c}
    r\\
    k
    \end{array}
    \right)
    \fF^{(k)}(\mho^{(r-k)})
\end{eqnarray}
for all $r=0,1,2,\ldots$. These are obtained by applying the Leibniz product rule to the differentiation of the equations (\ref{Phidot}) and (\ref{mhodot}) with respect to $\eps$, starting
from the PDEs (\ref{PhiPDE}) and (\ref{mhoPDE}) for the QCF $\Phi^{(0)} = \Phi$ and the QPDF $\mho^{(0)} = \mho$ of the unperturbed oscillator. Here, $    {\small\begin{pmatrix} r\\ k \end{pmatrix}}:= \frac{r!}{k!(r-k)!} $ is the binomial coefficient, and use is made of the parametric  derivatives
\begin{equation}
\label{fAr}
    \fA^{(r)}
    :=
    \d_{\eps}^r
    \fA_{\eps}
    \big|_{\eps = 0},
\qquad
    \fF^{(r)}
    :=
    \d_{\eps}^r
    \fF_{\eps}
    \big|_{\eps = 0},
\end{equation}
which vanish for all $r>2$ in view of the quadratic dependence of the operators $\fA_{\eps}$ and $\fF_{\eps}$ on $\eps$, thus leading to the band structure of the equations (\ref{Phidotr}) and (\ref{mhodotr}). This structure can be represented symbolically in vector-matrix form as
\begin{eqnarray*}
    \d_t
    \left[
    \begin{array}{c}
    \Phi\\
    \Phi^{(1)}\\
    \Phi^{(2)}\\
    \Phi^{(3)}\\
    \Phi^{(4)}\\
    \vdots
    \end{array}
    \right]
    & = &
    \left[
    \begin{array}{cccccc}
    \fA         & 0     & 0 & 0 &0 & \ldots\\
    \fA^{(1)}   & \fA   & 0 & 0 &0 & \ldots\\
    \fA^{(2)}   & 2\fA^{(1)} & \fA   & 0 & 0 &\ldots\\
    0 & 3\fA^{(2)}   & 3\fA^{(1)} & \fA   & 0 & \ldots\\
    0 & 0 & 6\fA^{(2)}   & 4\fA^{(1)} & \fA   & \ldots\\
    \vdots & \vdots& \vdots & \vdots & \vdots & \vdots
    \end{array}
    \right]
    \left[
    \begin{array}{c}
    \Phi\\
    \Phi^{(1)}\\
    \Phi^{(2)}\\
    \Phi^{(3)}\\
    \Phi^{(4)}\\
    \vdots
    \end{array}
    \right], \\
    \d_t
    \left[
    \begin{array}{c}
    \mho\\
    \mho^{(1)}\\
    \mho^{(2)}\\
    \mho^{(3)}\\
    \mho^{(4)}\\
    \vdots
    \end{array}
    \right]
    & = &
    \left[
    \begin{array}{cccccc}
    \fF         & 0     & 0 & 0 &0 & \ldots\\
    \fF^{(1)}   & \fF   & 0 & 0 &0 & \ldots\\
    \fF^{(2)}   & 2\fF^{(1)} & \fF   & 0 & 0 &\ldots\\
    0 & 3\fF^{(2)}   & 3\fF^{(1)} & \fF   & 0 & \ldots\\
    0 & 0 & 6\fF^{(2)}   & 4\fF^{(1)} & \fF   & \ldots\\
    \vdots & \vdots& \vdots & \vdots & \vdots & \vdots
    \end{array}
    \right]
    \left[
    \begin{array}{c}
    \mho\\
    \mho^{(1)}\\
    \mho^{(2)}\\
    \mho^{(3)}\\
    \mho^{(4)}\\
    \vdots
    \end{array}
    \right].
\end{eqnarray*}
In what follows, we will consider only the first-order  infinitesimal perturbations of the quantum state in (\ref{Phir}) described by
\begin{equation}
\label{Phi'mho'}
    \wt{\Phi}
    := \Phi^{(1)},
    \qquad
    \wt{\mho}
    :=
    \mho^{(1)}.
\end{equation}
Their evolution is governed by
\begin{eqnarray}
\label{Phi'dot}
    \d_t \wt{\Phi}
    & = &
    \fA(\wt{\Phi}) + \fB(\Phi),\\
\label{mho'dot}
    \d_t \wt{\mho}
    & = &
    \fF(\wt{\mho}) + \fG(\mho),
\end{eqnarray}
with zero initial conditions $
    \wt{\Phi}(0,\cdot) = 0$ and $
    \wt{\mho}(0,\cdot) = 0
$, where $\Phi$  and $\mho$ are the QCF  and QPDF for the unperturbed OQHO satisfying the PDEs (\ref{PhiPDE}) and (\ref{mhoPDE}). Here, the operators $\fA$ and $\fF$ from (\ref{fAr}) act on the functions (\ref{Phi'mho'}), and their parametric derivatives
\begin{equation}
\label{fA'fF'}
    \fB
    :=
    \fA^{(1)},
    \qquad
    \fG
     :=
    \fF^{(1)}
    =
    \mF \fB\mF^{-1}
\end{equation}
are related by the same unitary similarity transformation as in (\ref{mhodot}).
The operator $\fB$ is related to the superoperator $\wt{\cL}$ in (\ref{cLtilde})  through the QCF $\Phi$ in (\ref{Phi}) and the Weyl operator $\cW_u$ in (\ref{cW}) by
\begin{equation}
\label{fBPhi}
    \fB(\Phi(t,\cdot))(u) = \bE \wt{\cL}(\cW_u(t)),
    \qquad
    t\> 0,\
    u \in \mR^n.
\end{equation}
The following lemma establishes the structure of $\fB$ as a perturbation of the integral operator $\fA$ in (\ref{fA}) rather than through the averaging in (\ref{fBPhi}).

\begin{lem}
\label{lem:fB}
Suppose the Hamiltonian and coupling operators of the OQHO are subject to the Weyl variations (\ref{h0eps})--(\ref{htilde}). Then $\fB$ in (\ref{fA'fF'}) is an integro-differential operator which acts on a continuously differentiable function $\varphi: \mR^n \to \mC$ as
\begin{eqnarray}
\nonumber
  \fB(\varphi)(u)
    &=&
    -2
    \int_{\mR^d}
    \sin(u^{\rT}\Theta S^{\rT}w)\Psi(w) \varphi(u+S^{\rT} w)\rd w \\
\nonumber
    & & +
    2i
    \int_{\mR^d}
    \Ups(w)^{\rT}
    {\small
    \left[\begin{array}{cc}
        \sin(u^{\rT}\Theta S^{\rT}w) M\Theta (u+S^{\rT}w) + K(u, S^{\rT}w)M\Theta u
        &
        \sin(u^{\rT}\Theta S^{\rT}w) JM
    \end{array}
    \right]}\\
\label{fBphi}
    & &\x
    {\small
    \left[
    \begin{array}{c}
        \varphi(u+S^{\rT}w)\\
        \varphi'(u+S^{\rT}w)
    \end{array}
    \right]}
    \rd w,
    \qquad
    u \in \mR^n.
\end{eqnarray}
\hfill$\square$
\end{lem}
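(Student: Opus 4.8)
The plan is to obtain $\fB$ by differentiating the integral operator $\fA_\eps$ with respect to $\eps$ at $\eps=0$ at the level of its kernel, rather than through the averaging relation (\ref{fBPhi}). Recall that $\fA_\eps$ is the operator (\ref{fA}) whose kernel $V_\eps$ is given by (\ref{V})--(\ref{K}) with $H_0,H$ replaced by the perturbed transforms $H_0^\eps=H_0+\eps\wt H_0$ and $H^\eps=H+\eps\wt H$ from (\ref{H0eps})--(\ref{Ht}). Since (\ref{V}) depends linearly on $H_0$ and quadratically (as a weighted self-convolution) on $H$, the Leibniz rule gives $\d_\eps V_\eps\big|_{\eps=0}$ as a sum of three contributions: a Hamiltonian part $-2\sin(u^{\rT}\Theta v)\wt H_0(v)$, and two bilinear coupling cross-terms in which exactly one of the two factors $H(\cdot)$ in the self-convolution is replaced by $\wt H(\cdot)$. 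Thus $\fB(\varphi)(u)=\int_{\mR^n}\big(\d_\eps V_\eps|_{\eps=0}\big)(u,v)\,\varphi(u+v)\,\rd v$ splits into one Hamiltonian and two coupling integrals.

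First I would insert the nominal transforms $H_0,H$ from (\ref{HH1})--(\ref{HH2}) and the Weyl-variation transforms $\wt H_0,\wt H$ from (\ref{H0t})--(\ref{Ht}). The Hamiltonian term is immediate: substituting $\wt H_0(v)=\int_{\mR^d}\Psi(w)\delta(v-S^{\rT}w)\,\rd w$ and integrating the $\delta$ over $v$ collapses it to $-2\int_{\mR^d}\sin(u^{\rT}\Theta S^{\rT}w)\Psi(w)\varphi(u+S^{\rT}w)\,\rd w$, which is precisely the first line of (\ref{fBphi}). For the two coupling cross-terms, each contains one nominal factor $H=iM\delta'$ and one variation factor $\wt H=\int_{\mR^d}\Ups(w)\delta(\cdot-S^{\rT}w)\,\rd w$. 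Integrating against the variation measure sets the corresponding argument to $S^{\rT}w$, while integrating against the nominal factor requires the distributional identity (\ref{fdeldash}), i.e. an integration by parts in the remaining spatial variable, exactly as in the derivation of (\ref{fAphi}) in the preceding Lemma. Carrying these out, and using the explicit form (\ref{K}) of $K$ together with the antisymmetry of $\Theta$, should assemble the second and third lines of (\ref{fBphi}).

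The main obstacle, and the point where the computation becomes delicate, is the bookkeeping of these two $\delta'$-integrations, which enter asymmetrically. In the cross-term where the nominal coupling appears as $H(v-s)$, the variation measure first fixes $s=S^{\rT}w$, so the prefactor $\sin(u^{\rT}\Theta S^{\rT}w)$ is a nonzero constant and the $\delta'$-integration in $v$ differentiates $K(u+S^{\rT}w,\,S^{\rT}w-v)$ and $\varphi(u+v)$ at $v=S^{\rT}w$; using $K(\cdot,0)=J$ and the second-argument derivative of $K$ from (\ref{K}), this produces the coefficients $\sin(u^{\rT}\Theta S^{\rT}w)M\Theta(u+S^{\rT}w)$ (on $\varphi$) and $\sin(u^{\rT}\Theta S^{\rT}w)JM$ (on $\varphi'$). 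In the other cross-term the nominal coupling appears as $H(s)=iM\delta'(s)$, so the $\delta'$-integration is in $s$ at $s=0$; here the prefactor $\sin(u^{\rT}\Theta s)$ vanishes at $s=0$, so the only surviving contribution is the one in which the differentiation falls on $\sin(u^{\rT}\Theta s)$ itself, leaving $K$ undifferentiated and producing the genuinely new coefficient $K(u,S^{\rT}w)M\Theta u$ (on $\varphi$). Collecting the three coefficients, with the overall constant combining to $2i$ and $\Ups(w)^{\rT}$ factored out, yields the matrix kernel in (\ref{fBphi}). As an independent check one can instead evaluate $\fB(\Phi(t,\cdot))(u)=\bE\wt\cL(\cW_u(t))$ directly from (\ref{cLtilde}) using the Weyl commutator (\ref{CCRcomm}) and the quantizations (\ref{h0tilde})--(\ref{htilde}); it must reproduce (\ref{fBphi}) with $\varphi=\Phi$.
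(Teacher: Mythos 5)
Your proposal is correct and follows essentially the same route as the paper's proof: differentiating the kernel $V_{\eps}$ of $\fA_{\eps}$ at $\eps=0$ via the Leibniz rule into the Hamiltonian term plus two asymmetric coupling cross-terms, and then carrying out the $\delta$- and $\delta'$-integrations exactly as in (\ref{fB})--(\ref{fB12}), including the key observations that $K(\cdot,0)=J$, that the second-argument derivative of $K$ yields the $M\Theta(u+S^{\rT}w)$ coefficient, and that in the term with the nominal factor $H(s)=iM\delta'(s)$ only the derivative of the vanishing prefactor $\sin(u^{\rT}\Theta s)$ survives, producing $K(u,S^{\rT}w)M\Theta u$. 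The bookkeeping you identify as delicate is precisely the content of the paper's computations (\ref{fB1}) and (\ref{fB2}), so nothing is missing.
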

\begin{proof}
The kernel function of the operator $\fB$ can be
obtained by substituting (\ref{H0eps}) and (\ref{Heps}) into (\ref{V}) and differentiating the result
\begin{eqnarray}
\nonumber
  V_{\eps}(u,v)
  &:= &
    -2        \sin(u^{\rT}\Theta v)H_0^{\eps}(v)\\
\label{Veps}
   &  & -2 \int_{\mR^n}
            \sin(u^{\rT}\Theta s)H^{\eps}(s)^{\rT}
            K(u+s,s-v)
        H^{\eps}(v-s) \rd s
\end{eqnarray}
with respect to $\eps$:
\begin{eqnarray}
\nonumber
  \wt{V}(u,v)
   &:= &
  \d_{\eps}V_{\eps}(u,v)\big|_{\eps=0}\\
\nonumber
  &=&
    -2        \sin(u^{\rT}\Theta v)\wt{H}_0(v)\\
\nonumber
   &  & -2 i\int_{\mR^n}
            \sin(u^{\rT}\Theta s)\delta'(s)^{\rT}M^{\rT}
            K(u+s,s-v)
        \wt{H}(v-s) \rd s\\
\label{V'}
   &  & -2 i\int_{\mR^n}
            \sin(u^{\rT}\Theta s)\wt{H}(s)^{\rT}
            K(u+s,s-v)
        M\delta'(v-s) \rd s,
\end{eqnarray}
where the function $K$ is given by (\ref{K}) and $\wt{H}_0$ and $\wt{H}$ are the measures in (\ref{H0t}) and (\ref{Ht}). The integrals in (\ref{V'}) originate from differentiating the integral in (\ref{Veps}).
The integral operator $\fB$ with the kernel function $\wt{V}$ acts on a function $\varphi: \mR^n \to \mC$ as
\begin{eqnarray}
\nonumber
  \fB(\varphi)(u)
    &=& \int_{\mR^n}
  \wt{V}(u,v)\varphi(u+v)
  \rd v\\
\nonumber
  &=&
    -2
    \int_{\mR^n}
    \sin(u^{\rT}\Theta v)\wt{H}_0(v)\varphi(u+v)\rd v \\
\label{fB}
   & & +
    \fB_1(\varphi)(u) + \fB_2(\varphi)(u).
\end{eqnarray}
Here, the integral operators $\fB_1$ and $\fB_2$ are associated with the Weyl variations of the system-field coupling operators as
\begin{eqnarray}
\nonumber
    \fB_1(\varphi)(u)
    &:=
    &
     -2 i
     \int_{\mR^n}
            \sin(u^{\rT}\Theta s)\delta'(s)^{\rT}M^{\rT}
        \Big(
            \int_{\mR^n}
            \varphi(u+v)K(u+s,s-v)
        \wt{H}(v-s)
        \rd v
        \Big) \rd s\\
\nonumber
    &=
     & 2 i
     \div_s
      \Big(
            \sin(u^{\rT}\Theta s)
            M^{\rT}
            \int_{\mR^n}
            \varphi(u+v)
            K(u+s,s-v)
            \wt{H}(v-s)
            \rd v
      \Big)
      \Big|_{s=0}\\
\label{fB1}
    &=
     & 2i u^{\rT}\Theta
            M^{\rT}
            \int_{\mR^n}
            \varphi(u+v)
            K(u,-v)
            \wt{H}(v)
            \rd v,\\
\nonumber
    \fB_2(\varphi)(u)
     &:= &
     -2 i\int_{\mR^n}
            \sin(u^{\rT}\Theta s)
            \wt{H}(s)^{\rT}
            \Big(
                \int_{\mR^n}
                \varphi(u+v)
            K(u+s,s-v)
            M\delta'(v-s)
            \rd v
            \Big)
        \rd s\\
\nonumber
     &= &
     2 i\int_{\mR^n}
            \sin(u^{\rT}\Theta s)
            \wt{H}(s)^{\rT}
            \div_v
            \big(
                \varphi(u+v)
            K(u+s,s-v)
            M
            \big)
            \big|_{v=s}
        \rd s\\
\label{fB2}
     &= &
     2 i\int_{\mR^n}
            \sin(u^{\rT}\Theta s)
            \wt{H}(s)^{\rT}
            (
                JM\varphi'(u+s)
                +
                \varphi(u+s) M\Theta (u+s)
            )
        \rd s,
\end{eqnarray}
where the computations are similar to (\ref{fAphi}) and also employ the relation (\ref{fdeldash}) together with the structure of the function $K$ in (\ref{K}). From (\ref{fB1}) and (\ref{fB2}), it follows that
\begin{eqnarray}
\nonumber
    \hskip-3mm \fB_1(\varphi)(u)
    +
    \fB_2(\varphi)(u)
    &=&
    2 i
     u^{\rT}\Theta
            M^{\rT}
            \int_{\mR^n}
            \varphi(u+v)
            (-\sin(u^{\rT}\Theta v) I_m + \cos(u^{\rT}\Theta v) J)
            \wt{H}(v)
            \rd v\\
\nonumber
    & &+
     2 i\int_{\mR^n}
            \sin(u^{\rT}\Theta v)
            \wt{H}(v)^{\rT}
            \big(
            J
            M\varphi'(u+v)
            +
            \varphi(u+v)
            M\Theta (u+v)
            \big)
        \rd v\\
\nonumber
    &= &
    2i
    \int_{\mR^n}
    \wt{H}(v)^{\rT}
    \left[
    \begin{array}{cc}
        \sin(u^{\rT}\Theta v) M\Theta (2u+v) + \cos(u^{\rT}\Theta v)JM\Theta u
        &
        \sin(u^{\rT}\Theta v) JM
    \end{array}
    \right]
    \left[
    \begin{array}{c}
        \varphi(u+v)\\
        \varphi'(u+v)
    \end{array}
    \right]
    \rd v\\
\label{fB12}
    &= &
    2i
    \int_{\mR^n}
    \wt{H}(v)^{\rT}
    \left[
    \begin{array}{cc}
        \sin(u^{\rT}\Theta v) M\Theta (u+v) + K(u,v)M\Theta u
        &
        \sin(u^{\rT}\Theta v) JM
    \end{array}
    \right]
    \left[
    \begin{array}{c}
        \varphi(u+v)\\
        \varphi'(u+v)
    \end{array}
    \right]
    \rd v.
\end{eqnarray}
In combination with the structure of the functions $\wt{H}_0$ and $\wt{H}$ as the measures in (\ref{H0t}) and (\ref{Ht}),  substitution of (\ref{fB12}) into (\ref{fB}) leads to (\ref{fBphi})   .
\end{proof}

Application of the Duhamel formula \cite{E_1998,S_2008} 
to the solutions of the initial value problems for the linear evolutionary equations (\ref{Phi'dot}) and (\ref{mho'dot}) (with zero initial conditions) leads to
\begin{eqnarray}
\nonumber
    \wt{\Phi}(t,\cdot)
    & = &
    \int_0^t
    (\re^{(t-s)\fA}\fB \re^{s\fA})(\Phi(0,\cdot))
    \rd s\\
\label{Phi't}
    & = &
    \int_0^t
    \re^{(t-s)\fA}(\fB(\Phi(s,\cdot)))
    \rd s,\\
\nonumber
    \wt{\mho}(t,\cdot)
    & = &
    \int_0^t
    (\re^{(t-s)\fF}\fG \re^{s\fF})(\mho(0,\cdot))
    \rd s\\
\label{mho't}
    & = &
    \int_0^t
    \re^{(t-s)\fF}(\fG(\mho(s,\cdot)))
    \rd s
\end{eqnarray}
for any time $t\>0$.
The linear operators on the right-hand sides of (\ref{Phi't}) and (\ref{mho't}) also involve the parametric derivatives of the  exponential maps \cite{K_1976} from (\ref{Phisol}) and (\ref{trans}).
The infinite-horizon limits of these solutions provide the first-order corrections
\begin{eqnarray}
\label{Phi'inf}
    \wt{\Phi}_*
    & := &
    \lim_{t\to +\infty}
    \wt{\Phi}(t,\cdot)
    =
    \int_0^{+\infty}
    \re^{t\fA}(\fB(\Phi_*))
    \rd t
    =
    -\fA^{-1}(\fB(\Phi_*)),\\
\label{mho'inf}
    \wt{\mho}_*
    & := &
    \lim_{t\to +\infty}
    \wt{\mho}(t,\cdot)
    =
    \int_0^{+\infty}
    \re^{t\fF}(\fG(\mho_*))
    \rd t
    =
    -\fF^{-1}(\fG(\mho_*)) = \mF(\wt{\Phi}_*)
\end{eqnarray}
to the Gaussian QCF and QPDF $\Phi_*$  and $\mho_*$ of the nominal invariant state described in Section~\ref{sec:gauss}, with the limits being understood similarly to the weak convergence \cite{B_1968}, so that the convergence in (\ref{Phi'inf}) is pointwise.  Here, $-\fA^{-1}\fB$ and $-\fF^{-1}\fG$ denote the operators which relate the steady-state  solutions of (\ref{Phi'dot}) and (\ref{mho'dot}) to $\Phi_*$ and $\mho_*$ subject to the  normalization constraints
\begin{equation}
\label{Phimhonorm}
  \wt{\Phi}_*(0)=0,
  \qquad
  \int_{\mR^n}
  \wt{\mho}_*(x)\rd x
  =0.
\end{equation}
Since the QPDF perturbation $\wt{\mho}_*$ in (\ref{mho'inf}) is related by the spatial Fourier transform $\mF$ to the QCF perturbation $\wt{\Phi}_*$, we will be concerned  mainly  with the latter. This will also allow advantage to be made of the fact that the operator exponential $\re^{t\fA}$, associated with the fundamental solution of the first-order linear PDE (\ref{PhiPDE}), is simpler than $\re^{t\fF}$ for the second-order PDE (\ref{mhoPDE}).

\begin{thm}
\label{th:QCFvar}
Suppose the unperturbed OQHO has a Hurwitz matrix $A$ in (\ref{A}), and its Hamiltonian and coupling operators are subject to the Weyl variations (\ref{h0eps})--(\ref{htilde}). Then the corresponding perturbation (\ref{Phi'inf})  of the invariant QCF is related by
\begin{equation}
\label{QCFvar}
    \wt{\Phi}_*(u)
    =
    \int_{\mR^d}
    \big(
        F(u,w) \Psi(w) + G(u,w)^* \Ups(w)
    \big)
    \rd w,
    \qquad
    u \in \mR^n,
\end{equation}
to the strength functions $\Psi$ and $\Ups$ of the Weyl variations.
Here, the functions $F: \mR^n\x \mR^d \to \mR$ and $G:= (G_k)_{1\< k \< m}: \mR^n\x \mR^d \to \mC^m$ are computed as
\begin{eqnarray}
\label{F0}
  F(u,w)
  &:= &
  2
      \re^{
        -\frac{1}{2} (\|u\|_P^2+\|S^{\rT}w\|_P^2)
    }
  \int_0^{+\infty}
  \Im
  \re^{-u^{\rT}\re^{tA}(P+i\Theta) S^{\rT}w}
  \rd t,\\
\nonumber
  G(u,w)
  &:= &
    -2i
          \re^{
        -\frac{1}{2} (\|u\|_P^2+\|S^{\rT}w\|_P^2)
    }
    \int_0^{+\infty}
    \big(
        \sin(u^{\rT}\re^{tA}\Theta S^{\rT}w) D (\re^{tA^{\rT}}u+S^{\rT}w)
        +
        K(\re^{tA^{\rT}}u,S^{\rT}w)M\Theta \re^{tA^{\rT}}u
    \big)\\
\label{G0}
   & & \x
        \re^{
        -
        u^{\rT}\re^{tA} P S^{\rT}w
    }
    \rd t,
\end{eqnarray}
where
\begin{equation}
\label{D}
    D := M\Theta-JMP.
\end{equation}
Here, the function $K$ is given by (\ref{K}), $M$ is the coupling matrix for the unperturbed oscillator  in (\ref{hhOQHO2}), and $P$ is the infinite-horizon controllability Gramian from (\ref{Siginv}).
\hfill$\square$
\end{thm}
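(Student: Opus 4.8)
The plan is to evaluate the infinite-horizon limit $\wt{\Phi}_* = \int_0^{+\infty}\re^{t\fA}(\fB(\Phi_*))\rd t$ from (\ref{Phi'inf}) by composing three explicit ingredients: the action of $\fB$ on the invariant Gaussian QCF $\Phi_*$ furnished by Lemma~\ref{lem:fB}, the operator exponential $\re^{t\fA}$ from (\ref{Phisol}), and the Gramian identity (\ref{PSig}), which collapses the resulting quadratic exponents into the closed form appearing in (\ref{F0}) and (\ref{G0}). Since $A$ is Hurwitz, $\re^{tA}\to 0$ as $t\to+\infty$, so every integrand below decays exponentially in $t$, which justifies both the convergence of the $t$-integrals and the interchange of $\re^{t\fA}$ with the $\mR^d$-integration in $\fB$.

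First I would specialise (\ref{fBphi}) to $\varphi = \Phi_*$, using $\Phi_*(v) = \re^{-\frac{1}{2}\|v\|_P^2}$ together with its gradient $\Phi_*'(v) = -Pv\,\Phi_*(v)$. In the $\Ups$-block of (\ref{fBphi}) the column $[\Phi_*(u+S^{\rT}w);\,\Phi_*'(u+S^{\rT}w)]$ factors as $[1;\,-P(u+S^{\rT}w)]\,\Phi_*(u+S^{\rT}w)$, so the matrix-vector product reduces to $\big(\sin(u^{\rT}\Theta S^{\rT}w)(M\Theta - JMP)(u+S^{\rT}w) + K(u,S^{\rT}w)M\Theta u\big)\Phi_*(u+S^{\rT}w)$; this is exactly where the combination $D = M\Theta - JMP$ of (\ref{D}) emerges. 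The outcome is that $\fB(\Phi_*)(u)$ is a single $\mR^d$-integral against $\Psi$ and $\Ups$ whose integrand is a real scalar ($\Psi$-part) or a real $m$-vector ($\Ups$-part) multiplied by $\Phi_*(u+S^{\rT}w)$.

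Next I would apply $\re^{t\fA}$ through (\ref{Phisol}), which substitutes $u\mapsto\re^{tA^{\rT}}u$ everywhere and multiplies by the Gaussian factor $\re^{-\frac{1}{2}\|u\|_{\Sigma(t)}^2}$. The crucial simplification is the exponent bookkeeping: expanding $\|\re^{tA^{\rT}}u+S^{\rT}w\|_P^2 = u^{\rT}\re^{tA}P\re^{tA^{\rT}}u + 2u^{\rT}\re^{tA}PS^{\rT}w + \|S^{\rT}w\|_P^2$ and adding $\|u\|_{\Sigma(t)}^2$, the identity (\ref{PSig}) turns $u^{\rT}\re^{tA}P\re^{tA^{\rT}}u + u^{\rT}\Sigma(t)u$ into $\|u\|_P^2$. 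Hence the combined Gaussian factor collapses to the $t$-independent prefactor $\re^{-\frac{1}{2}(\|u\|_P^2+\|S^{\rT}w\|_P^2)}$ times the $t$-dependent cross-term $\re^{-u^{\rT}\re^{tA}PS^{\rT}w}$, precisely as in (\ref{F0}) and (\ref{G0}). Under the same substitution, $\sin(u^{\rT}\Theta S^{\rT}w)$ becomes $\sin(u^{\rT}\re^{tA}\Theta S^{\rT}w)$ and $K(u,S^{\rT}w)$ becomes $K(\re^{tA^{\rT}}u,S^{\rT}w)$.

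Finally I would integrate over $t\in[0,+\infty)$ and read off $F$ and $G$. For the $\Psi$-part, rewriting $-2\sin(u^{\rT}\re^{tA}\Theta S^{\rT}w)\,\re^{-u^{\rT}\re^{tA}PS^{\rT}w}$ as $2\,\Im\,\re^{-u^{\rT}\re^{tA}(P+i\Theta)S^{\rT}w}$ produces (\ref{F0}). For the $\Ups$-part, the integrand is a real $m$-vector carrying the leading constant $2i$; pairing $\Ups(w)^{\rT}$ against it and matching the result with the stated form $G(u,w)^*\Ups(w)$ forces $G$ to be the complex conjugate of the $t$-integral, which flips $2i$ into $-2i$ (the real vector and real Gaussian factors being unaffected) and gives (\ref{G0}). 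The normalisation $\wt{\Phi}_*(0)=0$ in (\ref{Phimhonorm}) is then automatic, since $\Im\,\re^0=0$ and the $\sin(\cdot)$ and $M\Theta u$ factors vanish at $u=0$. The main obstacle is the exponent bookkeeping of the third step: one must track the quadratic form carefully under $u\mapsto\re^{tA^{\rT}}u$ and invoke (\ref{PSig}) so that the $t$-dependence survives only through the single cross-term $u^{\rT}\re^{tA}PS^{\rT}w$, while simultaneously ensuring in the second step that the gradient contribution assembles correctly into $D$.
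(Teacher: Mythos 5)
Your proposal is correct and follows essentially the same route as the paper's proof: evaluate $\fB(\Phi_*)$ via Lemma~\ref{lem:fB} with $\Phi_*'(v)=-\Phi_*(v)Pv$ (producing $D=M\Theta-JMP$), push it through $\re^{t\fA}$ from (\ref{Phisol}), and collapse the exponents with the Gramian identity (\ref{PSig}), with convergence guaranteed by $A$ being Hurwitz. The sign bookkeeping (the $-2\sin(\cdot)$ to $2\,\Im\,\re^{-u^{\rT}\re^{tA}(P+i\Theta)S^{\rT}w}$ rewriting, and the conjugation flipping $2i$ to $-2i$ to match $G^*\Ups$) also agrees with the paper.
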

\begin{proof}
By combining the second equality in (\ref{Phi'inf}) with the structure of the operator exponential $\re^{t\fA}$ in (\ref{Phisol}), it follows that
\begin{equation}
\label{QCFvar1}
    \wt{\Phi}_*(u)
    =
    \int_0^{+\infty}
    \fB(\Phi_*)(\re^{tA^{\rT}}u)
    \re^{-\frac{1}{2}\|u\|_{\Sigma(t)}^2}
    \rd t,
    \qquad
    u \in \mR^n,
\end{equation}
where use is made of the finite-horizon controllability Gramian  $\Sigma(t)$ from (\ref{Sigma}). Evaluation of the operator $\fB$ from (\ref{fBphi}) at the unperturbed invariant Gaussian QCF $\Phi_*$ in (\ref{Phi*})
leads to
\begin{eqnarray}
\nonumber
  \fB(\Phi_*)(u)
    &=&
    -2
    \int_{\mR^d}
    \sin(u^{\rT}\Theta S^{\rT}w)\Psi(w) \Phi_*(u+S^{\rT} w)\rd w \\
\nonumber
    && +
    2i
    \int_{\mR^d}
    \Ups(w)^{\rT}
    {\small
    \left[
    \begin{array}{cc}
        \sin(u^{\rT}\Theta S^{\rT}w) M\Theta (u+S^{\rT}w) + K(u,S^{\rT}w)M\Theta u
        &
        \sin(u^{\rT}\Theta S^{\rT}w) JM
    \end{array}
    \right]}
    {\small
    \left[
    \begin{array}{c}
        \Phi_*(u+S^{\rT}w)\\
        \Phi_*'(u+S^{\rT}w)
    \end{array}
    \right]}
    \rd w\\
\nonumber
    &=&
    -2
    \int_{\mR^d}
    \sin(u^{\rT}\Theta S^{\rT}w)\Psi(w) \Phi_*(u+S^{\rT} w)\rd w \\
\label{fBPhi*}
    && +
    2i
    \int_{\mR^d}
    \Ups(w)^{\rT}
    \big(
        \sin(u^{\rT}\Theta S^{\rT}w) D (u+S^{\rT}w) + K(u,S^{\rT}w)M\Theta u
    \big)
    \Phi_*(u+S^{\rT}w)
    \rd w,
\end{eqnarray}
where use is also made of the relation $\Phi_*'(v) = -\Phi_*(v)Pv$ for all $v\in \mR^n$ (following from the quadratic-exponential structure of $\Phi_*$) and the matrix $D$ from (\ref{D}). Substitution of (\ref{fBPhi*}) into (\ref{QCFvar1}) establishes the integral representation (\ref{QCFvar}) with the kernel functions
\begin{eqnarray}
\label{F}
  F(u,w)
  &:= &
  -2\int_0^{+\infty}
  \sin(u^{\rT}\re^{tA} \Theta S^{\rT} w)
  \Phi_*(\re^{tA^{\rT}} u + S^{\rT}w)
  \re^{-\frac{1}{2}\|u\|_{\Sigma(t)}^2}
  \rd t,\\
\nonumber
  G(u,w)
  &:= &
    -2i
    \int_0^{+\infty}
    \big(
        \sin(u^{\rT}\re^{tA}\Theta S^{\rT}w) D (\re^{tA^{\rT}}u+S^{\rT}w)
        +
        K(\re^{tA^{\rT}}u,S^{\rT}w)M\Theta \re^{tA^{\rT}}u
    \big)\\
\label{G}
   & & \x
        \Phi_*(\re^{tA^{\rT}}u+S^{\rT}w)
    \re^{-\frac{1}{2}\|u\|_{\Sigma(t)}^2}
    \rd t.
\end{eqnarray}
By using (\ref{Phi*}) and (\ref{PSig}), it follows that
\begin{eqnarray}
\nonumber
    \Phi_*(\re^{tA^{\rT}}u+v)\re^{-\frac{1}{2}\|u\|_{\Sigma(t)}^2}
    & = &
    \exp
    \Big(
        -\frac{1}{2} \|u\|_{\re^{tA} P \re^{tA^{\rT}} + \Sigma(t)}^2
        -
        u\re^{tA} P v
        -
        \frac{1}{2}\|v\|_P^2
    \Big)\\
\label{Phiexp}
    & = &
    \re^{
        -\frac{1}{2} (\|u\|_P^2+\|v\|_P^2)
        -
        u^{\rT}\re^{tA} P v
    }
\end{eqnarray}
for any $t\> 0$ and $u,v\in \mR^n$. Substitution of (\ref{Phiexp}) with $v:= S^{\rT}w$ into (\ref{F}) and (\ref{G}), leads to (\ref{F0}) and (\ref{G0}), where the convergence of the integrals is secured by the exponentially fast decay of the integrands since  the matrix $A$ is Hurwitz.
\end{proof}

The functions $F$ and $G$ in (\ref{F0}) and (\ref{G0}) depend on the parameters of the unperturbed oscillator and can be identified with the Frechet derivatives $\d_{\Psi} \wt{\Phi}_*$ and $\d_{\Ups} \wt{\Phi}_*$ of the QCF perturbation $\wt{\Phi}_*$ with respect to the strength functions $\Psi$ and $\Ups$ of the Weyl variations in the energy operators. These derivatives describe the response of the invariant state of the quantum system in the framework of the infinitesimal perturbation analysis and allow the QCF corrections to be calculated for particular strength functions.

\subsection*{Example 4}
In the setting of Example~3 (with unperturbed system-field coupling operators), we will now compute the QCF correction in response to the Gaussian-shaped variation (\ref{phiMorse}) of the potential energy (see also Examples 1 and 2). By substituting its Fourier transform $\Psi$ from (\ref{Psigauss}) together with the influence function $F$ from (\ref{F0}) into (\ref{QCFvar}) and letting $\Ups = 0$, it follows that
\begin{eqnarray}
\nonumber
    \wt{\Phi}_*(u)
    & = &
    \int_{\mR^d}
    F(u,w) \Psi(w)
    \rd w\\
\nonumber
    & = &
    \frac{2\alpha \sqrt{\det \Lambda}}{(2\pi)^{d/2}}
      \re^{
        -\frac{1}{2} \|u\|_P^2
    }
    \int_{\mR^d}
    \Big(
      \re^{
        -\frac{1}{2} \|S^{\rT}w\|_P^2
    }
  \int_0^{+\infty}
  \Im
  \re^{-u^{\rT}\re^{tA}(P+i\Theta) S^{\rT}w}
  \rd t
  \Big)
    \re^{-iw^{\rT}\gamma-\frac{1}{2}\|w\|_{\Lambda}^2}
    \rd w\\
\nonumber
    & = &
    \frac{\alpha \sqrt{\det \Lambda}}{(2\pi)^{d/2}i}
      \re^{
        -\frac{1}{2} \|u\|_P^2
    }
  \int_0^{+\infty}
  \Big(
    \int_{\mR^d}
      \re^{
        -\frac{1}{2} \|w\|_{\Xi}^2
    }
  \Big(
    \re^{-u^{\rT}\re^{tA}(P+i\Theta) S^{\rT}w - iw^{\rT}\gamma}
    -
    \re^{-u^{\rT}\re^{tA}(P-i\Theta) S^{\rT}w-iw^{\rT}\gamma}
  \Big)
    \rd w
    \Big)
    \rd t\\
\nonumber
    & = &
    \frac{\alpha \sqrt{\det \Lambda}}{(2\pi)^{d/2}i}
      \re^{
        -\frac{1}{2} \|u\|_P^2
    }
  \int_0^{+\infty}
  \Big(
    \int_{\mR^d}
      \re^{
        -\frac{1}{2} \|w\|_{\Xi}^2
    }
  \Big(
    \re^{-w^{\rT}\sigma_-(t,u)}
    -
    \re^{-w^{\rT}\sigma_+(t,u)}
  \Big)
    \rd w
    \Big)
    \rd t\\
\label{QCFcorr}
    & = &
    \alpha i
    \sqrt{\frac{\det \Lambda}{\det \Xi}}
      \re^{
        -\frac{1}{2} \|u\|_P^2
    }
  \int_0^{+\infty}
  \Big(
  \re^{\frac{1}{2}\sigma_+(t,u)^{\rT}\Xi^{-1}\sigma_+(t,u)}
  -
  \re^{\frac{1}{2}\sigma_-(t,u)^{\rT}\Xi^{-1}\sigma_-(t,u)}
  \Big)
    \rd t.
\end{eqnarray}
Here, use is made of the moment-generating function
$$
    \frac{\sqrt{\det \Xi}}{(2\pi)^{d/2}}
    \int_{\mR^d}
    \re^{-\frac{1}{2}\|w\|_{\Xi}^2 + \sigma^{\rT} w}
    \rd w
    =
    \re^{\frac{1}{2}\sigma^{\rT}\Xi^{-1}\sigma},
    \qquad
    \sigma \in \mC^d,
$$
for a zero-mean Gaussian distribution  in $\mR^d$ with the covariance matrix $\Xi^{-1}$, where
\begin{equation}
\label{Xi}
  \Xi
  :=
  SPS^{\rT} + \Lambda
  =
  P_{11}+\Lambda
\end{equation}
is a real positive definite symmetric matrix of order $d$. In view of the structure of the matrix $S$ in (\ref{SId}), the block
$P_{11} \in \mR^{d\x d}$ of the matrix
\begin{equation}
\label{Pblocks}
    P
    :=
    \left[
    \begin{array}{cc}
      P_{11} & P_{12} \\
      P_{21} & P_{22}
    \end{array}
    \right]
\end{equation}
from (\ref{Siginv})
is associated with the position variables in accordance with the partitioning (\ref{Xqp}) and (\ref{ABblocks}).
Also, $\sigma_+,\sigma_-: \mR_+ \x \mR^n \to \mC^d$ are auxiliary functions given by
\begin{eqnarray}
\nonumber
  \sigma_{\pm}(t,u)
  & := &
  S(P\pm i\Theta) \re^{tA^{\rT}}u + i\gamma\\
\label{sig+-}
  & = &
  \left[
  \begin{array}{cc}
    P_{11} & P_{12} \pm \frac{i}{2}I_d
  \end{array}
  \right]
  \re^{tA^{\rT}}u + i\gamma,
\end{eqnarray}
where (\ref{Pblocks}) is combined with the structure of the CCR matrix $\Theta$ from  (\ref{Thetaqp}).
Since the functions in (\ref{sig+-}) converge to
a common limit $\lim_{t\to +\infty}\sigma_{\pm}(t,u) = \sigma_{\pm}(0,0) = i\gamma$ exponentially fast due to $A$ being Hurwitz, the integral in (\ref{QCFcorr}) is convergent for any $u\in \mR^n$.
\hfill$\blacktriangle$

The presence of double exponentials (such as $      \re^{
        -
        u^{\rT}\re^{tA} P S^{\rT}w
    }
$) in the integrals (\ref{F0}), (\ref{G0}) and (\ref{QCFcorr}) complicates their closed-form evaluation. Upper bounds for the influence functions
$F$ and $G$, which relate the QCF perturbation to the Weyl variations in Theorem~\ref{th:QCFvar}, will be discussed  in Section~\ref{sec:gain}. These functions
also encode more tractable first-order corrections  to the mean vector and the second-moment matrix for the system variables in the invariant (in general, non-Gaussian) state under such variations:
\begin{equation}
\label{muP}
    \wt{\mu}
    :=
    \d_{\eps}\bE_{\eps} X\big|_{\eps=0},
    \qquad
    \wt{P}
    :=
    \d_{\eps}\bE_{\eps} (XX^{\rT})\big|_{\eps=0},
\end{equation}
where $\bE_{\eps}(\cdot)$ denotes the quantum expectation over the invariant state of the perturbed system. Note that the constant imaginary part $\Im \bE_{\eps} (XX^{\rT}) = \Theta$ does not contribute to the parametric derivative in (\ref{muP}).    Therefore,
the corresponding truncations of the  asymptotic series expansions for these moments in the perturbed invariant state take the form
\begin{equation}
\label{muPexp}
    \bE_{\eps} X
    =
    \eps\wt{\mu} + \ldots,
    \qquad
    \Re \bE_{\eps} (XX^{\rT})
    =
    P + \eps\wt{P} + \ldots,
\end{equation}
where use is made of the zero mean vector and the quantum covariance matrix $P+i\Theta$ of the unperturbed invariant Gaussian state. In view of the first of the relations (\ref{muPexp}), the system variables in the perturbed invariant state acquire nonzero mean values.

\begin{thm}
\label{th:muP}
Suppose the unperturbed OQHO has a Hurwitz matrix $A$ in (\ref{A}), and its Hamiltonian and coupling operators are subject to the Weyl variations (\ref{h0eps})--(\ref{htilde}). Then the first-order corrections (\ref{muP}) for the mean vector and the second-moment matrix of the perturbed invariant state are related to the strength functions $\Psi$ and $\Ups$ of the Weyl variations by
\begin{eqnarray}
\label{mutilde0}
    \wt{\mu}
    & = &
    -2i
    A^{-1}
    \Theta
    \int_{\mR^d}
    \re^{
        -\frac{1}{2} \|S^{\rT}w\|_P^2
    }
    \big(
        \Psi(w)
        S^{\rT}w
        -
        i
        \big(
            S^{\rT}ww^{\rT}S
            D^{\rT}
            +
            M^{\rT}J
        \big)
        \Ups(w)
    \big)
    \rd w,\\
\label{Ptilde0}
    \wt{P}
    & = &
    -4
    \int_{\mR^d}
    \re^{
        -\frac{1}{2} \|S^{\rT}w\|_P^2
    }
    \Big(
        \Psi(w)Q(w)
        +
        i
        \sum_{k=1}^m
        \Ups_k(w)
        R_k(w)
    \Big)
    \rd w.
\end{eqnarray}
Here, the matrix-valued  functions $Q:\mR^d \to \mR^{n\x n}$ and $R_1, \ldots, R_m: \mR^d \to \mR^{n\x n}$ are, respectively, quadratic and cubic polynomials defined as the unique solutions of the ALEs
\begin{eqnarray}
\label{QALE}
    A Q(w) &+ & Q(w)A^{\rT} +
    \bS(\Theta S^{\rT}ww^{\rT}S P)
     =0,\\
\label{RALE}
    A R_k(w) & + & R_k(w)A^{\rT}
     +
    \bS
    \big(
        \Theta S^{\rT}w D_{k\bullet} (I_n - S^{\rT}ww^{\rT}SP)
        +
        \Theta S^{\rT}w (M\Theta)_{k\bullet}
        -
        PS^{\rT}w (JM\Theta)_{k\bullet}
    \big)
     = 0
\end{eqnarray}
for all $w\in \mR^d$, where $\bS(N):= \frac{1}{2}(N+N^{\rT})$  denotes the symmetrizer of square matrices, the matrix $D$ is given by (\ref{D}), and $(\cdot)_{k\bullet} $ is the $k$th row of a matrix.
\hfill$\square$
\end{thm}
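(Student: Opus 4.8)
The plan is to obtain $\wt\mu$ and $\wt P$ not from the perturbed dynamics directly, but from the first-order correction $\wt\Phi_*$ of the invariant QCF already furnished by Theorem~\ref{th:QCFvar}, using the moment-extraction identities (\ref{EX}) and (\ref{EXX}). Linearising these in $\eps$ and interchanging the $\eps$- and $u$-derivatives (justified by the smoothness and exponentially fast decay of the integrands since $A$ is Hurwitz, and by the absolute integrability of $\Psi$, $\Ups$), the definitions (\ref{muP}) become
\begin{equation*}
    \wt\mu = -i\,\d_u\wt\Phi_*(0),
    \qquad
    \wt P = -\d_u^2\wt\Phi_*(0),
\end{equation*}
so it suffices to differentiate the integral representation (\ref{QCFvar}) once and twice at $u=0$ under the integral sign. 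This reduces the whole theorem to evaluating $\d_u F(0,w)$, $\d_u^2 F(0,w)$ and the analogous derivatives of $G(u,w)^*$ for the influence functions $F$, $G$ in (\ref{F0}) and (\ref{G0}).

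First I would treat the Hamiltonian ($\Psi$) contribution through $F$. The key simplification is that the Gaussian prefactor $\re^{-\frac{1}{2}(\|u\|_P^2+\|S^\rT w\|_P^2)}$ has vanishing $u$-gradient at the origin, while the time-integrand $\Im\re^{-u^\rT\re^{tA}(P+i\Theta)S^\rT w}$ vanishes at $u=0$; hence $F(0,w)=0$ (consistent with the normalisation $\wt\Phi_*(0)=0$ in (\ref{Phimhonorm})) and the product rule kills all cross terms at $u=0$. The first derivative then reduces to the time integral $\int_0^{+\infty}\re^{tA}\,\rd t=-A^{-1}$ acting on $\Theta S^\rT w$, producing the $\Psi$-term of (\ref{mutilde0}). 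The second derivative similarly collapses to $2\,\re^{-\frac{1}{2}\|S^\rT w\|_P^2}\int_0^{+\infty}\re^{tA}(\Theta S^\rT w\,w^\rT SP - PS^\rT w\,w^\rT S\Theta)\re^{tA^\rT}\,\rd t$. Recognising this as twice the standard Hurwitz-stable solution $\int_0^{+\infty}\re^{tA}W\re^{tA^\rT}\,\rd t$ of $AQ+QA^\rT+W=0$ with $W=\bS(\Theta S^\rT w\,w^\rT SP)$ identifies the quadratic polynomial $Q$ via the ALE (\ref{QALE}) and yields the $\Psi$-term of (\ref{Ptilde0}).

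Next the coupling ($\Ups$) contribution through $G^*$ proceeds in the same spirit but with heavier bookkeeping. Again $G(0,w)=0$ (the $\sin$ factor and the linear factor $M\Theta\re^{tA^\rT}u$ both vanish at $u=0$, while $K(0,S^\rT w)=J$), so the prefactor and the double-exponential $\re^{-u^\rT\re^{tA}PS^\rT w}$ contribute only at second order and above. For $\wt\mu$ one expands the bracket in (\ref{G0}) to first order, obtaining $-D S^\rT w\,w^\rT S\Theta\re^{tA^\rT}+JM\Theta\re^{tA^\rT}$, integrates against $\int_0^{+\infty}\re^{tA^\rT}\,\rd t=-(A^{-1})^\rT$, and then applies the conjugate transpose together with $\Theta^\rT=-\Theta$, $J^\rT=-J$ to recover the $\Ups$-term of (\ref{mutilde0}). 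For $\wt P$ one needs the full Hessian of $G^*$ at the origin: here the $\sin$, the matrix $K$ expanded to first order in its first argument, the two linear factors $D\re^{tA^\rT}u$ and $M\Theta\re^{tA^\rT}u$, and the double-exponential all contribute cross terms, which assemble into a time integral $\int_0^{+\infty}\re^{tA}W_k(w)\re^{tA^\rT}\,\rd t$ whose symmetrised kernel is exactly the cubic expression in (\ref{RALE}), thereby identifying each $R_k$ as the corresponding ALE solution.

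I expect the Hessian of the coupling part to be the \emph{main obstacle}: it is the only step combining all the $u$-dependent factors of (\ref{G0}) at second order, and correctly organising the outer products, the symmetriser $\bS$, the complex conjugation, and the reduction to a single ALE with the cubic kernel of (\ref{RALE}) demands careful matching. Two structural facts make it tractable and should be flagged: first, the reality of $\wt P$ — so that only $\Re\bE_\eps(XX^\rT)$, and hence the symmetric part of each kernel, matters (consistent with the remark that $\Im\bE_\eps(XX^\rT)=\Theta$ is $\eps$-independent) — which is precisely what forces the appearance of $\bS$; and second, that every time integral is a convergent Hurwitz-stable Lyapunov integral, so each matrix coefficient is the unique solution of an ALE rather than an intractable double exponential.
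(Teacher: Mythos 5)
Your proposal follows essentially the same route as the paper's own proof: both differentiate the integral representation (\ref{QCFvar}) of $\wt{\Phi}_*$ at $u=0$ via (\ref{EX})--(\ref{EXX}), exploit that $F(0,w)=G(0,w)=0$ and that the Gaussian prefactor has vanishing gradient at the origin, and identify the resulting Lyapunov integrals $\int_0^{+\infty}\re^{tA}(\cdot)\re^{tA^{\rT}}\rd t$ with the ALE solutions $Q(w)$ and $R_k(w)$. The details you sketch (the reduction of $\d_uF|_{u=0}$ to $A^{-1}\Theta S^{\rT}w$, the symmetrised kernel for $\d_u^2F$, the first-order expansion of the bracket in $G$, and the conjugate-transpose bookkeeping for the $\Ups$-terms) all match the paper's computations (\ref{F'})--(\ref{G''}).
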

\begin{proof}
By differentiating the representation (\ref{QCFvar}) for the QCF perturbation $\wt{\Phi}_*$ in accordance with (\ref{EX}), (\ref{EXX}) and (\ref{muP}), it follows that
\begin{eqnarray}
\label{mutilde}
    \wt{\mu}
    & = &
    -i\wt{\Phi}_*'(0)
    =
    -i
    \int_{\mR^d}
    \big(
        \Psi(w)\d_uF(u,w)\big|_{u=0}  + \big(\d_uG(u,w)\big|_{u=0}\big)^* \Ups(w)
    \big)
    \rd w,\\
\label{Ptilde}
    \wt{P}
    & = &
    -\wt{\Phi}_*''(0)
    =
    -
    \int_{\mR^d}
    \Big(
        \Psi(w)\d_u^2F(u,w)\big|_{u=0}  + \sum_{k=1}^m\Ups_k(w)\overline{\d_u^2G_k(u,w)}\big|_{u=0}
    \Big)
    \rd w,
\end{eqnarray}
where $\d_uG(u,w)$ is the Jacobian $(m\x n)$-matrix of $G$, and  $\d_u^2G_k(u,w)$ is the Hessian matrix of the $k$th coordinate function of $G$ with respect to $u\in \mR^n$. Now, (\ref{F0}) leads to
\begin{eqnarray}
\nonumber
    \d_u F(u,w)\big|_{u=0}
    & = &
    \Big(
    -F(u,w)Pu
    -
    2
    \re^{
        -\frac{1}{2} (\|u\|_P^2+\|S^{\rT}w\|_P^2)
    }
  \int_0^{+\infty}
  \Im
  \big(
  \re^{-u^{\rT}\re^{tA}(P+i\Theta) S^{\rT}w}
  \re^{tA}(P+i\Theta) S^{\rT}w
  \big)
  \rd t
  \Big)
  \Big|_{u=0}\\
\nonumber
  & = &
  -2
    \re^{
        -\frac{1}{2} \|S^{\rT}w\|_P^2
    }
    \int_0^{+\infty}
    \re^{tA} \rd t\,
     \Theta S^{\rT}w  \\
\label{F'}
  & = &
  2
    \re^{
        -\frac{1}{2} \|S^{\rT}w\|_P^2
    }
    A^{-1}
    \Theta S^{\rT}w,\\
\nonumber
    \d_u^2 F(u,w)\big|_{u=0}
    & = &
    2
    \re^{
        -\frac{1}{2} \|S^{\rT}w\|_P^2
    }
  \int_0^{+\infty}
  \re^{tA}
  \Im
  \big(
  (P+i\Theta) S^{\rT}ww^{\rT}S (P-i\Theta)
  \big)
  \re^{tA^{\rT}}
  \rd t\\
\nonumber
  & = &
  2
    \re^{
        -\frac{1}{2} \|S^{\rT}w\|_P^2
    }
      \int_0^{+\infty}
  \re^{tA}
  \big(
  \Theta S^{\rT}ww^{\rT}S P
  -
  P S^{\rT}ww^{\rT}S \Theta
  \big)
  \re^{tA^{\rT}}
  \rd t\\
\label{F''}
  & = &
  4
    \re^{
        -\frac{1}{2} \|S^{\rT}w\|_P^2
    }
  Q(w),
\end{eqnarray}
where (similarly to (\ref{Siginv}) and (\ref{PALE})) the last integral is the solution $Q(w)$ of the ALE (\ref{QALE}) which depends parametrically on $w\in \mR^d$ in a quadratic fashion. Also, (\ref{G0}) yields
\begin{eqnarray}
\nonumber
    \d_u G(u,w)\big|_{u=0}
   & = &
    2i
          \re^{
        -\frac{1}{2} \|S^{\rT}w\|_P^2
    }
    \int_0^{+\infty}
    \big(
    D S^{\rT}ww^{\rT}S
    -
    JM
    \big)
    \Theta \re^{tA^{\rT}}
    \rd t\\
\label{G'}
    & = &
    -2i
          \re^{
        -\frac{1}{2} \|S^{\rT}w\|_P^2
    }
    \big(
    D S^{\rT}ww^{\rT}S
    -
    JM
    \big)
    \Theta
    A^{-\rT},\\
\nonumber
  \d_u^2 G_k(u,w)\big|_{u=0}
  &= &
    -2i
          \re^{
        -\frac{1}{2} \|S^{\rT}w\|_P^2
    }\\
\nonumber
   & & \x
    \int_0^{+\infty}
    \d_u^2
    \Big(
    \big(
        \sin(u^{\rT}\re^{tA}\Theta S^{\rT}w)
        D_{k\bullet} (\re^{tA^{\rT}}u+S^{\rT}w)\\
\nonumber
       & & +
        (K(\re^{tA^{\rT}}u,S^{\rT}w)
        M\Theta)_{k\bullet} \re^{tA^{\rT}}u
    \big)
        \re^{
        -
        u^{\rT}\re^{tA} P S^{\rT}w
    }
    \Big)\Big|_{u=0}
    \rd t\\
\nonumber
    &= &
    -4i
          \re^{
        -\frac{1}{2} \|S^{\rT}w\|_P^2
    }\\
\nonumber
    &&\x
    \int_0^{+\infty}
    \re^{tA}
    \bS
    \big(
        \Theta S^{\rT}w D_{k\bullet} (I_n - S^{\rT}ww^{\rT}SP)
        +
        \Theta S^{\rT}w (M\Theta)_{k\bullet}
        -
        PS^{\rT}w (JM\Theta)_{k\bullet}
    \big)
    \re^{tA^{\rT}}
    \rd t\\
\label{G''}
    &= &
    -4i
          \re^{
        -\frac{1}{2} \|S^{\rT}w\|_P^2
    }R_k(w)
\end{eqnarray}
for all $k=1, \ldots, m$,
where $A^{-\rT}:= (A^{-1})^{\rT}$, and the matrix $R_k(w)$ is the solution of the ALE (\ref{RALE}) which is a cubic function of $w\in \mR^d$.
Substitution of (\ref{F'}) and (\ref{G'}) into (\ref{mutilde}) leads to (\ref{mutilde0}). Similarly, substitution of (\ref{F''}) and (\ref{G''}) into (\ref{Ptilde}) establishes (\ref{Ptilde0}).
\end{proof}

Note that in the setting of Example~3 (where only the Hamiltonian is perturbed), the mean vector correction $\wt{\mu}$ can be obtained by averaging the drift of the QSDE (\ref{dXhlin}) over the unperturbed invariant Gaussian state with the QCF $\Phi_*$ in (\ref{Phi*}) as
\begin{eqnarray}
\nonumber
    0
    &= &
    A\wt{\mu}+2 i\Theta S^{\rT}\int_{\mR^d} \Psi(v) v\bE \cW_{S^{\rT}v}\rd v\\
\nonumber
    &=&
    A\wt{\mu}+2 i\Theta S^{\rT}\int_{\mR^d} \Psi(v) \Phi_*(S^{\rT}v)v\rd v\\
\nonumber
    &=&
    A\wt{\mu}+2 i\Theta S^{\rT}\int_{\mR^d} \Psi(v) \re^{-\frac{1}{2}\|S^{\rT}v\|_P^2}v\rd v,
\end{eqnarray}
which leads to
$$
    \wt{\mu}= -2 iA^{-1}\Theta S^{\rT}\int_{\mR^d} \Psi(v) \re^{-\frac{1}{2}\|S^{\rT}v\|_P^2}v\rd v
$$
in accordance with (\ref{mutilde0}) with $\Ups = 0$. This is yet  another demonstration of the origin of the quadratic-exponential factors in the correction terms.

\section{Norm bounds for the perturbation operators}\label{sec:gain}

For what follows, we will now specify a class of the strength functions $\Psi$ and $\Ups$ of the Weyl variations (\ref{h0tilde}) and (\ref{htilde}) in the energy operators. More precisely, they are assumed to be square integrable with a weight in the sense that
\begin{equation}
\label{PsiUpsgood}
    \sn \Psi \sn_{\theta}<+\infty,
    \qquad
    \sn \Ups\sn_{\theta}<+\infty
\end{equation}
for some $\theta>0$. The parameter $\theta$ specifies the following weighted $L^2$-norm \cite{V_2015c} for a scalar or vector-valued function $\varphi$ on $\mR^d$:
\begin{equation}
\label{sn}
    \sn \varphi \sn_{\theta}
     :=
    \sqrt{
    \int_{\mR^d}
    |\varphi(w)|^2
    \re^{\theta |w|^2}
    \rd w
    }.
\end{equation}
The fulfillment of (\ref{PsiUpsgood}) ensures that $\Psi$ and $\Ups$ are not only square integrable, but are also absolutely integrable in view of the relations
\begin{eqnarray}
\nonumber
    \|\varphi\|_1
    & := &
    \int_{\mR^d}
    |\varphi(w)|
    \rd w\\
\nonumber
    & = &
    \int_{\mR^d}
    \re^{-\frac{1}{2}\theta |w|^2}
    |\varphi(w)|
    \re^{\frac{1}{2}\theta |w|^2}
    \rd w \\
\label{sn1}
    & \< &
    \sn 1\sn_{-\theta}
    \sn \varphi\sn_{\theta},
\end{eqnarray}
with
$$
    \sn 1\sn_{-\theta}
    =
    \sqrt{
    \int_{\mR^d}
    \re^{-\theta |w|^2}
    \rd w}
    =
    \Big(
        \frac{\pi}{\theta}
    \Big)^{d/4},
$$
which follow from (\ref{sn}) and the Cauchy-Bunyakovsky-Schwarz inequality. Note that the limiting case  $\sn\cdot\sn_0$ is the standard $L^2$-norm $\|\cdot\|_2$.

Now, the relations (\ref{mutilde}) and (\ref{Ptilde}) in the proof of Theorem~\ref{th:muP}  show that the functions (\ref{F'})--(\ref{G''}) specify the Frechet derivatives of the first-order corrections $\wt{\mu}$ and $\wt{P}$ for the mean vector and the covariance matrix with respect to $\Psi$ and $\Ups$.
In particular, the function $w\mapsto i\d_u F(u,w)\big|_{u=0}$, computed in (\ref{F'}), describes  the Frechet derivative  $\d_{\Psi} \wt{\mu}$ with respect to square integrable strength functions $\Psi$ of the Weyl variations (\ref{h0tilde}) in the system Hamiltonian. Similarly to (\ref{sn1}), the sensitivity of the mean vector of the perturbed invariant quantum state to $\Psi$ with the norm $\sn \Psi\sn_{\theta}$ can be quantified by
\begin{eqnarray}
\nonumber
    \sup_{\sn \Psi\sn_{\theta}\< 1,\ \Ups=0} |\wt{\mu}|
    & \< &
    \sn\d_{\Psi} \wt{\mu}\sn_{-\theta}\\
\nonumber
    & = &
    \sqrt{
    \int_{\mR^d}
    |\d_u F(u,w)|_{u=0}|^2
    \re^{-\theta |w|^2}
    \rd w}\\
\nonumber
    & = &
    2
    \sqrt{
    \int_{\mR^d}
        \re^{
        -\|S^{\rT}w\|_P^2
    }
    |
    A^{-1}
    \Theta S^{\rT}w|^2
    \re^{-\theta |w|^2}
    \rd w}\\
\nonumber
    & = &
    \frac{2\pi^{d/4}}{\sqrt[4]{\det(\theta I_d + SPS^{\rT})}}
    \sqrt{
    \Bra
     -S\Theta A^{-\rT}A^{-1} \Theta S^{\rT},
    \int_{\mR^d}
    f(w)
    ww^{\rT}
    \rd w
    \Ket}\\
\label{dmudPsinorm}
    & = &
    \frac{\sqrt{2}\pi^{d/4}}{\sqrt[4]{\det(\theta I_d + SPS^{\rT})}}
    \sqrt{
    \Bra
     -S\Theta A^{-\rT}A^{-1} \Theta S^{\rT},
     (\theta I_d + SPS^{\rT})^{-1}
    \Ket}.
\end{eqnarray}
Here, use is made of an auxiliary zero-mean Gaussian PDF $f$ with the covariance matrix $\frac{1}{2}(\theta I_d + SPS^{\rT})^{-1}$ given by
\begin{equation}
\label{auxgauss}
    f(w)
    :=
    \pi^{-d/2}\sqrt{\det(\theta I_d + SPS^{\rT})}\re^{-\|w\|_{\theta I_d + SPS^{\rT}}^2},
    \qquad
    w\in \mR^d.
\end{equation}
Since the matrix $S$ is of full row rank, the controllability of the matrix pair $(A,B)$ in (\ref{A}) and (\ref{B}) (whereby $P$ in (\ref{Siginv}) satisfies $P\succ 0$) ensures that $SPS^{\rT}\succ 0$. In this case, the right-hand side of (\ref{dmudPsinorm}) has a limit as $\theta \to 0$, which corresponds to the standard $L^2$-norm $\|\d_{\Psi} \wt{\mu}\|_2$.
In view of (\ref{F''})--(\ref{G''}), the norms of the other Frechet derivatives $\d_{\Psi} \wt{P}$, $\d_{\Ups} \wt{\mu}$, $\d_{\Ups} \wt{P}$  are related in a similar fashion to the higher-order moments for the Gaussian PDF (\ref{auxgauss}), which can be computed in closed form with the aid of the Isserlis theorem \cite{I_1918,J_1997}.
The details of these calculations are cumbersome and are omitted for brevity (see also \cite{V_2018}).

Beyond the first two moments of the perturbed invariant quantum state discussed above, its sensitivity to the Weyl variations satisfying (\ref{PsiUpsgood}) can be quantified for the QCF correction $\wt{\Phi}_*$ by the corresponding $L^2$-induced  norm of the integral operator $(\Psi,\Ups)\mapsto \wt{\Phi}_*$ in (\ref{QCFvar}) of Theorem~\ref{th:QCFvar}. Since this norm is hard to compute, we will consider an upper  bound
\begin{eqnarray}
\nonumber
    \|\wt{\Phi}_*\|_2
     & := &
    \sqrt{\int_{\mR^n}|\wt{\Phi}_*(u)|^2 \rd u}\\
\label{HS}
      & \< &
    \sqrt{\sn F\sn_{-\theta}^2 + \sn G\sn_{-\theta}^2}
    \sqrt{\sn\Psi\sn_{\theta}^2 + \sn\Ups\sn_{\theta}^2},
\end{eqnarray}
which uses a weighted Hilbert-Schmidt norm \cite{DS_1963}  of this operator  involving the weighted $L^2$-norms of the influence functions $F$ and $G$ in (\ref{F0}) and (\ref{G0}):
\begin{eqnarray}
\label{Fnorm}
    \sn F\sn_{-\theta}
    & = &
    \sqrt{\int_{\mR^n \x \mR^d} |F(u,w)|^2 \re^{-\theta |w|^2}\rd u\rd w},\\
\label{Gnorm}
    \sn G\sn_{-\theta}
    & = &
    \sqrt{\int_{\mR^n \x \mR^d} |G(u,w)|^2 \re^{-\theta |w|^2}\rd u\rd w}.
\end{eqnarray}
Here, with a slight abuse of notation, the weighting is over the second arguments of the functions $F$ and $G$.
Up to a factor of $(2\pi)^{-n/2}$, the right-hand side of (\ref{HS}) is also an upper bound for the norm $\|\wt{\mho}_*\|_2 = (2\pi)^{-n/2}\|\wt{\Phi}_*\|_2$ of the first-order correction (\ref{mho'inf}) of the invariant QPDF (due to the Plancherel identity applied to $\wt{\mho}_*$ as the Fourier transform of $\wt{\Phi}_*$).

Although the evaluation of the integrals (\ref{F0}) and (\ref{G0}), required for (\ref{Fnorm}) and (\ref{Gnorm}), can be carried out numerically, we will consider upper bounds which give a qualitative insight into the influence functions  $F$ and $G$.
To this end (similarly to  \cite[Theorem 6]{VPJ_2017b}),  let $\lambda$ and $\Gamma$ be a positive scalar and a real positive definite symmetric matrix of order $n$  satisfying the algebraic Lyapunov inequality 
\begin{equation}
\label{ALI}
    A\Gamma + \Gamma A^{\rT} + 2\lambda \Gamma \preccurlyeq 0.
\end{equation}
Such pairs $(\lambda,\Gamma)$ exist since the matrix $A$ is Hurwitz. Moreover,
(\ref{ALI}) has a positive definite solution $\Gamma$ for any $\lambda<-\max_{1\< k\< n}\Re \lambda_k$, where $\lambda_1, \ldots, \lambda_n$ are the eigenvalues  of $A$. In a generic case when $A$ is diagonalizable,  a particular choice of such a matrix $\Gamma$ can be carried out by using the eigenbasis of $A$.

\begin{lem}
\label{lem:FGupp}
Suppose the unperturbed OQHO has a controllable matrix pair $(A,B)$ in (\ref{A}) and (\ref{B}), with $A$ Hurwitz, and its Hamiltonian and coupling operators are subject to the Weyl variations (\ref{h0eps})--(\ref{htilde}). Then the influence functions $F$ and $G$ in (\ref{F0}) and (\ref{G0}) admit the upper bounds
\begin{eqnarray}
\label{Fupp0}
  |F(u,w)|
  & \< &
  \frac{2\tau}{\lambda}
      \re^{
        -\frac{1}{2} (\|u\|_P^2+\|S^{\rT}w\|_P^2)
    }
      \Big(\re^{
                \|u\|_{\Gamma}\|P S^{\rT}w\|_{\Gamma^{-1}}
    }-1
    \Big),\\
\label{Gupp0}
      |G(u,w)| & \< &
    \frac{2}{\lambda}
          \re^{
        -\frac{1}{2} (\|u\|_P^2+\|S^{\rT}w\|_P^2)
    }
    \Big(
        \tau
        \big(
            \|D \Gamma^{-1/2}\|\|u\|_{\Gamma}
            + |D S^{\rT} w|
        \big)
        +
        2
        \frac{\|M\Theta \Gamma^{-1/2}\|}{\|P S^{\rT}w\|_{\Gamma^{-1}}}
    \Big)
    \Big(
    \re^{
                \|u\|_{\Gamma}\|P S^{\rT}w\|_{\Gamma^{-1}}
    }
    -1
    \Big)
\end{eqnarray}
for all $u \in \mR^n$ and $w\in \mR^d$. Here, $(\lambda,\Gamma)$ is any pair of a positive scalar $\lambda$ and a real positive definite symmetric matrix $\Gamma$ of order $n$  satisfying (\ref{ALI}), and
\begin{equation}
\label{tau}
  \tau :=
  \sqrt{
  \br
  \big(
    S\Theta \Gamma^{-1} \Theta S^{\rT}
    (SP\Gamma^{-1} PS^{\rT})^{-1}
  \big)},
\end{equation}
where $\br(\cdot)$ denotes the spectral radius of a square matrix.
\hfill$\square$
\end{lem}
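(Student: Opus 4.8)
The plan is to pull the common Gaussian factor $\re^{-\frac{1}{2}(\|u\|_P^2+\|S^{\rT}w\|_P^2)}$ out of the representations (\ref{F0}) and (\ref{G0}) and then bound the remaining time integrals. Two ingredients do the work: the exponential contractivity of $\re^{tA}$ furnished by the Lyapunov inequality (\ref{ALI}), and a comparison of the quadratic forms generated by $\Theta$ and $P$ encoded in the constant $\tau$ of (\ref{tau}). Concretely, I would first establish the estimates
$$
    \|\re^{tA^{\rT}}u\|_{\Gamma}\<\re^{-\lambda t}\|u\|_{\Gamma},
    \qquad
    \|\Theta S^{\rT}w\|_{\Gamma^{-1}}\<\tau\|PS^{\rT}w\|_{\Gamma^{-1}}
$$
valid for all $t\>0$, $u\in\mR^n$ and $w\in\mR^d$, and then feed them into the integrands.

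For the first estimate I would differentiate $\psi(t):=\|\re^{tA^{\rT}}u\|_{\Gamma}^2 = u^{\rT}\re^{tA}\Gamma\re^{tA^{\rT}}u$ and use $A\re^{tA}=\re^{tA}A$ to obtain $\dot\psi(t)=u^{\rT}\re^{tA}(A\Gamma+\Gamma A^{\rT})\re^{tA^{\rT}}u\<-2\lambda\psi(t)$ by (\ref{ALI}), so that integrating the differential inequality gives $\psi(t)\<\re^{-2\lambda t}\psi(0)$. For the second, I would read off from (\ref{tau}) that $\tau^2$ is the largest-magnitude generalized eigenvalue of the pencil $(S\Theta\Gamma^{-1}\Theta S^{\rT},\,SP\Gamma^{-1}PS^{\rT})$; the second matrix is positive definite since $P\succ0$, $\Gamma^{-1}\succ0$ and $S$ has full row rank, while the antisymmetry $\Theta^{\rT}=-\Theta$ turns $\|\Theta S^{\rT}w\|_{\Gamma^{-1}}^2=w^{\rT}S\Theta^{\rT}\Gamma^{-1}\Theta S^{\rT}w$ into $-w^{\rT}S\Theta\Gamma^{-1}\Theta S^{\rT}w$, so that sign-invariance of the spectral radius reconciles the comparison with the expression in (\ref{tau}).

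The core estimate for $F$ would then set $a(t):=u^{\rT}\re^{tA}PS^{\rT}w$ and $b(t):=u^{\rT}\re^{tA}\Theta S^{\rT}w$, so the integrand in (\ref{F0}) is $\Im\re^{-a(t)-ib(t)}=-\re^{-a(t)}\sin b(t)$. Writing $c:=\|u\|_{\Gamma}\|PS^{\rT}w\|_{\Gamma^{-1}}$, the inequality $|x^{\rT}y|\<\|x\|_{\Gamma}\|y\|_{\Gamma^{-1}}$ combined with the two estimates above gives $|a(t)|\<c\,\re^{-\lambda t}$ and $|\sin b(t)|\<|b(t)|\<\tau c\,\re^{-\lambda t}$, while $\re^{-a(t)}\<\re^{c\re^{-\lambda t}}$. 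The decisive manoeuvre is the substitution $s:=\re^{-\lambda t}$, under which
$$
    \int_0^{+\infty}\re^{-\lambda t}\re^{c\re^{-\lambda t}}\rd t
    =
    \frac{1}{\lambda}\int_0^1\re^{cs}\rd s
    =
    \frac{1}{\lambda c}\big(\re^{c}-1\big),
$$
producing the common factor $\re^{c}-1$; reinstating the prefactor yields (\ref{Fupp0}). The bound (\ref{Gupp0}) follows the same route on the vector integrand of (\ref{G0}): its first summand is majorized by $|\sin b(t)|\,|D(\re^{tA^{\rT}}u+S^{\rT}w)|\<\tau c\,\re^{-\lambda t}\big(\|D\Gamma^{-1/2}\|\|u\|_{\Gamma}+|DS^{\rT}w|\big)$, using $|D\re^{tA^{\rT}}u|\<\|D\Gamma^{-1/2}\|\|\re^{tA^{\rT}}u\|_{\Gamma}$ and $\re^{-\lambda t}\<1$, while for the second summand the matrix $K(\re^{tA^{\rT}}u,S^{\rT}w)=\sin b(t)\,I_m+\cos b(t)\,J$ from (\ref{K}) is bounded in operator norm (indeed $\|K\|\<|\sin b(t)|+|\cos b(t)|\<2$), so that $|K(\re^{tA^{\rT}}u,S^{\rT}w)M\Theta\re^{tA^{\rT}}u|\<2\|M\Theta\Gamma^{-1/2}\|\|u\|_{\Gamma}\re^{-\lambda t}$. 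The same substitution and the relation $\|u\|_{\Gamma}/c=1/\|PS^{\rT}w\|_{\Gamma^{-1}}$ then assemble the two contributions into (\ref{Gupp0}).

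The step I expect to be the main obstacle is controlling the ``double exponential'' $\re^{-a(t)}$, which is a priori unbounded: the bound $\re^{-a(t)}\<\re^{c\re^{-\lambda t}}$ is useful only because $a(t)$ decays at the geometric rate $\lambda$, and it is precisely this decay — guaranteed by (\ref{ALI}) — that makes the change of variables $s=\re^{-\lambda t}$ collapse each integral to $\frac{1}{\lambda}\int_0^1\re^{cs}\rd s$. The other point requiring care is extracting the constant $\tau$ cleanly through the comparison $\|\Theta S^{\rT}w\|_{\Gamma^{-1}}\<\tau\|PS^{\rT}w\|_{\Gamma^{-1}}$, where the sign issue in the generalized eigenvalue problem must be handled via the spectral radius in (\ref{tau}).
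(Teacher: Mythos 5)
Your proposal is correct and follows essentially the same route as the paper's proof: the Lyapunov inequality (\ref{ALI}) yields the contraction $\|\re^{tA^{\rT}}u\|_{\Gamma}\leqslant \re^{-\lambda t}\|u\|_{\Gamma}$ (the paper states it as $\|\Gamma^{-1/2}\re^{tA}\sqrt{\Gamma}\|\leqslant \re^{-\lambda t}$), the constant $\tau$ enters through the same comparison $\|\Theta S^{\rT}w\|_{\Gamma^{-1}}\leqslant \tau\|PS^{\rT}w\|_{\Gamma^{-1}}$, and the double exponential is collapsed by the same change of variable (the paper uses $z=\|u\|_{\Gamma}\re^{-\lambda t}$ instead of $s=\re^{-\lambda t}$, a cosmetic difference). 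Your handling of the sign via antisymmetry of $\Theta$ and the spectral radius, and of the $\|K\|\leqslant 2$ bound, matches the paper's as well.
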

\begin{proof}
For any pair $(\lambda,\Gamma)$, satisfying  (\ref{ALI}), application of the Gronwall-Bellman lemma argument in the form of the differential matrix inequality
$
    \big(\re^{2\lambda t}\re^{t A} \Gamma \re^{tA^{\rT}}\big)^{^\centerdot}
    =
    \re^{2\lambda t}
    \re^{t A} (A\Gamma + \Gamma A^{\rT} + 2\lambda\Gamma) \re^{t A^{\rT}}
    \preccurlyeq
    0
$
leads to $    \re^{t A} \Gamma \re^{t A^{\rT}}
    \preccurlyeq
    \re^{-2\lambda t}\Gamma
$ for all $t\>0$, which is equivalent to the contraction
\begin{equation}
\label{contr}
    \|\Gamma^{-1/2} \re^{t A} \sqrt{\Gamma}\|
    \<
    \re^{-\lambda t},
    \qquad
    t\>0,
\end{equation}
in terms of the matrix operator norm.
By combining (\ref{F0}) with submultiplicativity of the operator norm and using (\ref{contr}), it follows that
\begin{eqnarray}
\nonumber
  |F(u,w)|
  & \< &
  2
      \re^{
        -\frac{1}{2} (\|u\|_P^2+\|S^{\rT}w\|_P^2)
    }
  \int_0^{+\infty}
  |\sin(u^{\rT}\re^{tA} \Theta S^{\rT} w)|
      \re^{
        -
        u^{\rT}\re^{tA} P S^{\rT}w
    }
  \rd t\\
\nonumber
  & \< &
  2
      \re^{
        -\frac{1}{2} (\|u\|_P^2+\|S^{\rT}w\|_P^2)
    }
  \int_0^{+\infty}
  |u^{\rT}\sqrt{\Gamma}\Gamma^{-1/2}\re^{tA} \sqrt{\Gamma} \Gamma^{-1/2}\Theta S^{\rT} w|
      \re^{
                |u^{\rT}\sqrt{\Gamma}\Gamma^{-1/2}\re^{tA} \sqrt{\Gamma} \Gamma^{-1/2}P S^{\rT}w|
    }
  \rd t\\
\nonumber
  & \< &
  2
      \re^{
        -\frac{1}{2} (\|u\|_P^2+\|S^{\rT}w\|_P^2)
    }
      \|u\|_{\Gamma}\|\Theta S^{\rT} w\|_{\Gamma^{-1}}
  \int_0^{+\infty}
  \|\Gamma^{-1/2}\re^{tA} \sqrt{\Gamma}\|
      \re^{
                \|u\|_{\Gamma} \|\Gamma^{-1/2}\re^{tA} \sqrt{\Gamma}\| \|P S^{\rT}w\|_{\Gamma^{-1}}
    }
  \rd t\\
\nonumber
  & \< &
  2
      \re^{
        -\frac{1}{2} (\|u\|_P^2+\|S^{\rT}w\|_P^2)
    }
      \|u\|_{\Gamma}\|\Theta S^{\rT} w\|_{\Gamma^{-1}}
  \int_0^{+\infty}
  \re^{-\lambda t}
      \re^{
                \|u\|_{\Gamma} \|P S^{\rT}w\|_{\Gamma^{-1}} \re^{-\lambda t}
    }
  \rd t\\
\nonumber
  & = &
  \frac{2}{\lambda}
      \re^{
        -\frac{1}{2} (\|u\|_P^2+\|S^{\rT}w\|_P^2)
    }
      \|\Theta S^{\rT} w\|_{\Gamma^{-1}}
  \int_0^{\|u\|_{\Gamma}}
      \re^{
                \|P S^{\rT}w\|_{\Gamma^{-1}} z
    }
  \rd z\\
\nonumber
  & = &
  \frac{2}{\lambda}
      \re^{
        -\frac{1}{2} (\|u\|_P^2+\|S^{\rT}w\|_P^2)
    }
      \frac{\|\Theta S^{\rT} w\|_{\Gamma^{-1}}}{\|P S^{\rT}w\|_{\Gamma^{-1}}}
      \Big(\re^{
                \|u\|_{\Gamma}\|P S^{\rT}w\|_{\Gamma^{-1}}
    }-1
    \Big)\\
\label{Fupp}
  & \< &
  \frac{2\tau}{\lambda}
      \re^{
        -\frac{1}{2} (\|u\|_P^2+\|S^{\rT}w\|_P^2)
    }
      \Big(\re^{
                \|u\|_{\Gamma}\|P S^{\rT}w\|_{\Gamma^{-1}}
    }-1
    \Big),
\end{eqnarray}
for all $u \in \mR^n$ and $w\in \mR^d$, thus establishing (\ref{Fupp0}), where the quantity
$$
    \tau
    :=
    \max_{w\in \mR^d\setminus \{0\}}
    \frac{\|\Theta S^{\rT} w\|_{\Gamma^{-1}}}{\|P S^{\rT}w\|_{\Gamma^{-1}}}
    =
    \max_{w\in \mR^d\setminus \{0\}}
    \frac{\|\Theta S^{\rT} w\|_{\Gamma^{-1}}}{\big|\sqrt{SP\Gamma^{-1} PS^{\rT}}w\big|}
$$
is given by (\ref{tau}). Here, the positive definiteness of the matrix  $SP\Gamma^{-1} PS^{\rT}$ is ensured by $S$ being of full row rank and by $P\succ 0$ (due to controllability of the pair $(A,B)$) together with $\Gamma\succ 0$.
Note that the double exponential function in the second last integral of (\ref{Fupp}) comes from (\ref{contr}) and is circumvented by changing the integration variable.
By a similar reasoning, a combination of (\ref{G0}) with (\ref{contr}) leads to the upper bound
\begin{eqnarray}
\nonumber
  |G(u,w)|
  &\< &
    2
          \re^{
        -\frac{1}{2} (\|u\|_P^2+\|S^{\rT}w\|_P^2)
    }
    \int_0^{+\infty}
    \big(
        |\sin(u^{\rT}\re^{tA}\Theta S^{\rT}w)| |D (\re^{tA^{\rT}}u+S^{\rT}w)|
        +
        |K(\re^{tA^{\rT}}u,S^{\rT}w)M\Theta \re^{tA^{\rT}}u|
    \big)
        \re^{
        -
        u^{\rT}\re^{tA} P S^{\rT}w
    }
    \rd t\\
\nonumber
   & \< &
    2
          \re^{
        -\frac{1}{2} (\|u\|_P^2+\|S^{\rT}w\|_P^2)
    }
    \|u\|_{\Gamma}
    \int_0^{+\infty}
    \big(
        \|\Theta S^{\rT}w\|_{\Gamma^{-1}}
        \big(
            \|D \Gamma^{-1/2}\|\|u\|_{\Gamma}\re^{-\lambda t}
            + |D S^{\rT} w|
        \big)
        +
        2
        \|M\Theta \Gamma^{-1/2}\|
    \big)\\
\nonumber
    & &\x
    \re^{-\lambda t}
    \re^{
                \|u\|_{\Gamma} \|P S^{\rT}w\|_{\Gamma^{-1}} \re^{-\lambda t}
    }\rd t\\
\nonumber
    &= &
    \frac{2}{\lambda}
          \re^{
        -\frac{1}{2} (\|u\|_P^2+\|S^{\rT}w\|_P^2)
    }
    \int_0^{\|u\|_{\Gamma}}
    \big(
        \|\Theta S^{\rT}w\|_{\Gamma^{-1}}
        \big(
            \|D \Gamma^{-1/2}\|z
            + |D S^{\rT} w|
        \big)
        +
        2
        \|M\Theta \Gamma^{-1/2}\|
    \big)
    \re^{
                \|P S^{\rT}w\|_{\Gamma^{-1}} z
    }\rd z\\
\label{Gupp}
    &\< &
    \frac{2}{\lambda}
          \re^{
        -\frac{1}{2} (\|u\|_P^2+\|S^{\rT}w\|_P^2)
    }
    \Big(
        \tau
        \big(
            \|D \Gamma^{-1/2}\|\|u\|_{\Gamma}
            + |D S^{\rT} w|
        \big)
        +
        2
        \frac{\|M\Theta \Gamma^{-1/2}\|}{\|P S^{\rT}w\|_{\Gamma^{-1}}}
    \Big)
    \Big(
    \re^{
                \|u\|_{\Gamma}\|P S^{\rT}w\|_{\Gamma^{-1}}
    }
    -1
    \Big),
\end{eqnarray}
which proves (\ref{Gupp0}). In (\ref{Gupp}), we have used the property $\|J\| = 1$ of the matrix $J$ from (\ref{J}), whereby the function $K$ in (\ref{K}) satisfies $\|K(u,v)\| \< 2$
for all $u,v\in \mR^n$.
\end{proof}

Since the upper bounds in (\ref{Fupp0}) and (\ref{Gupp0}) involve the quadratic-exponential factors which are proportional to Gaussian PDFs, they lead to upper bounds for the weighted $L^2$-norms of the influence functions  $F$ and $G$ in (\ref{Fnorm}) and (\ref{Gnorm}) which can be evaluated numerically by Monte-Carlo methods. More precisely, a combination of (\ref{Fnorm}) with (\ref{Fupp0}) leads to
\begin{eqnarray}
\nonumber
    \sn F\sn_{-\theta}^2
    & \< &
    \Big(\frac{2\tau}{\lambda}\Big)^2
    \int_{\mR^n \x \mR^d}
      \re^{
        -\|u\|_P^2-\|w\|_{\theta I_d + SPS^{\rT}}^2
    }
      \Big(\re^{
                \|u\|_{\Gamma}\|P S^{\rT}w\|_{\Gamma^{-1}}
    }-1
    \Big)^2 \rd u\rd w\\
\label{Fnormupp}
    & = &
    \Big(\frac{2\tau}{\lambda}\Big)^2
    \frac{\pi^{(n+d)/2}}{\sqrt{\det P \det(\theta I_d + SPS^{\rT})}}
    \bE
    \Big(
    \big(\re^{
                \|\nu\|_{\Gamma}\|P S^{\rT}\omega\|_{\Gamma^{-1}}
    }-1
    \big)^2\Big).
\end{eqnarray}
Here, the expectation is over
independent zero-mean Gaussian random vectors $\nu$ and $\omega$  with values in $\mR^n$ and $\mR^d$ and   covariance matrices $\frac{1}{2}P^{-1}$ and $\frac{1}{2}(\theta I_d + SPS^{\rT})^{-1}$, respectively (so that $\omega$ has the PDF $f$ given by (\ref{auxgauss})). By a similar reasoning, (\ref{Gnorm}) and (\ref{Gupp0}) imply that
\begin{eqnarray}
\nonumber
    \sn G\sn_{-\theta}^2
    &\< &
    \Big(\frac{2}{\lambda}\Big)^2
    \frac{\pi^{(n+d)/2}}{\sqrt{\det P \det(\theta I_d + SPS^{\rT})}}\\
\label{Gnormupp}
    && \x \bE
    \Big(
    \Big(
    \Big(
        \tau
        \big(
            \|D \Gamma^{-1/2}\|\|\nu\|_{\Gamma}
            + |D S^{\rT} \omega|
        \big)
        +
        2
        \frac{\|M\Theta \Gamma^{-1/2}\|}{\|P S^{\rT}\omega\|_{\Gamma^{-1}}}
    \Big)
    \Big(
    \re^{
                \|u\|_{\Gamma}\|P S^{\rT}\omega\|_{\Gamma^{-1}}
    }
    -1
    \Big)
    \Big)^2\Big).
\end{eqnarray}
Therefore, the right-hand sides of (\ref{Fnormupp}) and (\ref{Gnormupp}) can be estimated via the Monte-Carlo method by sampling from the classical Gaussian distributions.

%
\section{NUMERICAL EXAMPLES OF INVARIANT STATE PERTURBATION ANALYSIS}\label{sec:numer}

As an illustration  of the above results, we will now provide two numerical examples of infinitesimal perturbation analysis of invariant quantum states under the Weyl variations of the energy operators. First, consider a one-mode OQHO with $n=2$ system variables consisting of the position $q$ and momentum $p$ operators (see Example~1) driven by $m=2$ external fields.
The energy and coupling matrices $R$ and $M$  of the OQHO in (\ref{hhOQHO1}) and (\ref{hhOQHO2}) were randomly generated so as to make the matrix $A$  in (\ref{A}) Hurwitz:
\begin{equation}
\label{RMex1}
    R
     =
    {\small
    \left[\begin{array}{cc}
    1.5803  &       0\\
         0  &  0.7490
    \end{array}
    \right]},
    \qquad
    M
    =
    {\small
    \left[
    \begin{array}{cc}
   -0.1765  & -1.3320\\
    0.7914  & -2.3299
    \end{array}
    \right]}
\end{equation}
(the stiffness and mass parameters in (\ref{Rqp}) are $\sK = 1.5803$ and $\sM = 1.3351$).
The resulting state-space matrices in (\ref{A})--(\ref{C}) are
$$
    A
    =
    {\small
    \left[
    \begin{array}{cc}
   -1.4654  &  0.7490\\
   -1.5803  & -1.4654
    \end{array}
    \right]},
    \qquad
B =
    {\small
    \left[
    \begin{array}{cc}
   -1.3320 &  -2.3299\\
    0.1765 &  -0.7914
    \end{array}
    \right]},
    \qquad
C =
    {\small
    \left[
    \begin{array}{cc}
    1.5828 &  -4.6598\\
    0.3530 &   2.6640
    \end{array}
    \right]}
$$
(the eigenvalues of $A$ are $-1.4654 \pm 1.0880i$). The real part of the quantum covariance matrix of the invariant Gaussian state of the OQHO is
$$
P =
{\small
\left[
\begin{array}{cc}
    2.2207 &  -0.4635\\
   -0.4635 &   0.7241
\end{array}
\right]}.
$$
Following Example~4 (with the coupling operators and the kinetic energy part of the Hamiltonian remaining unperturbed), consider the QCF correction in response to a perturbation of the nominal quadratic potential $\frac{1}{2}\sK q^2$ in the form of a Gaussian-shaped well potential $\phi$ in (\ref{phiMorse}) with parameters
\begin{equation}
\label{well}
    \alpha = -146.0546,
    \qquad
    \gamma = 3.1641,
    \qquad
    \Lambda = 0.1589.
\end{equation}
The stiffness of this well (at its centre $\gamma$) is $-\alpha\Lambda^{-1} = 919.0101$. The perturbed potential $\frac{1}{2}\sK q^2 + \eps \phi(q)$  (with $\eps = 0.03$) is shown in Fig.~\ref{fig:pot}.
\begin{figure}[thpb]
      \centering
      \includegraphics[width=8cm]{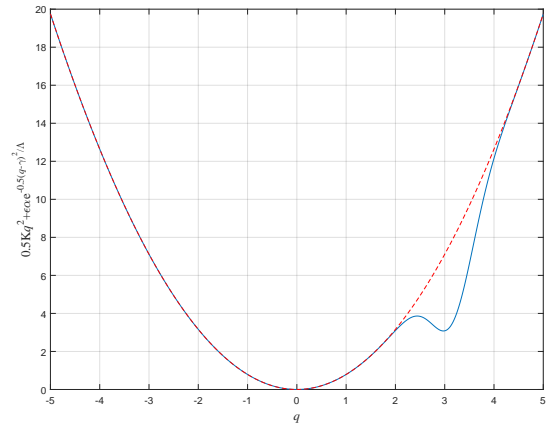}
      \caption{The perturbed potential $\frac{1}{2}\sK q^2 + \eps \alpha \re^{-\frac{1}{2}(q-\gamma)^2/\Lambda}$ (solid-line curve), consisting of the nominal quadratic part (dashed-line parabola) with parameters from (\ref{RMex1}) and a Gaussian-shaped well potential with parameters (\ref{well}) and $\eps = 0.03$.}
      \label{fig:pot}
   \end{figure}
The real and imaginary parts of the corresponding QCF correction $\wt{\Phi}_*$, evaluated by numerical integration of (\ref{QCFcorr}) using (\ref{Xi})--(\ref{sig+-}) with $\Xi = 2.3797$, are depicted in Fig.~\ref{fig:QCFcorr}.
\begin{figure}[thpb]
      \centering
      \includegraphics[width=8cm]{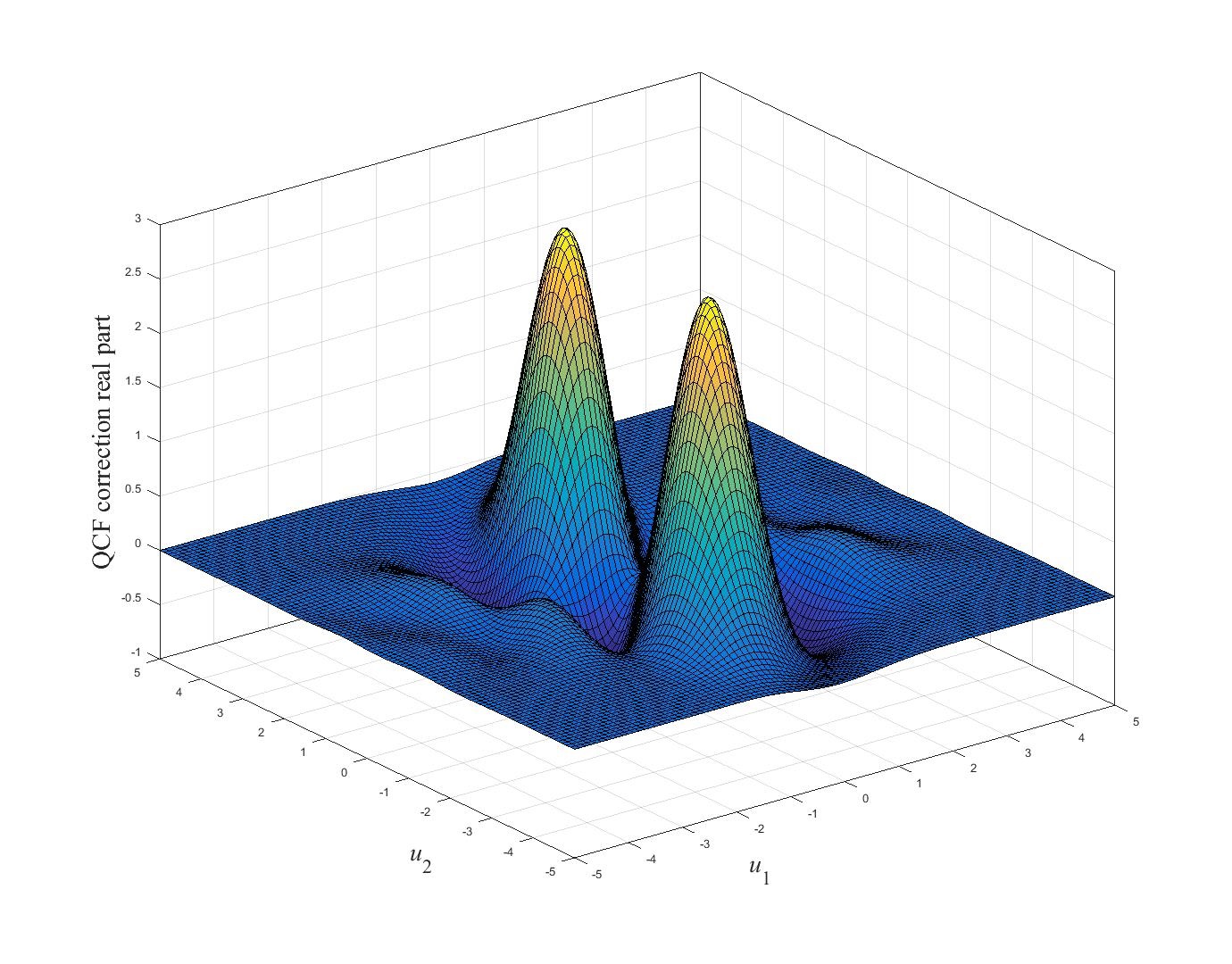}
      \includegraphics[width=8cm]{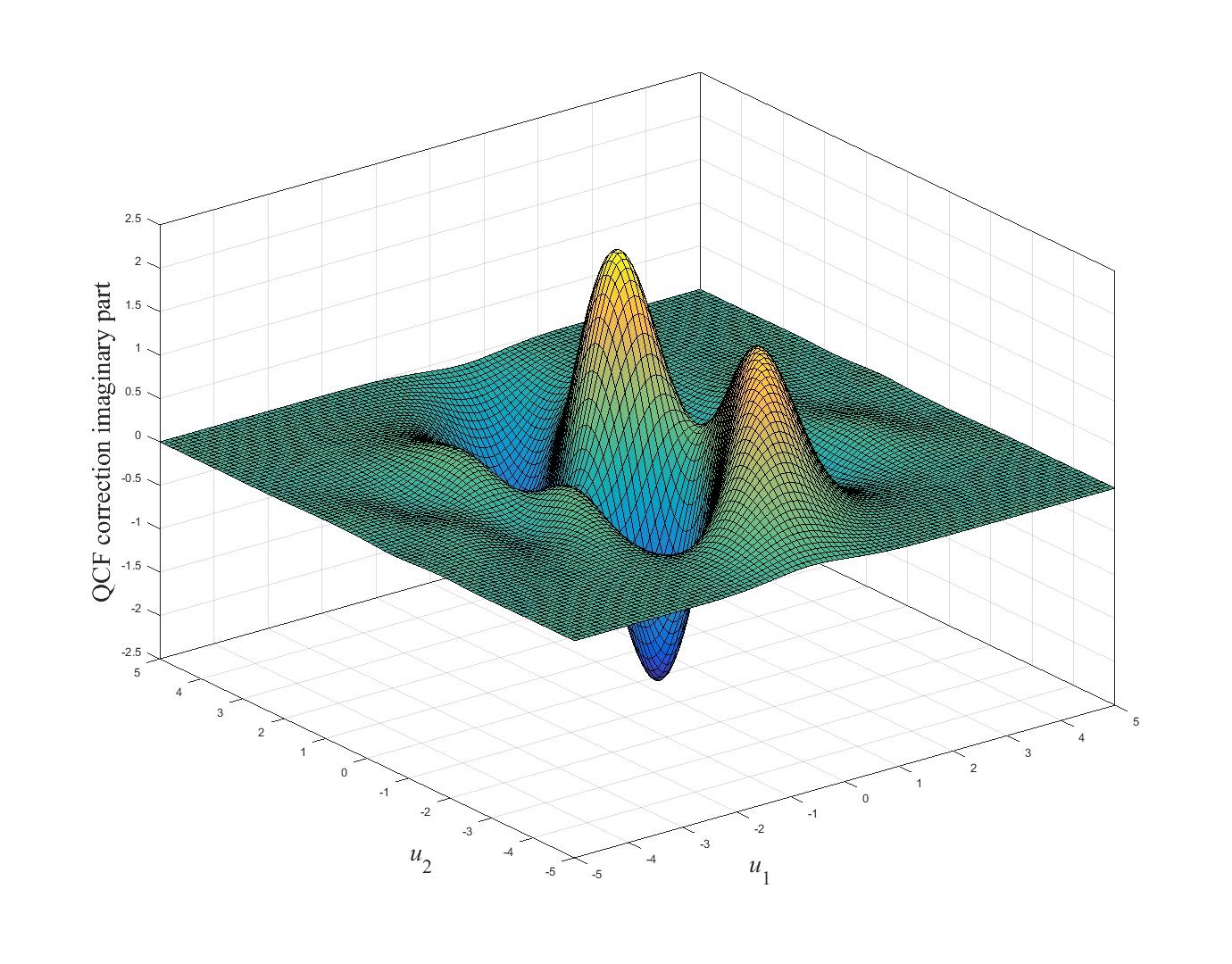}
      \caption{The real and imaginary parts of the first-order QCF correction $\wt{\Phi}_*$ in (\ref{QCFcorr}) for the perturbed one-mode oscillator. }
      \label{fig:QCFcorr}
   \end{figure}
The QPDF correction $\wt{\mho}_*$, obtained by the Fourier transform of $\wt{\Phi}_*$ in (\ref{mho'inf}), is shown in Fig.~\ref{fig:QPDFcorr}.
\begin{figure}[thpb]
      \centering
      \includegraphics[width=8cm]{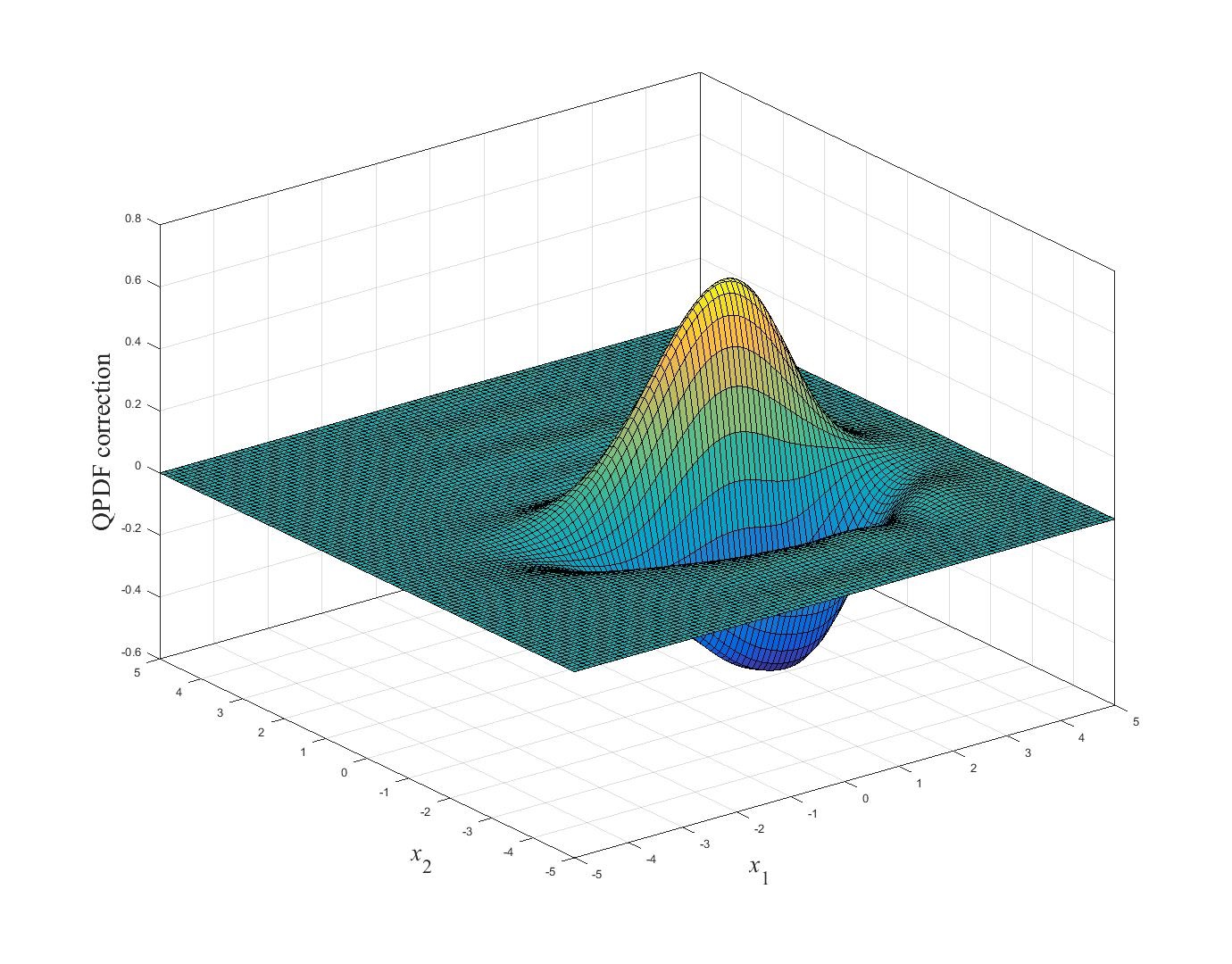}
      \caption{The first-order QPDF correction $\wt{\mho}_*$ in (\ref{mho'inf}) (as a function of the position $x_1$ and momentum $x_2$ coordinates) for the perturbed one-mode oscillator. }
      \label{fig:QPDFcorr}
   \end{figure}
The corresponding marginal QPDF correction $\int_{-\infty}^{+\infty}\wt{\mho}_*(x_1,x_2)\rd x_2$ for the position variable of the perturbed OQHO is depicted in Fig.~\ref{fig:QPDFcorr_pos}.
\begin{figure}[thpb]
      \centering
      \includegraphics[width=8cm]{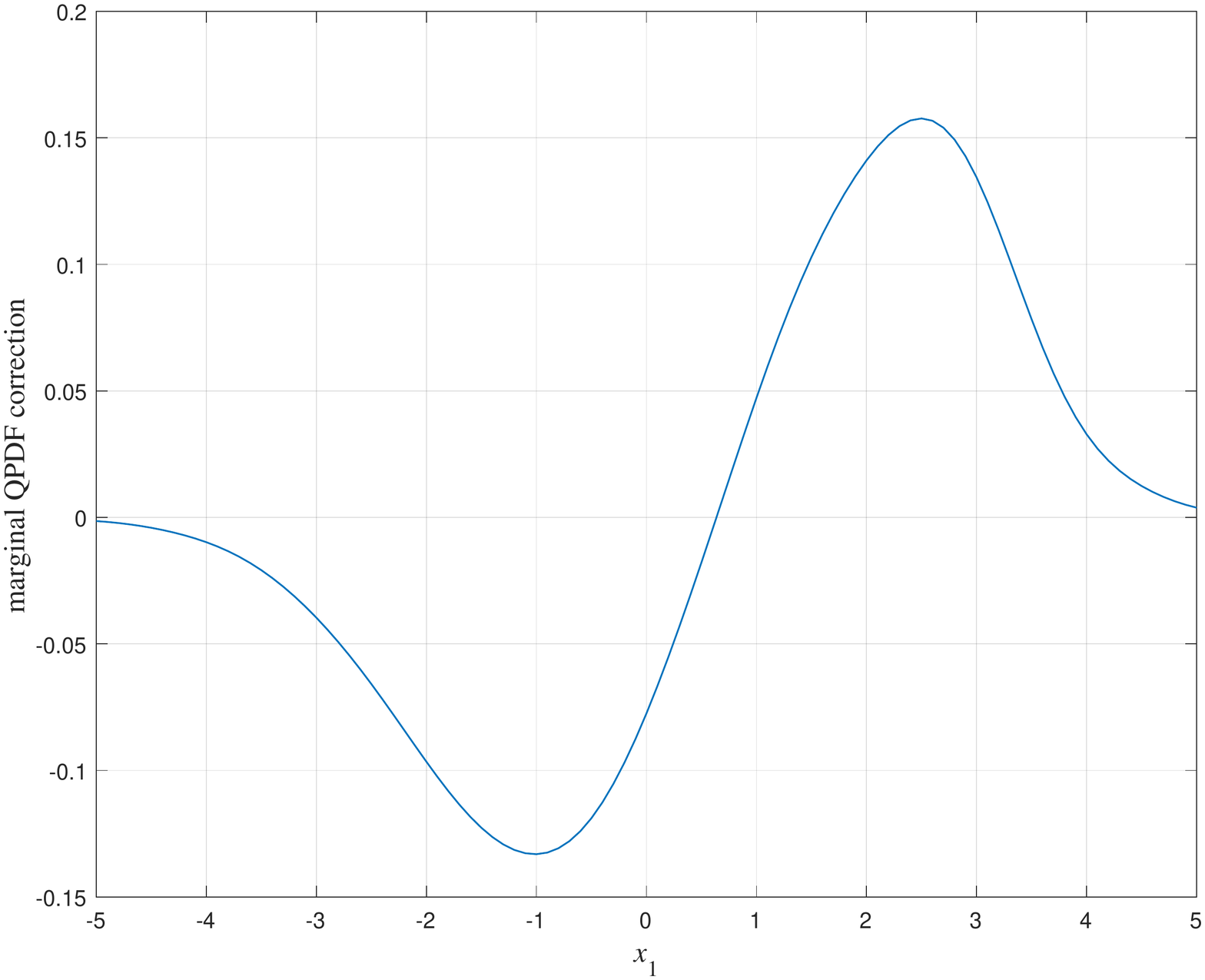}
      \caption{The first-order correction $\int_{-\infty}^{+\infty}\wt{\mho}_*(x_1,x_2)\rd x_2$ to the marginal QPDF of the position variable of the perturbed one-mode oscillator. }
      \label{fig:QPDFcorr_pos}
   \end{figure}
Although the perturbed marginal QPDF of the position variable is no longer Gaussian, it remains nonnegative everywhere on the real line (as does the classical probability distribution of any self-adjoint operator). The correction in Fig.~\ref{fig:QPDFcorr_pos} increases the value of the perturbed QPDF in the vicinity of the secondary potential well in Fig.~\ref{fig:pot} at $x_1\approx 3$ by ``taking'' some probability from negative values of the position coordinate $x_1$ (in order to satisfy the normalization constraint similar to (\ref{Phimhonorm})). Also, this correction contributes a positive shift  $\int_{-\infty}^{+\infty}x_1\wt{\mho}_*(x_1,x_2)\rd x_1 \rd x_2 = 1.3677$  to the mean value of the position from $0$ towards the well.

As another example, consider a two-mode OQHO with $n=4$ system variables consisting of the conjugate positions $q_1, q_2$ and momenta $p_1, p_2$  driven by $m=4$ fields.
The energy and coupling matrices of the OQHO (also randomly generated so as to make the matrix $A$  in (\ref{A}) Hurwitz) are:
\begin{equation}
\label{RMex2}
    R
     =
    {\small
    \left[
    \begin{array}{cccc}
    2.5542 &  -2.3651 &        0 &        0\\
   -2.3651 &   2.6995 &        0 &        0\\
         0 &        0 &   0.9306 &  -1.4504\\
         0 &        0 &  -1.4504 &   7.4900
    \end{array}
    \right]},
    \qquad
    M
    =
    {\small
    \left[
    \begin{array}{cccc}
    0.3021 &   1.1784 &   0.0313 &  -1.4647\\
    0.0131 &  -0.2981 &   1.5002 &   0.5361\\
   -0.0110 &  -0.0418 &  -1.1125 &   1.5380\\
   -0.7233 &  -1.0734 &   0.7212 &   0.1241
    \end{array}
    \right]}.
\end{equation}
The spectrum of the resulting matrix
$$
    A =
    {\small
    \left[
    \begin{array}{cccc}
   -0.7588 &  -0.0857 &   0.9306 &  -3.2322\\
   -0.8379 &  -2.2896 &   0.3314 &   7.4900\\
   -2.5542 &   2.5944 &  -0.7588 &  -0.8379\\
    2.1358 &  -2.6995 &  -0.0857 &  -2.2896
    \end{array}
    \right]}
$$
is $\{  -1.7804 \pm 5.1767i,   -1.2680 \pm 0.8111i\}$. Now, let the potential energy part of the system Hamiltonian be subject to the Weyl variations whose strength functions $\Psi$ satisfy (\ref{PsiUpsgood}). With the matrix $S$ given by (\ref{SId}) with $d=2$, the norm $\sn\d_{\Psi} \wt{\mu}\sn_{-\theta}$ in (\ref{dmudPsinorm}), which quantifies the sensitivity of the perturbed invariant mean vector, is shown in Fig.~\ref{fig:dmudPsi}.
\begin{figure}[thpb]
      \centering
      \includegraphics[width=8cm]{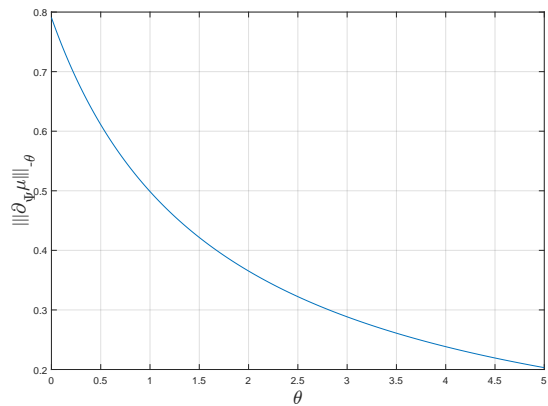}
      \caption{The sensitivity of the mean vector of the system variables for the perturbed invariant state with respect to the Weyl variations in the potential energy in terms of the norm $\sn\d_{\Psi} \wt{\mu}\sn_{-\theta}$ in (\ref{dmudPsinorm}) for the two-mode OQHO with parameters (\ref{RMex2}). }
      \label{fig:dmudPsi}
   \end{figure}

\section{Conclusion}\label{sec:conc}

For a class of open quantum harmonic oscillators, whose linear-quadratic coupling and Hamiltonian operators are subject to the Weyl variations leading to nonlinearities in the governing QSDE and non-Gaussian state dynamics, we have obtained the first-order correction terms for the QCF and QPDF of the invariant state. We have also discussed the norms of the linear operators which relate these corrections to the Weyl variations. These infinitesimal perturbation analysis results can find applications to the Gaussian state generation and approximate invariant state computation taking into account inaccuracies and uncertainties in the implementation of such systems.  Another possible direction of the research, reported in this paper,  is an extension of these ideas to more general states of the external fields (such as nonvacuum Gaussian states \cite{N_2014}).

\end{document}